\pdfoutput=1

\def\oakland{0}
\def\eurocrypt{0}
\def\makeappendix{1}

\ifnum\oakland=0
	\ifnum\eurocrypt=0
      \documentclass[letter]{article}
      
      \def\todo{0}
      \def\toc{1}
	\else
    	\documentclass[runningheads]{llncs}

		\def\todo{0}
		\def\toc{0}
    \fi
\else
	\documentclass[conference]{IEEEtran}

	\def\todo{0}
	\def\toc{0}
\fi

\usepackage{preamble}
\usepackage{multirow}

\newcommand{\bits}{\{0,1\}}
\newcommand{\univx}{\mathcal{X}}
\newcommand{\univy}{\mathcal{Y}}

\newcommand{\prob}{\mathbb{P}}

\newcommand{\half}{\frac{1}{2}}

\newcommand{\shuff}{S}

\newenvironment{mymath}{\ifnum\oakland=1$\else$$\fi}{\ifnum\oakland=1$\else$$\fi}
  
\newcommand\blfootnote[1]{%
  \begingroup
  \renewcommand\thefootnote{}\footnote{#1}%
  \addtocounter{footnote}{-1}%
  \endgroup
}

\ifnum\eurocrypt=0
	\title{Distributed Differential Privacy via Shuffling%
	\thanks{\textcopyright~ IACR 2019. This article is a minor revision of the version published by Springer-Verlag available at 10.1007/978-3-030-17653-2\_13. A table of contents and appendices are included. Some of these results are based on previous, unpublished work by two of the authors ~\cite{ZhilyaevZ17}.}}
\else
	\title{Distributed Differential Privacy via Shuffling}
\fi

\ifnum\oakland=0
	\ifnum\eurocrypt=0
  \author{
  \makebox[1.5in]{\hfill Albert Cheu\thanks{Khoury College of Computer Sciences, Northeastern University. Research supported by NSF award CCF-1718088.  \href{mailto:cheu.a@husky.neu.edu}{\texttt{cheu.a@husky.neu.edu}}} \hfill} \and 
  \makebox[1.5in]{\hfill Adam Smith\thanks{Computer Science Department, Boston University.   Research supported by NSF awards IIS-1447700 and AF-1763786 and a Sloan Foundation Research Award.  \href{mailto:ads22@bu.edu}{\texttt{ads22@bu.edu}}.
  } \hfill} \and 
  \makebox[1.5in]{\hfill Jonathan Ullman\thanks{Khoury College of Computer Sciences, Northeastern University.  Research supported by NSF awards CCF-1718088, CCF-1750640, CNS-1816028 and a Google Faculty Research Award. \href{mailto:jullman@ccs.neu.edu}{\texttt{jullman@ccs.neu.edu}}} \hfill} \and 
  \makebox[1.5in]{\hfill David Zeber\thanks{Mozilla Foundation.  \href{mailto:dzeber@mozilla.com}{\texttt{dzeber@mozilla.com}}} \hfill} \and 
  \makebox[1.5in]{\hfill Maxim Zhilyaev\thanks{\href{mailto:maxim.zhilyaev@gmail.com}{\texttt{maxim.zhilyaev@gmail.com}}} \hfill}
  }
  
  \else
 	\author{
    Albert Cheu \inst{1}\textsuperscript{(\Letter)} \and Adam Smith\inst{2} \and Jonathan Ullman\inst{1} \and \\ David Zeber\inst{3} \and Maxim Zhilyaev\inst{4}
    }
    
    \institute{
    Khoury College of Computer Sciences, Northeastern University \email{cheu.a@husky.neu.edu, jullman@ccs.neu.edu}
    \and Computer Science Department, Boston University \email{ads22@bu.edu}
    \and Mozilla Foundation \email{dzeber@mozilla.com}
    \and \email{maxim.zhilyaev@gmail.com}
    }
    \authorrunning{A. Cheu et al.}
  \fi
\fi

\begin{document}

\ifnum\todo=1
    {\color{DarkBlue} \begin{center}
    {\Large To Do:}
    \end{center}
    \begin{itemize}
    
    \item Check funding information.
    
    \item Cite Maxim and David's work on github.
    
    \item Could still add some discussion of the histograms lower bound to the intro, but I opted against giving any detailed discussion.
    
    \end{itemize}}
    
    \vfill\newpage
\fi

\maketitle

\begin{abstract}
  We consider the problem of designing scalable, robust protocols for computing statistics about sensitive data. Specifically, we look at how best to design differentially private protocols in a distributed setting, where each user holds a private datum. The literature has mostly considered two models:  the ``central'' model, in which a trusted server collects users' data in the clear, which allows greater accuracy; and the ``local'' model, in which users individually randomize their data, and need not trust the server, but accuracy is limited.  Attempts to achieve the accuracy of the central model without a trusted server have so far focused on variants of cryptographic \ifnum\eurocrypt=1 multiparty computation (MPC)\else secure function evaluation\fi, which limits scalability.
 
\ifnum\eurocrypt=1 \vspace{5pt} \fi
  In this paper, we initiate the analytic study of a \emph{shuffled model} for distributed differentially private algorithms, which lies between the local and central models. This simple-to-implement model, a special case of the ESA framework of \ifnum\eurocrypt=0 \citet{Bittau+17}\else \cite{Bittau+17}\fi, augments the local model with an anonymous channel that randomly permutes a set of user-supplied messages. 
  For sum queries, we show that this model provides the power of the central model while avoiding the need to trust a central server and the complexity of cryptographic secure function evaluation. More generally, we give evidence that the power of the shuffled model lies strictly between those of the central and local models: for a natural restriction of the model, we show that shuffled protocols for a widely studied \emph{selection} problem require exponentially higher sample complexity than do central-model protocols.
\end{abstract}


\ifnum\oakland=0
    \ifnum\eurocrypt=0
        \vfill
        \newpage
    \fi
\fi

\ifnum\toc=1
\tableofcontents
\vfill
\newpage

\fi

\setcounter{footnote}{0}

\section{Introduction}

\ifnum\eurocrypt=1\blfootnote{The full version of this paper is accessible \href{https://arxiv.org/abs/1808.01394}{on arXiv}}\fi
The past few years has seen a wave of commercially deployed systems~\citep{ErlingssonPK14,AppleA} for analysis of users' sensitive data in the \emph{local model of differential privacy (LDP)}.  LDP systems have several features that make them attractive in practice, and limit the barriers to adoption.  Each user only sends private data to the data collector, so users do not need to fully trust the collector, and the collector is not saddled with legal or ethical obligations.  Moreover, these protocols are relatively simple and scalable, typically requiring each party to asynchronously send just a single short message.

However, the local model imposes strong constraints on the utility of the algorithm.  These constraints preclude the most useful differentially private algorithms, which require a \emph{central model} where the users' data is sent in the clear, and the data collector is trusted to perform only differentially private computations.  Compared to the central model, the local model requires enormous amounts of data, both in theory and in practice (see e.g.~\cite{KasiviswanathanLNRS08} and the discussion in~\cite{Bittau+17}).  Unsurprisingly, the local model has so far only been used by large corporations like Apple and Google with billions of users.

In principle, there is no dilemma between the central and local models, as any algorithm can be implemented without a trusted data collector using cryptographic \emph{multiparty computation (MPC)}.  However, despite dramatic recent progress in the area of practical MPC, existing techniques still require large costs in terms of computation, communication, and number of rounds of interaction between the users and data collector, and are considerably more difficult for companies to extend and maintain.

In this work, we initiate the analytic study of an intermediate model for distributed differential privacy called the \emph{shuffled model}.  This model, a special case of the ESA framework of \cite{Bittau+17}, augments the standard model of local differential privacy with an anonymous channel (also called a shuffler) that collects messages from the users, randomly permutes them, and then forwards them to the data collector for analysis.  For certain applications, this model overcomes the limitations on accuracy of local algorithms while preserving many of their desirable features.  However, under natural constraints, this model is dramatically weaker than the central model.  In more detail, we make two primary contributions:

\begin{itemize}
\item
  We give a simple, non-interactive algorithm in the shuffled model for estimating a single Boolean-valued statistical query (also known as a counting query) that essentially matches the error achievable by centralized algorithms.  We also show how to extend this algorithm to estimate a bounded real-valued statistical query, albeit at an additional cost in communication.  These protocols are sufficient to implement any algorithm in the \emph{statistical queries model}~\cite{Kearns93}, which includes methods such as gradient descent.

\item
 We consider the ubiquitous \emph{variable-selection problem}---a simple but canonical optimization problem.  Given a set of counting queries, the variable-selection problem is to identify the query with nearly largest value (i.e.~an ``approximate argmax'').  We prove that the sample complexity of variable selection in a natural restriction of the shuffled model is exponentially larger than in the central model.  The restriction is that each user send only a single message into the shuffle, as opposed to a set of messages, which we call this the \emph{one-message} shuffled model.  Our positive results show that the sample complexity in the shuffled model is polynomially smaller than in the local model. Taken together, our results give evidence that the central, shuffled, and local models are strictly ordered in the accuracy they can achieve for selection.  Our lower bounds follow from a structural result showing that any algorithm that is private in the one-message shuffled model is also private in the local model with weak, but non-trivial, parameters.
\end{itemize}

In concurrent and independent work, Erlingsson et al.~\cite{ErlingssonFMRTT19} give conceptually similar positive results for local protocols aided by a shuffler.  We give a more detailed comparison between our work and theirs after giving a thorough description of the model and our results (Section~\ref{sec:comparison})

\subsection{Background and Related Work}

\myparagraph{Models for Differentially Private Algorithms.}
Differential privacy \citep{DworkMNS06} is a restriction on the algorithm that processes a dataset to provide statistical summaries or other output. It ensures that, no matter what an attacker learns by interacting with the algorithm, it would have learned nearly the same thing whether or not the dataset contained any particular individual's data~\citep{KasiviswanathanS08}.
Differential privacy is now widely studied, and algorithms satisfying the criterion are increasingly deployed \citep{Abowd18,apple,ErlingssonPK14}.

There are two well-studied models for implementing differentially-private algorithms. In the \emph{central model}, raw data are collected at a central server where they are processed by a differentially private algorithm. In the \emph{local model} \citep{Warner65,EGS03,DworkMNS06}, each individual applies a differentially private algorithm locally to their data and shares only the output of the algorithm---called a report or response---with a server that aggregates users' reports. The local model allows individuals to retain control of their data since privacy guarantees are enforced directly by their devices. It avoids the need for a single, widely-trusted entity and the resulting single point of security failure.  The local model has witnessed an explosion of research in recent years, ranging from theoretical work to deployed implementations.  A complete survey is beyond the scope of this paper.

%
%
Unfortunately, for most tasks there is a large, unavoidable gap between the accuracy that is achievable in the two models. \ifnum\eurocrypt=0 \citet{BeimelNO08} and \citet{ChanSS12} \else \cite{BeimelNO08} and \cite{ChanSS12} \fi  show that estimating the sum of bits, one held by each player, requires error $\Omega(\sqrt{n}/\eps)$ in the local model, while an error of just $O(1/\eps)$ is possible the central model. \cite{DuchiJW13} extended this lower bound to a wide range of natural problems, showing that the error must blowup by at least $\Omega(\sqrt{n})$, and often by an additional factor growing with the data dimension.
More abstractly, \ifnum\eurocrypt=0 \citet{KasiviswanathanLNRS08} \else \cite{KasiviswanathanLNRS08} \fi showed that the power of the local model is equivalent to the \emph{statistical query model} \citep{Kearns93} from learning theory. They used this to show an exponential separation between the accuracy and sample complexity of local and central algorithms.  Subsequently, an even more natural separation arose for the variable-selection problem~\citep{DuchiJW13,selection}, which we also consider in this work.

\myparagraph{Implementing Central-Model Algorithms in Distributed Models.}
In principle, one could also use the powerful, general tools of modern cryptography, such as multiparty computation (MPC), or secure function evaluation, to simulate central model algorithms in a setting without a trusted server \citep{DworkKMMN06}, but such algorithms currently impose bandwidth and liveness constraints that make them impractical for large deployments.
In contrast, Google \citep{ErlingssonPK14} now uses local differentially private protocols to collect certain usage statistics from hundreds of millions of users' devices.


A number of specific, efficient MPC algorithms have  been proposed for differentially private functionalities. They generally either (1) focus on simple summations and require a single ``semi-honest''/``honest-but-curious'' server that aggregates user answers, as in \ifnum\eurocrypt=0 \citet{ShiCRCS11,ChanSS12-aggregation,Bonawitz+17} \else \cite{ShiCRCS11,ChanSS12-aggregation,Bonawitz+17} \fi; or (2) allow general computations, but require a network of servers, a majority of whom are assumed to behave honestly, as in \ifnum\eurocrypt=0 \citet{CorriganB17-prio}\else \cite{CorriganB17-prio}\fi. As they currently stand, these approaches have a number of drawbacks: they either require users to trust that a server maintained by a service provided is behaving (semi-)honestly, or they require that a coalition of service providers collaborate to run protocols that reveal to each other who their users are and \emph{what computations they are performing on their users' data}. It is possible to avoid these issues by combining anonymous communication layers and MPC protocols for universal circuits but, with current techniques, such modifications destroy the efficiency gains relative to generic MPC. 

Thus, a natural question---relevant no matter how the state of the art in MPC evolves---is to identify simple (and even minimal) primitives that can be implemented via MPC in a distributed model and are expressive enough to allow for sophisticated private data analysis.  In this paper, we show that shuffling is a powerful primitive for differentially private algorithms.


\myparagraph{Mixnets.}
One way to realize the shuffling functionality is via a mixnet.
A \emph{mix network}, or \emph{mixnet}, is a protocol involving several computers that takes as input a sequence of encrypted messages, and outputs a uniformly random permutation of those messages' plaintexts. Introduced by \cite{Chaum81-mixnets}, the basic idea now exists in many variations. In its simplest instantiation, the network consists of a sequence of servers, whose identities and ordering are public information.\footnote{Variations on this idea based on \emph{onion routing} allow the user to specify a secret path through a network of mixes.}
Messages, each one encrypted with all the servers' keys, are submitted by users to the first server. Once enough messages have been submitted, each server in turn performs a \emph{shuffle} in which the server removes one layer of encryption and sends a permutation of the messages to the next server. In a \emph{verifiable shuffle}, the server also produces a cryptographic proof that the shuffle preserved the multi-set of messages. The final server sends the messages to their final recipients, which might be different for each message. A variety of efficient implementations of mixnets with verifiable shuffles exist (see, e.g., \cite{KwonLDF16-riffle,Bittau+17} and citations therein).

\jnote{Old text in blue.  New text in red.}
Another line of work~\cite{HLZZ15,TGLZZ17} shows how to use differential privacy {\em in addition} to mixnets to make communication patterns differentially private for the purposes of anonymous computation.  Despite the superficial similarly, this line of work is orthogonal to ours, which is about how to use mixnets themselves to achieve (more accurate) differentially private data analysis.

\begin{figure}
\begin{framed}
\centering
\begin{subfigure}{.5\linewidth}
  \centering
  \includegraphics[width=.65\linewidth]{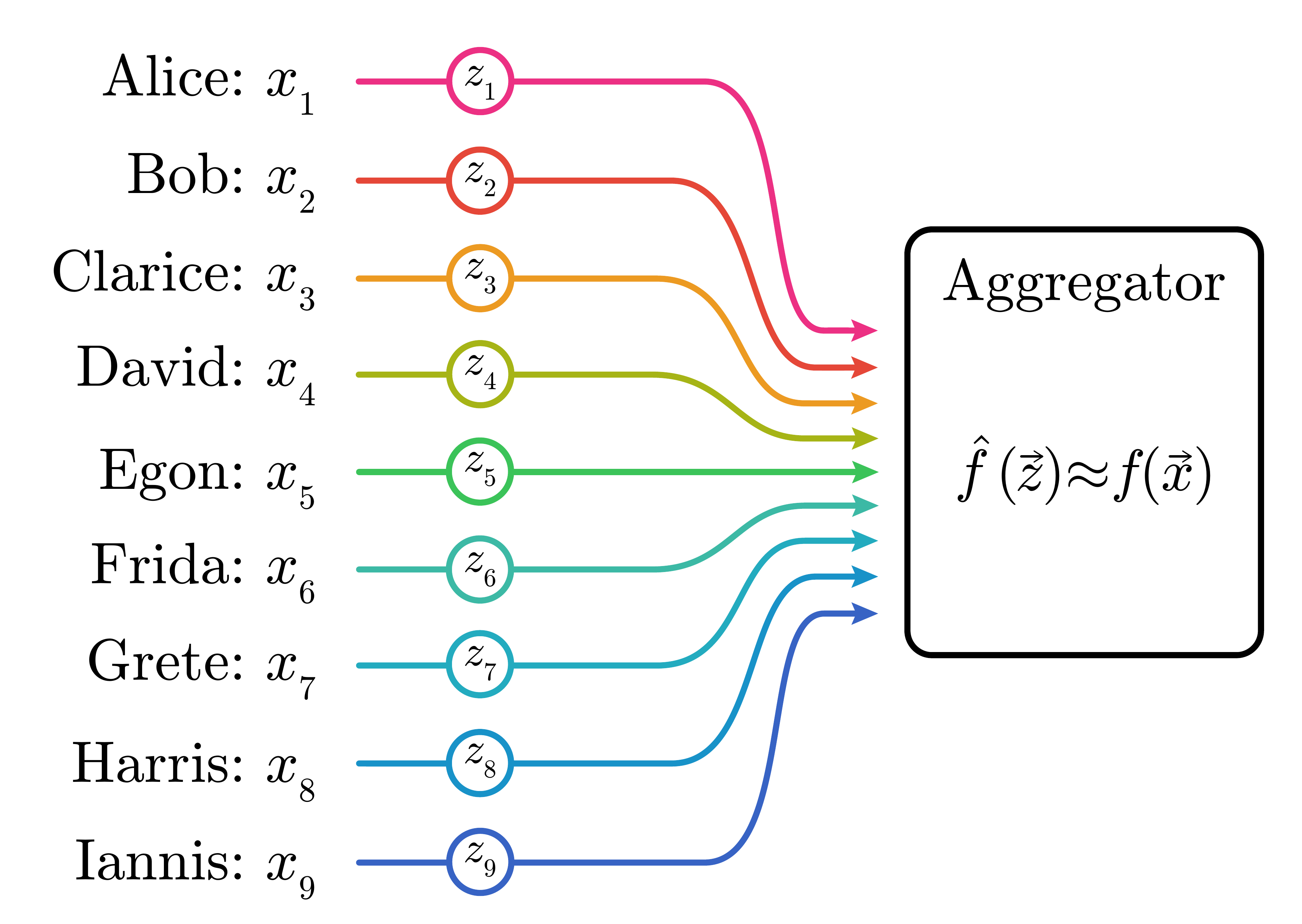}
  \label{fig:sub1}
\end{subfigure}%
\begin{subfigure}{.5\linewidth}
  \centering
  \includegraphics[width=.85\linewidth]{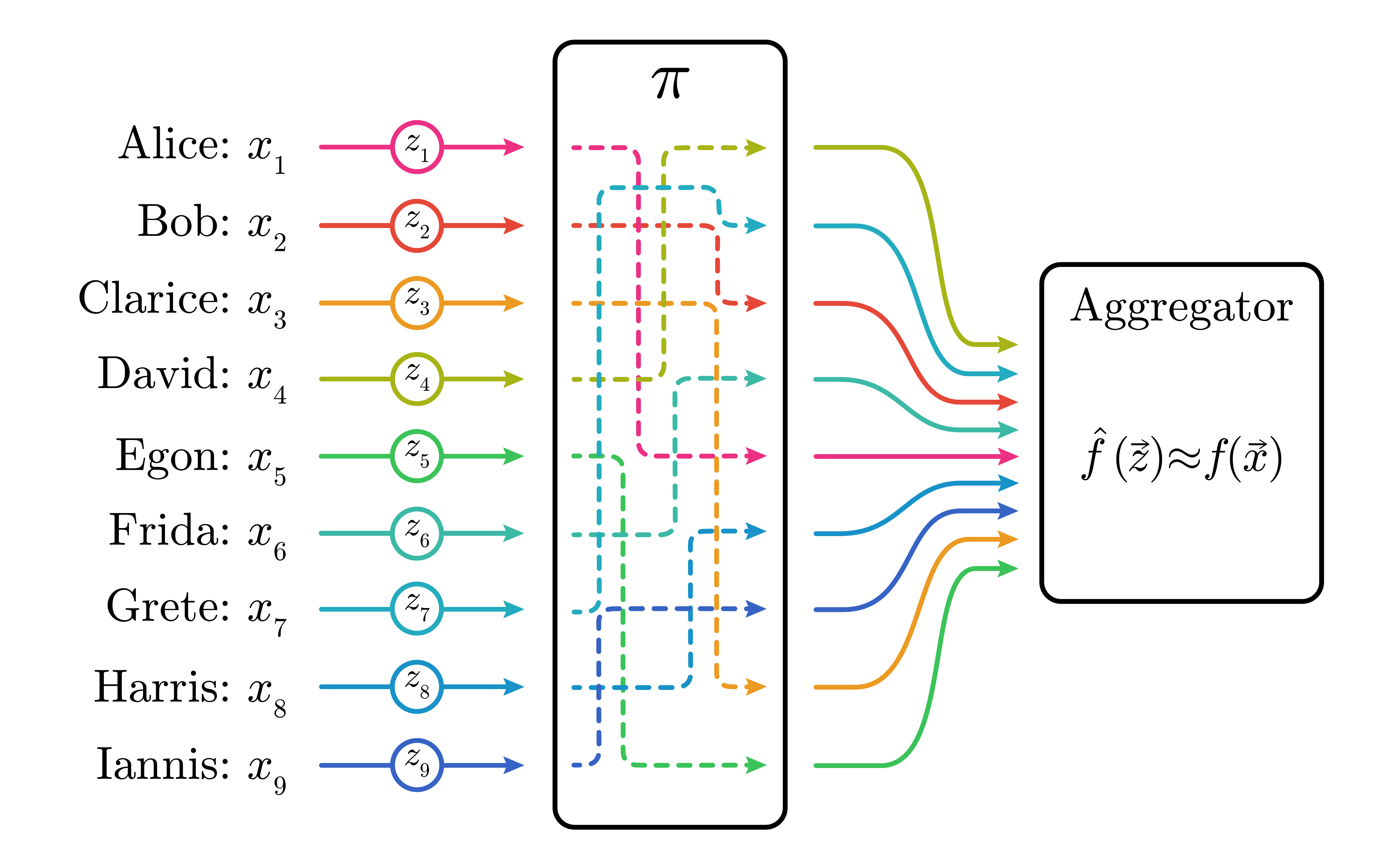}
  \label{fig:sub2}
\end{subfigure}
{\footnotesize Prototypical (one-message) protocols in the local model (left) and the shuffled model (right).}
\end{framed}
\end{figure}

\myparagraph{Shufflers as a Primitive for Private Data Analysis.}
This paper studies how to use a shuffler (e.g. a mixnet) as a cryptographic primitive to implement differentially-private algorithms. \ifnum\eurocrypt=0\citet{Bittau+17} \else Bittau et al \cite{Bittau+17} \fi propose a general framework, dubbed \emph{encode-shuffle-analyze} (or \emph{ESA}), which generalizes the local and central models by allowing a local randomized encoding step $E$ performed on user devices, a permutation step $S$ in which encrypted encodings are shuffled, and a final randomized process $A$ that analyzes the permuted encodings. We ask what privacy guarantee can be provided if we rely only on the local encoding $E$ and the shuffle $S$---the analyst $A$ is untrusted. In particular, we are interested in protocols that are substantially more accurate than is possible in the local model (in which the privacy guarantee relies entirely on the encoding $E$). This general question was left open by \ifnum\eurocrypt=0\citet{Bittau+17}\else\cite{Bittau+17} \fi. 

One may think of the shuffled model as specifying a highly restricted MPC primitive on which we hope to base privacy. 
Relative to general MPC, the use of mixnets for shuffling provides several advantages: First, there already exist a number of highly efficient implementations. Second, their trust model is simple and  robust---as long as a single one of the servers performs its shuffle honestly, the entire process is a uniformly random permutation, and our protocols' privacy guarantees will hold.   The architecture and trust guarantees are also easy to explain to nonexperts
(say, with metaphors of shuffled cards or shell games). Finally, mixnets automatically provide a number of additional features that are desirable for data collection:  they can maintain secrecy of a company's user base, since each company's users could use that company's server as their first hop; and they can maintain secrecy of the company's computations, since the specific computation is done by the  analyst.  
Note that we think of a mixnet here as operating on large batches of messages, whose size is denoted by $n$. (In implementation, this requires a fair amount of latency, as the collection point must receive sufficiently many messages before proceeding---see Bittau et al.~\cite{Bittau+17}).

Understanding the possibilities and limitations of shuffled protocols for private data analysis is interesting from both theoretical and practical perspectives. It provides an intermediate abstraction, and we give evidence that it lies strictly between the central and local models.  Thus, it sheds light on the minimal cryptographic primitives needed to get the central model's accuracy.  It also provides an attractive platform for near-term deployment~\cite{Bittau+17}, for the reasons listed above.

For the remainder of this paper, we treat the shuffler as an abstract service that randomly permutes a set of messages. We leave a discussion of the many engineering, social, and cryptographic implementation considerations to future work.



\section{Overview of Results}

\myparagraph{The Shuffled Model.}
In our model, there are $n$ users, each with data $x_i \in \cX$.  Each user applies some \emph{encoder} $R \from \cX \to \cY^m$ to their data and sends the \emph{messages} $(y_{i,1},\dots,y_{i,m}) = R(x_i)$.  In the \emph{one-message shuffled model}, each user sends $m=1$ message. The $n \cdot m$ messages $y_{i,j}$ are sent to a \emph{shuffler} $\shuff \from \cY^* \to \cY^*$ that takes these messages and outputs them in a uniformly random order.  The shuffled set of messages is then passed through some \emph{analyzer} $A \from \cY^* \to \cZ$ to estimate some function $f(x_1,\dots,x_n)$.  Thus, the protocol $P$ consists of the tuple $(R,S,A)$.  We say that the algorithm is \emph{$(\eps,\delta)$-differentially private in the shuffled model} if the algorithm
$
M_{R}(x_1,\dots,x_n) = S( \cup_{i=1}^{n} R(x_i) )
$
satisfies $(\eps,\delta)$-differential privacy. For more detail, see the discussion leading to Definition \ref{def:mixnet-dp}.

In contrast to the local model, differential privacy is now a property of all $n$ users' messages, and the $(\eps,\delta)$ may be functions of $n$.  However, if an adversary were to inject additional messages, then it would not degrade privacy, provided that those messages are independent of the honest users' data.
Thus, we may replace $n$, in our results, as a \emph{lower bound} on the number of honest users in the system.
For example, if we have a protocol that is private for $n$ users, but instead we have $\frac{n}{p}$ users of which we assume at least a $p$ fraction are honest, the protocol will continue to satisfy differential privacy.

\subsection{Algorithmic Results}

Our main result shows how to estimate any bounded, real-valued linear statistic (a \emph{statistical query}) in the shuffled model with error that nearly matches the best possible utility achievable in the central model.
\begin{thm} \label{thm:realsum}
For every $\eps \in (0,1)$, and every $\delta \gtrsim \eps n 2^{-\eps n}$ and every function $f \from \cX \to [0,1]$, there is a protocol $P$ in the shuffled model that is $(\eps,\delta)$-differentially private, and for every $n$ and every $X = (x_1,\dots,x_n) \in \cX^n$,
$$
\ex{}{\left| P(X) - \sum_{i=1}^{n} f(x_i) \right|} = O\left(\frac{1}{\eps} \log \frac{n}{\delta}\right).
$$
Each user sends $m = \Theta(\eps \sqrt{n})$ one-bit messages.
\end{thm}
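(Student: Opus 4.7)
\emph{Plan.} I would reduce the real-valued problem to the Boolean case (which, by the excerpt's discussion, is handled by a preceding theorem in the paper) via randomized rounding. Each user $i$ first computes an integer $B_i \in \{0,1,\ldots,m\}$ by randomized rounding of $m f(x_i)$, so that $\mathbb{E}[B_i] = m f(x_i)$ and $\mathrm{Var}[B_i] \le 1/4$. The user encodes $B_i$ as a string of $B_i$ ones and $m - B_i$ zeros, and submits these $m$ bits as if they were $m$ separate ``virtual'' single-bit inputs to the Boolean-sum shuffled protocol. The analyzer takes the Boolean protocol's estimator $\hat S$ of $S := \sum_i B_i$ and returns $\hat S / m$ as the estimate of $\sum_i f(x_i)$.

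\emph{Accuracy.} I would bound the total error by the sum of two independent contributions. The randomized-rounding contribution is $\tfrac{1}{m}(S - m \sum_i f(x_i))$, whose standard deviation is at most $\sqrt{n/(4m^2)} = \sqrt{n}/(2m)$ by independence across users; this is $O(1/\eps)$ once $m = \Theta(\eps \sqrt{n})$. The Boolean subprotocol, run on the $nm$ virtual users with local privacy parameters $(\eps_0, \delta_0)$, yields an estimator of $S$ with error $O(\log(1/\delta_0)/\eps_0)$, which after rescaling by $1/m$ contributes $O(\log(1/\delta_0)/(m\eps_0))$. Substituting the parameters chosen below, both contributions become $O(\log(n/\delta)/\eps)$.

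\emph{Privacy, and main obstacle.} Because replacing one real user's input can alter up to $m$ of the underlying virtual bits, I would apply group privacy of size $m$ to the Boolean protocol's $(\eps_0,\delta_0)$-DP guarantee, giving $(m\eps_0,\, m e^{(m-1)\eps_0} \delta_0)$-DP at the real-user level. Setting $\eps_0 = \eps/m$ and $\delta_0 = \Theta(\delta/m)$ (using $\eps \le 1$) yields the target $(\eps,\delta)$-DP. The Boolean protocol's admissibility condition, of the form $\delta_0 \gtrsim 2^{-\eps_0 \cdot nm}$ on $nm$ virtual users, then becomes $\delta \gtrsim m \cdot 2^{-\eps n}$, which is compatible with the hypothesis $\delta \gtrsim \eps n \cdot 2^{-\eps n}$ for $m = \Theta(\eps \sqrt{n})$. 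The apparent difficulty is that group privacy blows up $\eps$ by a factor of $m$, which seems to destroy the advantage of the shuffled model; the saving observation is that the target $S$ is itself inflated by $m$ relative to $\sum_i f(x_i)$, so the $1/m$ rescaling in the analyzer exactly cancels this blowup. The choice $m = \Theta(\eps \sqrt{n})$ is then pinned down by balancing the rounding error $\Theta(\sqrt{n}/m)$ against the rescaled Boolean error $\Theta(\log(n/\delta)/\eps)$. If group privacy proves too lossy for the finer constants (or for the smallest allowed $\delta$), a direct binomial analysis of the total output count under sensitivity-$m$ input perturbations, paralleling the Boolean theorem's proof, is the natural fallback.
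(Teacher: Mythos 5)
Your proposal is correct in outline, but it proves the theorem by a genuinely different route than the paper. The paper's protocol also does randomized rounding into $r=\Theta(\eps\sqrt n)$ bits per user (via a unary encoding with a single Bernoulli bit), but it then treats the result as $r$ \emph{separate} shuffled bit-sum protocols---one per coordinate $j\in[r]$, each run over the $n$ users' $j$-th bits---and obtains privacy by the (advanced) composition lemma for shuffled protocols, setting each sub-protocol to be $(\eps_0,\delta_0)$-DP with $\eps_0=\eps/\sqrt{8r\log(2/\delta)}$ and $\delta_0=\delta/2r$. You instead pour all $nm$ rounded bits into a \emph{single} bit-sum instance over $nm$ virtual users and handle the fact that one real user controls $m$ bits by group privacy with $\eps_0=\eps/m$, $\delta_0=\Theta(\delta/m)$. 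Both work: the paper pays the $\sqrt{r\log(1/\delta)}$ loss of advanced composition but keeps per-coordinate $\eps_0$ relatively large; you pay a linear $m$-fold blowup in $\eps$, which---as you observe---is exactly cancelled by the $1/m$ rescaling because the bit-sum error scales as $1/\eps_0$. Quantitatively your route even gives a slightly better dependence (roughly $\frac1\eps\sqrt{\log(m/\delta)}$ versus the paper's $\frac1\eps\log\frac1\delta$ per tail bound), comfortably within the stated $O(\frac1\eps\log\frac n\delta)$, and your admissibility condition $\delta\gtrsim m\,2^{-\eps n}$ is weaker than the hypothesis. Three details you should still record: (i) the group-privacy bound applies to the bit-sum mechanism on \emph{fixed} virtual bits, while your rounded bits $B_i$ are random and their law depends on $x_i$; you need the standard joint-convexity step---since $(\eps,\delta)$-indistinguishability holds for every pair of realizations $(b_i,b_i')$ with the other users' bits fixed, it holds for the mixtures---to conclude DP at the real-user level; (ii) when $\eps<1/\sqrt n$ take $m=\max(1,\lceil\eps\sqrt n\rceil)$, in which case the rounding error $O(\sqrt n)$ is already $O(1/\eps)$ (the paper treats this regime as a one-message warm-up); and (iii) the passage from the high-probability error bound to the claimed bound on expected error requires truncating the tail using the bounded range of the analyzer's output, which is routine but worth a line.
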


For comparison, in the central model, the Laplace mechanism achieves $(\eps,0)$-differential privacy and error $O(\frac{1}{\eps})$.  In contrast, error $\Omega(\frac{1}{\eps}\sqrt{n})$ is necessary in the local model. Thus, for answering statistical queries, this protocol essentially has the best properties of the local and central models (up to logarithmic factors).

In the special case of estimating a sum of bits (or a Boolean-valued linear statistic), our protocol has a slightly nicer guarantee and form.
\begin{thm} \label{thm:bitsum}
For every $\eps \in (0,1)$, and every $\delta \gtrsim 2^{-\eps n}$ and every function $f \from \cX \to \zo$, there is a protocol $P$ in the shuffled model that is $(\eps,\delta)$-differentially private, and for every $n$ and every $X = (x_1,\dots,x_n) \in \cX^n$,
$$
\ex{}{\left| P(X) - \sum_{i=1}^{n} f(x_i) \right|} = O\left(\frac{1}{\eps} \sqrt{\log \frac{1}{\delta}} \right).
$$
Each user sends a single one-bit message.
\end{thm}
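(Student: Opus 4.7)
The plan is to use local randomized response and rely on the shuffler for privacy amplification. Each user computes $b_i = f(x_i) \in \bits$ and sends $y_i = b_i$ with probability $1-\gamma$ and a uniformly random bit otherwise, for a parameter $\gamma \in (0,1)$ chosen below. Because the shuffler outputs a uniformly random permutation of the $y_i$'s, the analyzer's view is statistically equivalent to the integer count $Y = \sum_{i=1}^{n} y_i$, so the analyzer outputs the unbiased estimator $\hat{N} = (Y - n\gamma/2)/(1-\gamma)$.

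For privacy, fix two neighbors differing only in user $1$'s bit and introduce latent indicators $Z_i \sim \mathrm{Bernoulli}(\gamma)$ for whether user $i$ randomized. Conditional on $Z_1 = 1$, the two distributions of $Y$ are identical; conditional on $Z_1 = 0$, they differ by a pointwise shift of $\pm 1$. Decomposing the contribution of the remaining users as $W = K + C$, where $K \sim \mathrm{Binomial}(T, 1/2)$, $T = \sum_{i\ge 2} Z_i$, and $C$ is the deterministic contribution of the non-randomizing users, the privacy analysis reduces to showing that the $\mathrm{Binomial}(T,1/2)$ noise hides a $\pm 1$ shift. This is the ``binomial mechanism'': using the ratio $\binom{T}{k}/\binom{T}{k-1} = (T-k+1)/k$ and the sub-Gaussian tail $\Pr[|K - T/2| \ge s] \le 2e^{-2s^2/T}$, one can bound the privacy-loss random variable by $\eps$ outside a $\delta/2$-probability event as soon as $T \gtrsim \log(1/\delta)/\eps^2$. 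I would set $\gamma = \Theta(\log(1/\delta)/(\eps^2 n))$; a Chernoff bound on $T$ absorbs another $\delta/2$, and the hypothesis $\delta \gtrsim 2^{-\eps n}$ is (up to constants) what ensures $\gamma \le 1$.

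For utility, each $y_i \in \bits$ has variance at most $\gamma/2$, so $\mathrm{Var}(Y) \le n\gamma/2$ and $\mathrm{Var}(\hat{N}) = \mathrm{Var}(Y)/(1-\gamma)^2 = O(\log(1/\delta)/\eps^2)$. Hence $\ex{}{|\hat{N} - \sum_i b_i|} \le \sqrt{\mathrm{Var}(\hat{N})} = O(\sqrt{\log(1/\delta)}/\eps)$, matching the claimed bound, and each user transmits a single one-bit message by construction. The main obstacle is the binomial-mechanism privacy analysis --- carefully controlling the privacy loss for a $\pm 1$ shift of $\mathrm{Binomial}(T, 1/2)$ and cleanly combining the conditioning on $Z_1$ with the high-probability lower bound on $T$ via a single ``conditional DP'' argument; the remaining steps (protocol design, variance calculation, union bounds) are essentially routine.
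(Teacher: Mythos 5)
Your protocol, the reduction of the shuffled view to the count $Y=\sum_i y_i$, and the binomial-mechanism analysis of a $\pm 1$ shift all match the paper's proof (its Algorithm $C_\lambda$ and Claim about $C_H$). The genuine gap is in the parameter range. Conditioning only on $Z_1$ and then requiring that the $\mathrm{Binomial}(T,1/2)$ noise by itself hide a unit shift forces $T\gtrsim \log(1/\delta)/\eps^2$, hence $\gamma\gtrsim \log(1/\delta)/(\eps^2 n)$, which is feasible (i.e.\ $\gamma\le 1$) only when $\eps\gtrsim\sqrt{\log(1/\delta)/n}$. The theorem is claimed for all $\delta\gtrsim 2^{-\eps n}$, i.e.\ for all $\eps\gtrsim \log(1/\delta)/n$, and your assertion that $\delta\gtrsim 2^{-\eps n}$ ensures $\gamma\le 1$ is not correct: it only gives $\gamma\lesssim 1/\eps$, so $\gamma$ exceeds $1$ throughout the range $\log(1/\delta)/n\lesssim\eps\lesssim\sqrt{\log(1/\delta)/n}$, and in that regime your argument as written does not apply.

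The paper closes this regime with an idea your proposal drops: amplification by subsampling. Since the set $H$ of randomizing users is a uniformly random subset of its size $s$, one conditions on whether the differing index $j$ lies in $H$ (an event of probability $s/n$ under which the two conditional distributions coincide), and this improves the conditional binomial bound $\eps_0(s)\approx\sqrt{32\log(4/\delta)/s}$ by the multiplicative factor $\left(1-\tfrac{s}{n}\right)$ (the paper's Claim for $C_s$). With this factor one may take $\lambda=n-\Theta\!\left(\eps n^{3/2}/\sqrt{\log(1/\delta)}\right)$, i.e.\ $\gamma$ very close to but below $1$, for small $\eps$; a Chernoff bound lower-bounding $s$ (which you do have) finishes the privacy claim, and accuracy survives because the rescaling factor $n/(n-\lambda)$ still yields expected error $O\!\left(\sqrt{\log(1/\delta)}/\eps\right)$, exactly as your variance computation would predict with $1-\gamma=\Theta\!\left(\eps\sqrt{n}/\sqrt{\log(1/\delta)}\right)$. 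Your conditioning on $Z_1$ is essentially the same split as $j\in H$ versus $j\notin H$, but by using it only to discard the $Z_1=1$ branch you forfeit the $(1-s/n)$ gain that is indispensable when $\gamma$ cannot be taken small. The remaining ingredients of your proposal (equivalence of the shuffled view with the count, the unbiased estimator, the variance bound giving the expectation statement) are correct and agree with the paper.
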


\noindent The protocol corresponding to Theorem~\ref{thm:bitsum} is extremely simple:
\begin{enumerate}
\item For some appropriate choice of $p \in (0,1)$, each user $i$ with input $x_i$ outputs $y_i = x_i$ with probability $1-p$ and a uniformly random bit $y_i$ with probability $p$.  When $\eps$ is not too small, $p \approx \frac{\log(1/\delta)}{\eps^2 n}$.
\item The analyzer collects the shuffled messages $y_1,\dots,y_n$ and outputs $$\frac{1}{1-p} \left(\sum_{i=1}^{n} y_i - \tfrac{p}{2}\right)\, .$$
\end{enumerate}

\myparagraph{Intuition.} In the local model, an adversary can map the set of observations $\{y_1,\dots, y_n\}$ to users.  Thus, to achieve $\eps$-differential privacy, the parameter $p$ should be set close to $\frac{1}{2}$. In our model, the attacker sees only the anonymized set of observations $\{y_1,\dots,y_n\}$, whose distribution can be simulated using only $\sum_i y_i$. Hence, to ensure that the protocol is differentially private, it suffices to ensure that $\sum_i y_i$ is private, which we show holds for $p \approx \frac{\log(1/\delta)}{\eps^2 n} \ll \frac{1}{2}$.\anote{New explanatory text} \jnote{Edited.}

\myparagraph{Communication Complexity.}
Our protocol for real-valued queries requires $\Theta(\eps \sqrt{n})$ bits per user.  In contrast, the local model requires just a single bit, but incurs error $\Omega(\frac{1}{\eps} \sqrt{n})$.  A generalization of Theorem~\ref{thm:realsum} gives error $O(\frac{\sqrt{n}}{r} + \frac{1}{\eps} \log \frac{r}{\delta})$ and sends $r$ bits per user, but we do not know if this tradeoff is necessary.  Closing this gap is an interesting open question.

\subsection{Negative Results}

We also prove negative results for algorithms in the \emph{one-message} shuffled model.  These results hinge on a structural characterization of private protocols in the one-message shuffled model.

\begin{thm}
\label{thm:remove-shuffler-informal}
If a protocol $P = (R,S,A)$ satisfies $(\eps,\delta)$-differential privacy in the one-message shuffled model, then $R$ satisfies $( \eps + \ln n, \delta)$-differential privacy.  Therefore, $P$ is $(\eps + \ln n, \delta)$-differentially private in the local model.
\end{thm}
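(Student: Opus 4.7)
My plan is to deduce local-model privacy of $R$ from the shuffled-model privacy of $M_R$ via a permutation-invariant post-processing. Fix any pair $x, x' \in \cX$ and any event $T \subseteq \cY$, and write $\mu = R(x)$ and $\nu = R(x')$ for the corresponding output distributions on $\cY$.

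First I would choose the witness datasets $X = (x, x', \dots, x')$ and $X' = (x', x', \dots, x')$, which are neighbors (they differ only in the first coordinate). By hypothesis, $M_R(X)$ and $M_R(X')$ are $(\eps,\delta)$-indistinguishable. Second, I would post-process the shuffled multiset by returning a single uniformly random element---a valid (randomized) post-processing since the shuffler already delivers a uniformly random permutation of the $n$ messages. Under $X$ the returned message has distribution $\tfrac{1}{n}\mu + \tfrac{n-1}{n}\nu$, and under $X'$ it is distributed as $\nu$. Since post-processing preserves differential privacy, applying the preserved $(\eps,\delta)$ guarantee to the event $\{y \in T\}$ yields
\[
\tfrac{1}{n}\mu(T) + \tfrac{n-1}{n}\nu(T) \;\le\; e^{\eps}\,\nu(T) + \delta,
\]
which I would rearrange to
\[
\mu(T) \;\le\; \bigl(n e^{\eps} - (n-1)\bigr)\,\nu(T) + n\delta \;\le\; e^{\eps + \ln n}\,\nu(T) + n\delta .
\]
This shows $R$ satisfies $(\eps + \ln n, n\delta)$-differential privacy, matching the informal statement (the factor of $n$ on $\delta$ is the usual group-privacy-style scaling and can be absorbed into the stated parameter for the regimes of interest). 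The local-model conclusion for $P$ is then immediate because local DP is by definition a DP guarantee on each user's randomizer.

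\noindent\textbf{Main obstacle.} The delicate choice is the post-processing: it must be permutation-invariant (so that DP survives) and yet informative enough to isolate the one coordinate where $X$ and $X'$ differ. Uniform random sampling satisfies both criteria, but the $\tfrac{1}{n}$ dilution it introduces on the distinguishing coordinate is precisely what forces the additive $\ln n$ loss in $\eps$ (and the factor $n$ on $\delta$). Attempts to sharpen this with alternative permutation-invariant tests---for example, the event ``at least one message lies in $T$''---lead to binomial expressions in $\mu(T)$ and $\nu(T)$ whose inversions yield no asymptotic improvement, suggesting that the $\ln n$ loss is intrinsic to the one-message shuffled model and not an artifact of this particular choice.
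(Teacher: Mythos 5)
Your sampling argument is essentially sound, but as written it proves a weaker statement than the theorem: you obtain that $R$ is $(\eps+\ln n,\, n\delta)$-differentially private, whereas the theorem (and the use made of it in Theorem~\ref{thm:shuffled-to-local}) asserts $(\eps+\ln n,\,\delta)$ with the \emph{same} $\delta$. The factor $n$ on $\delta$ is forced by your post-processing: sampling a single uniform message dilutes the distinguishing coordinate to weight $1/n$, so when you rescale the inequality $\tfrac1n\mu(T)+\tfrac{n-1}{n}\nu(T)\le e^{\eps}\nu(T)+\delta$ you necessarily multiply the additive term by $n$ as well. This is not merely cosmetic here: downstream, $\delta$ is fed into the approximate-to-pure conversion of \cite{bns18} under the hypothesis $\delta_S<n^{-8}$ and $\delta<\beta/(8n\ln(n/\beta)e^{6\eps_L})$ with $e^{6\eps_L}\approx n^{6}e^{6\eps_S}$, so an extra factor of $n$ eats directly into the stated constants, and in any case your argument does not establish the theorem as stated.

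The ironic part is that the alternative test you dismiss in your ``main obstacle'' paragraph --- the permutation-invariant event ``at least one message lies in $Y$'' --- is exactly the paper's argument, and it \emph{does} remove the factor $n$ on $\delta$. Concretely, with $\cW=\{(w_1,\dots,w_n): \exists i,\ w_i\in Y\}$, take $X=(x',x,\dots,x)$ and $X'=(x,\dots,x)$ (or your pair; it is symmetric). Then $\Pr[\Pi_R(X)\in\cW]\ge \Pr[R(x')\in Y]$ with no $1/n$ dilution, while a union bound gives $\Pr[\Pi_R(X')\in\cW]\le n\Pr[R(x)\in Y]$; plugging these into the shuffled-model guarantee yields $\Pr[R(x')\in Y]\le e^{\eps}\,n\,\Pr[R(x)\in Y]+\delta$, i.e.\ $(\eps+\ln n,\delta)$-privacy of $R$ directly. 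No binomial expressions or inversions arise, because the union bound is only needed on the side of the inequality where the loss is multiplicative (contributing the $\ln n$ to $\eps$), and the additive $\delta$ appears exactly once, untouched. The paper phrases this as a proof by contradiction (assume $\Pr[R(x')\in Y]>e^{\eps_L}\Pr[R(x)\in Y]+\delta$ and derive a violation of shuffled privacy), but the content is the same. So your route is a legitimate and arguably more transparent ``post-processing'' view of the reduction, but to match the theorem's parameters you should replace uniform sampling by the existential test; the conversion from there to a local protocol via $P_L=(R,A\circ S)$ and Claim~\ref{clm:sameAnalysis} is then identical to what you propose.
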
 

Using Theorem~\ref{thm:remove-shuffler-informal} (and a transformation of~\citep{bns18} from $(\eps,\delta)$-DP to $(O(\eps),0)$-DP in the local model), we can leverage existing lower bounds for algorithms in the local model to obtain lower bounds on algorithms in the shuffled model.

\myparagraph{Variable Selection.}
In particular, consider the following \emph{variable selection problem}: given a dataset $x \in \zo^{n \times d}$, output $\widehat J$ such that
$$
\sum_{i=1}^{n} x_{i,\widehat J} \geq \left( \max_{j \in [d]} \ \ \sum_{i=1}^{n} x_{i,j}\right) - \frac{n}{10} \, .
$$
(The $\frac n {10}$ approximation term is somewhat arbitrary---any sufficiently small constant fraction of $n$ will lead to the same lower bounds and separations.)

Any local algorithm (with $\eps = 1$) for selection requires $n = \Omega(d \log d)$, whereas in the central model the exponential mechanism~\cite{McSherryT07} solves this problem for $n = O(\log d)$. The following lower bound shows that for this ubiquitous problem, the one-message shuffled model cannot match the central model.

\begin{thm}
If $P$ is a $(1,\frac{1}{n^{10}})$-differentially private protocol in the one-message shuffled model that solves the selection problem (with high probability) then $n = \Omega(d^{1/17})$.  Moreover this lower bound holds even if $x$ is drawn iid from a product distribution over $\zo^{d}$.
\end{thm}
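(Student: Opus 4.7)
The plan is to reduce the one-message shuffled lower bound to a known lower bound in the pure local model via two black-box transformations. First, I apply Theorem~\ref{thm:remove-shuffler-informal} directly to the hypothesized protocol $P=(R,S,A)$: if $P$ is $(1, 1/n^{10})$-differentially private in the one-message shuffled model, then the encoder $R$ is a $(1+\ln n,\; 1/n^{10})$-local randomizer, and hence $P$ itself solves the selection problem (with the same success probability) in the local model under these degraded parameters.

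Second, I invoke the transformation of \cite{bns18}, which converts any $(\eps,\delta)$-LDP protocol with sufficiently small $\delta$ into an $O(\eps)$-pure-LDP protocol solving the same task. Since $\delta = 1/n^{10}$ is polynomially small in $n$, applying this to our $(1+\ln n, 1/n^{10})$-LDP protocol yields a pure $O(\ln n)$-LDP protocol $P'$ that still solves selection with high probability (the approximation error may be worsened by a constant factor, but it remains well within $n/10$).

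Third, I apply the known lower bound for selection in the pure local model (as in \cite{DuchiJW13,selection}): any $\eps'$-LDP protocol solving selection with approximation error $n/10$ requires
\begin{mymath}
n \;=\; \Omega\!\bigl( d \log d \cdot e^{-c\eps'} \bigr)
\end{mymath}
for some absolute constant $c$, and the hard instance in these lower bounds is already an iid product distribution over $\zo^d$ (so the product-distribution claim is inherited for free). Substituting $\eps' = O(\ln n)$ gives $n \cdot n^{O(1)} = \Omega(d \log d)$, i.e., $n = \Omega(d^{1/(1+c')})$ for some constant $c'$ that depends on the constants from Theorem~\ref{thm:remove-shuffler-informal}, the \cite{bns18} transformation, and the local selection lower bound; calibrating these yields the stated exponent $1/17$.

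The main obstacle is not conceptual but quantitative: the reduction chain is short and each link is off-the-shelf, but the $(1+\ln n)$ blowup from Theorem~\ref{thm:remove-shuffler-informal}, the multiplicative inflation of $\eps$ in the \cite{bns18} transformation, and the exponential $e^{-c\eps'}$ dependence in the local selection lower bound compound into a single constant that must be tracked carefully to land at $1/17$. Any slack in these pieces still yields a polynomial lower bound of the form $n = \Omega(d^{1/O(1)})$, which already demonstrates the desired separation between the one-message shuffled model and the central model; the specific exponent is essentially a bookkeeping artifact.
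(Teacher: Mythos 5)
Your proposal follows essentially the same route as the paper: the paper's proof is exactly the chain you describe, packaged as Theorem~\ref{thm:shuffled-to-local} (the remove-shuffler step giving $(\epsilon_S+\ln n,\delta)$ local privacy, composed with the extension of~\cite{bns18} to pure local privacy $8(\epsilon_S+\ln n)$) followed by the $(\epsilon,0)$-LDP selection lower bound $n=\Omega\bigl(d\log d/(e^{\epsilon}-1)^2\bigr)$, whose hard instance is already a product distribution. The only small inaccuracies are that the~\cite{bns18} step degrades the failure probability ($\beta\to 4\beta$), not the approximation error, and that the exponent $1/17$ is obtained concretely by plugging $\epsilon_L=8(1+\ln n)$ into the local bound (so $(e^{\epsilon_L}-1)^2\approx n^{16}$ and $n^{17}=\Omega(d\log d)$), a calibration you correctly locate but leave uncomputed.
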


In Section~\ref{sec:lower-bounds}, we also prove lower bounds for the well studied \emph{histogram problem}, showing that any one-message shuffled-model protocol for this problem must have error growing (polylogarithmically) with the size of the data domain.  In contrast, in the central model it is possible to release histograms with no dependence on the domain size, even for \emph{infinite} domains.

We remark that our lower bound proofs do not apply if the algorithm sends multiple messages through the shuffler. However, we do not know whether beating the bounds is actually possible. Applying our bit-sum protocol $d$ times (together with differential privacy's composition property) shows that $n = \tilde{O}(\sqrt{d})$ samples suffice in the general shuffled model. We also do not know if this bound can be improved.  We leave it as an interesting direction for future work to fully characterize the power of the shuffled model.

\subsection{Comparison to~\cite{ErlingssonFMRTT19}}
\label{sec:comparison}
In concurrent and independent work, Erlingsson et al.~\cite{ErlingssonFMRTT19} give conceptually similar positive results for local protocols aided by a shuffler.  Specifically, they prove a general amplification result: adding a shuffler to any protocol satisfying local differential privacy improve the privacy parameters, often quite significantly.   This amplification result can be seen as a partial converse to our transformation from shuffled protocols to local protocols (Theorem~\ref{thm:remove-shuffler-informal}).

Their result applies to \emph{any} local protocol, whereas our protocol for bit-sums (Theorem~\ref{thm:bitsum}) applies specifically to the one-bit randomized response protocol.  However, when specialized to randomized response, their result is quantitatively weaker than ours.  As stated, their results only apply to local protocols that satisfy $\eps$-differential privacy for $\eps < 1$.  In contrast, the proof of Theorem~\ref{thm:bitsum} shows that, for randomized response, local differential privacy $\eps \approx \ln(n)$ can be amplified to $\eps' = 1$.  Our best attempt at generalizing their proof to the case of $\eps \gg 1$ does not give any amplification for local protocols with $\eps \approx \ln(n)$.  Specifically, our best attempt at applying their method to the case of randomized response yields a shuffled protocol that is  $1$-differentially private and has error $\Theta(n^{5/12})$, which is just slightly better than the error $O(\sqrt{n})$ that can be achieved without a shuffler.

\section{Model and Preliminaries}
In this section, we define terms and notation used throughout the paper.   We use $\mathrm{Ber}(p)$ to denote the Bernoulli distribution over $\zo$, which has value 1 with probability $p$ and 0 with probability $1-p$. We will use $\mathrm{Bin}(n,p)$ to denote the binomial distribution (i.e.~the sum of $n$ independent samples from $\mathrm{Ber}(p)$.

\subsection{Differential Privacy}
Let $X \in \cX^n$ be a \emph{dataset} consisting of elements from some universe $\cX$.  We say two datasets $X,X'$ are \emph{neighboring} if they differ on at most one user's data, and denote this $X \sim X'$.

\begin{defn}[Differential Privacy~\cite{DworkMNS06}] \label{def:dp}
An algorithm $M \from \cX^* \to \cZ$ is $(\eps,\delta)$-differentially private if for every $X \sim X' \in \cX^*$ and every $T \subseteq \cZ$
$$
\pr{}{M(X) \in T} \leq e^{\eps} \pr{}{M(X') \in T} + \delta.
$$
where the probability is taken over the randomness of $M$.
\end{defn}

Differential privacy satisfies two extremely useful properties:

\begin{lem}[Post-Processing~\cite{DworkMNS06}] \label{lem:postprocessing}
If $M$ is $(\eps,\delta)$-differentially private, then for every $A$, $A \circ M$ is $(\eps,\delta)$-differentially private.
\end{lem}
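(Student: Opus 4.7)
The plan is to reduce to the case where $A$ is deterministic and then apply the differential privacy of $M$ directly to the preimage of the target set. Fix neighboring datasets $X \sim X' \in \cX^*$ and a measurable set $T \subseteq \range(A)$. The goal is to show
\begin{mymath}
\pr{}{(A \circ M)(X) \in T} \leq e^{\eps} \pr{}{(A \circ M)(X') \in T} + \delta.
\end{mymath}

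First I would handle the case in which $A$ is a deterministic map. In that case, define $S = A^{-1}(T) = \{z \in \cZ : A(z) \in T\}$. Then $(A \circ M)(X) \in T$ if and only if $M(X) \in S$, so applying the $(\eps,\delta)$-DP guarantee of $M$ to the event $S$ gives
\begin{mymath}
\pr{}{(A \circ M)(X) \in T} = \pr{}{M(X) \in S} \leq e^{\eps}\pr{}{M(X') \in S} + \delta = e^{\eps}\pr{}{(A\circ M)(X') \in T} + \delta.
\end{mymath}

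For the general case of randomized $A$, I would represent the randomness of $A$ by an auxiliary variable $r$ drawn from some distribution $\mathcal{D}$ independent of $M$'s randomness, so that $A(z)$ is equivalent to $A_r(z)$ for a deterministic family $\{A_r\}_r$. Conditioning on $r$ reduces to the deterministic case, giving
\begin{mymath}
\pr{}{A_r(M(X)) \in T} \leq e^{\eps}\pr{}{A_r(M(X')) \in T} + \delta
\end{mymath}
for every $r$. Taking expectation over $r \sim \mathcal{D}$ (and using linearity of expectation, together with the fact that the $\delta$ additive term is a constant) yields the desired inequality for randomized $A$. The only step that requires any care is the independence of $A$'s randomness from $M$'s, which is implicit in the composition $A \circ M$; there is no real obstacle, since this is a standard measure-theoretic decomposition.
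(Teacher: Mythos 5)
Your proof is correct and is the standard argument: the paper does not prove this lemma at all (it simply cites \cite{DworkMNS06}), and your reduction---deterministic $A$ via the preimage $A^{-1}(T)$, then randomized $A$ by conditioning on its independent coins and averaging---is exactly the canonical proof from the literature. The only (minor) point worth stating explicitly is measurability of $A^{-1}(T)$, which is automatic in the discrete settings used in this paper.
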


\begin{lem}[Composition~\cite{DworkMNS06,DworkRV10}] \label{lem:composition}
If $M_1,\dots,M_T$ are $(\eps,\delta)$-differentially private, then the composed algorithm
$$
\widetilde{M}(X) = (M_1(X),\dots,M_T(X))
$$
is $(\eps', \delta' + T\delta)$-differentially private for every $\delta' > 0$ and $\eps' = \eps(e^\eps -1)T + \eps \sqrt{2T \log(1/\delta')}$.
\end{lem}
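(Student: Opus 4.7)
My plan is to follow the \emph{privacy loss random variable} approach of Dwork, Rothblum, and Vadhan. The first step is to reduce to the case $\delta = 0$ at a cost of $T\delta$ in the final failure parameter. A standard decomposition shows that for an $(\eps,\delta)$-DP mechanism $M_i$ and neighboring inputs $X \sim X'$, the output distributions of $M_i(X)$ and $M_i(X')$ each place probability at most $\delta$ on a ``bad'' event, off of which they are $(\eps,0)$-indistinguishable pointwise. Union-bounding over the $T$ rounds contributes $T\delta$ to the final $\delta'$, so it suffices to establish the $\eps'$ bound assuming each $M_i$ is $(\eps,0)$-DP.

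For the pure-DP case, fix neighboring $X \sim X'$ and let $y = (y_1, \dots, y_T) \sim \widetilde M(X)$. Define the privacy loss random variables
\[
Z_i \;=\; \ln \frac{\Pr[M_i(X) = y_i \mid y_{<i}]}{\Pr[M_i(X') = y_i \mid y_{<i}]}.
\]
By $(\eps,0)$-DP of each $M_i$, every $Z_i$ lies in $[-\eps, \eps]$ almost surely. A short calculation using the identity $\sum_{y_i}\bigl(p(y_i)-q(y_i)\bigr) = 0$, where $p,q$ are the output laws of $M_i$ on $X$ and $X'$ respectively, yields the KL-type bound $\mathbb{E}[Z_i \mid y_{<i}] \leq \eps(e^\eps-1)$. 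The crucial link to DP is that $\widetilde M$ is $(\eps',\delta')$-DP whenever the cumulative loss $S_T = \sum_{i=1}^T Z_i$ satisfies $S_T \leq \eps'$ except with probability $\delta'$ over $y$; this is the standard ``privacy-loss-implies-DP'' implication, proved by integrating the likelihood ratio over the good event.

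To bound $S_T$, I would apply Azuma--Hoeffding to the martingale $S_k - \sum_{i \leq k}\mathbb{E}[Z_i \mid y_{<i}]$, whose increments lie in an interval of width at most $2\eps$. Combined with the conditional mean bound above, this gives $S_T \leq T\eps(e^\eps-1) + \eps\sqrt{2T\log(1/\delta')} = \eps'$ except with probability $\delta'$. Folding in the $T\delta$ from the reduction step completes the proof.

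The main technical obstacle is the KL-style bound $\mathbb{E}[Z_i \mid y_{<i}] \leq \eps(e^\eps-1)$: the naive estimate obtained from $|Z_i| \leq \eps$ alone only yields a mean of $\eps$, which is insufficient to produce the $\eps(e^\eps - 1)T$ (essentially $\eps^2 T$) leading term; recovering that quadratic-in-$\eps$ improvement requires exploiting the pointwise likelihood-ratio constraint of pure DP together with the zero-sum identity, not merely the boundedness of $Z_i$. Once this estimate is in hand, the remainder is bookkeeping: the martingale structure cleanly accommodates adaptive composition, since each $M_i$ may depend on $y_{<i}$ while still obeying $(\eps,\delta)$-DP in its data input.
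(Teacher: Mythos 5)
Your proposal is correct: the paper does not prove this lemma but cites it from Dwork--Rothblum--Vadhan, and your argument is exactly the standard proof behind that citation (reduce to pure DP at cost $T\delta$, bound the conditional mean of the privacy-loss variables by $\eps(e^\eps-1)$ via the KL-type estimate, and apply Azuma--Hoeffding to get the $\eps\sqrt{2T\log(1/\delta')}$ deviation term). The constants check out, so there is nothing to add beyond noting that the reduction step is most cleanly made rigorous via the standard mixture/coupling characterization of $(\eps,\delta)$-indistinguishability rather than a literal common bad event.
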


\subsection{Differential Privacy in the Shuffled Model} \label{sec:model}

In our model, there are $n$ \emph{users}, each of whom holds data $x_i \in \cX$.  We will use $X = (x_1,\dots,x_n) \in \cX^n$ to denote the \emph{dataset} of all $n$ users' data.  We say two datasets $X,X'$ are \emph{neighboring} if they differ on at most one user's data, and denote this $X \sim X'$.

The protocols we consider consist of three algorithms:
\begin{itemize}
\item $R \from \cX \to \cY^m$ is a randomized \emph{encoder} that takes as input a single users' data $x_i$ and outputs a set of $m$ \emph{messages} $y_{i,1},\dots,y_{i,m} \in \cY$. If $m=1$, then $P$ is in the \emph{one-message shuffled model}.

\item $\shuff \from \cY^* \to \cY^*$ is a \emph{shuffler} that takes a set of messages and outputs these messages in a uniformly random order.  Specifically, on input $y_1,\dots,y_N$, $\shuff$ chooses a uniformly random permutation $\pi \from [N] \to [N]$ and outputs $y_{\pi(1)},\dots,y_{\pi(N)}$.

\item $A \from \cY^* \to \cZ$ is some \emph{analysis function} or \emph{analyzer} that takes a set of messages $y_1,\dots,y_N$ and attempts to estimate some function $f(x_1,\dots,x_n)$ from these messages.
\end{itemize}

\noindent We denote the overall protocol $P = (R,\shuff,A)$.  The mechanism by which we achieve privacy is $$\Pi_{R}(x_1,\dots,x_n) = \shuff(\cup_{i=1}^n R(x_i)) = \shuff(y_{1,1},\dots,y_{n,m}),$$ where both $R$ and $\shuff$ are randomized.  We will use $P(X) = A \circ \Pi_{R}(X)$ to denote the output of the protocol.  However, by the post-processing property of differential privacy (Lemma~\ref{lem:postprocessing}), it will suffice to consider the privacy of $\Pi_{R}(X)$, which will imply the privacy of $P(X)$.  We are now ready to define differential privacy for protocols in the shuffled model.  

\begin{defn}[Differential Privacy in the Shuffled Model]
\label{def:mixnet-dp}
A protocol $P = (R,\shuff,A)$ is $(\eps,\delta)$-differentially private if the algorithm $\Pi_{R}(x_1,\dots,x_n) = \shuff(R(x_1),\dots,R(x_n))$ is $(\eps,\delta)$-differentially private (Definition~\ref{def:dp}).
\end{defn}

In this model, privacy is a property of the entire set of users' messages and of the shuffler, and thus $\eps,\delta$ may depend on the number of users $n$.  When we wish to refer to $P$ or $\Pi$ with a specific number of users $n$, we will denote this by $P_{n}$ or $\Pi_{n}$.

We remark that if an adversary were to inject additional messages, then it would not degrade privacy, provided that those messages are independent of the honest users' data.
Thus, we may replace $n$, in our results, with an assumed \emph{lower bound} on the number of honest users in the system.

\medskip
In some of our results it will be useful to have a generic notion of accuracy for a protocol $P$.  
\begin{defn}[Accuracy of Distributed Protocols]
Protocol $P=(R,\shuff,A)$ is \emph{$(\alpha,\beta)$-accurate} for the function $f:\cX^* \rightarrow \cZ$ if, for every $X \in \cX^*$, we have $\pr{}{d(P(X),f(X)) \leq \alpha} \geq 1-\beta$ where $d \from \cZ \times \cZ \to \R$ is some application-dependent distance measure.
\end{defn}

As with the privacy guarantees, the accuracy of the protocol may depend on the number of users $n$, and we will use $P_n$ when we want to refer to the protocol with a specific number of users.

\myparagraph{Composition of Differential Privacy}
We will use the following useful composition property for protocols in the shuffled model, which is an immediate consequence of Lemma~\ref{lem:composition} and the post-processing Lemma~\ref{lem:postprocessing}.  This lemma allows us to directly compose protocols in the shuffled model while only using the shuffler once, rather than using the shuffler independently for each protocol being composed.
\begin{lem}[Composition of Protocols in the Shuffled Model] \label{ref:mixnetcomposition}
If $\Pi_1 = (R_1,\shuff),\dots,\Pi_{T} = (R_{T},\shuff)$ for $R_t \from \cX \to \cY^m$ are each $(\eps,\delta)$-differentially private in the shuffled model, and $\widetilde{R} \from \cX \to \cY^{mT}$ is defined as 
$$
\widetilde{R}(x_i) = (R_{1}(x_i),\dots,R_{T}(x_i))
$$
then, for every $\delta ' > 0$, the composed protocol $\widetilde{\Pi} = (\widetilde{R}, \shuff)$ is $(\eps', \delta' + T\delta)$-differentially private in the shuffled model for $\eps' = \eps(e^\eps - 1)T + \eps \sqrt{2T \log(1/\delta')}$.
\end{lem}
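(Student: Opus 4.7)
The plan is to reduce the composition statement to the standard composition lemma (Lemma~\ref{lem:composition}) for differential privacy plus post-processing (Lemma~\ref{lem:postprocessing}). The key observation is that the composed protocol $\widetilde{\Pi}(X) = \shuff(\widetilde{R}(x_1), \ldots, \widetilde{R}(x_n))$, which produces a single uniformly random permutation of the $nmT$ messages $\{R_t(x_i) : i \in [n], t \in [T]\}$, has the same output distribution as a post-processing of the tuple $(\Pi_1(X), \ldots, \Pi_T(X))$, where each $\Pi_t$ is executed with its own independent copy of the shuffler's randomness. Hence the privacy guarantee for the tuple, obtained via Lemma~\ref{lem:composition}, transfers to $\widetilde{\Pi}$ via Lemma~\ref{lem:postprocessing}.

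First I would formalize the simulation. Define the post-processing $F(Y_1, \ldots, Y_T)$ that concatenates the $T$ shuffled vectors into a single list of length $nmT$ and then applies $\shuff$ to this list. For each $t$, the output $Y_t = \Pi_t(X)$ is a uniformly random permutation of the multiset $M_t := \{R_t(x_i) : i \in [n]\}$, and the $Y_t$ are independent across $t$ because the encoders $R_t$ and the shuffles are drawn with independent randomness. Applying a further fresh shuffle to their concatenation yields a uniformly random permutation of $\bigcup_t M_t$, which coincides exactly with the distribution of $\shuff(\widetilde{R}(x_1), \ldots, \widetilde{R}(x_n)) = \widetilde{\Pi}(X)$. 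Thus $\widetilde{\Pi}(X)$ and $F(\Pi_1(X), \ldots, \Pi_T(X))$ are identically distributed, so any $(\eps',\delta'')$-DP guarantee on the tuple carries over to $\widetilde{\Pi}$ by post-processing.

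Next I would invoke Lemma~\ref{lem:composition} on the mechanisms $\Pi_1, \ldots, \Pi_T$, each of which is $(\eps,\delta)$-differentially private by assumption. This yields that $(\Pi_1(X), \ldots, \Pi_T(X))$ is $(\eps', \delta' + T\delta)$-differentially private for any $\delta' > 0$, with $\eps' = \eps(e^\eps - 1)T + \eps\sqrt{2T\log(1/\delta')}$. Combining this with the post-processing observation above, $\widetilde{\Pi} = (\widetilde{R}, \shuff)$ satisfies the same bound, which is the statement of the lemma.

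The only subtle point—and hence the main ``obstacle,'' though a mild one—is verifying that the post-processing step is legitimate: one has to check that re-shuffling the concatenation of $T$ independent uniform permutations of multisets $M_1, \ldots, M_T$ produces a uniform permutation of their multiset union. This is immediate because a uniform permutation of any list is invariant under further independent permutation, so the outer shuffle dominates the distribution and symmetrizes the combined list. Once this is noted, the rest of the argument is a direct application of the previously stated lemmas.
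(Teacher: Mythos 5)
Your proposal is correct and matches the paper's intended argument: the paper derives this lemma as an immediate consequence of Lemma~\ref{lem:composition} and post-processing (Lemma~\ref{lem:postprocessing}), with the single-shuffler protocol $\widetilde{\Pi}$ simulated, exactly as you do, by re-shuffling the concatenation of the $T$ independently shuffled outputs. Your explicit verification that the re-shuffled concatenation is distributed as a uniform permutation of the combined multiset is the only detail the paper leaves implicit, and you handle it correctly.
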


\subsubsection{Local Differential Privacy}

If the shuffler $\shuff$ were replaced with the identity function (i.e.~if it did not randomly permute the messages) then we would be left with exactly the \emph{local model of differential privacy}.  That is, a locally differentially private protocol is a pair of algorithms $P = (R,A)$, and the output of the protocol is $P(X) = A(R(x_1),\dots,R(x_n))$.  A protocol $P$ is differentially private in the local model if and only if the algorithm $R$ is differentially private.  In Section~\ref{sec:lower-bounds} we will see that if $P = (R,\shuff,A)$ is a differentially private protocol in the one-message shuffled model, then $R$ itself must satisfy local differential privacy for non-trivial $(\eps,\delta)$, and thus $(R,A \circ \shuff)$ is a differentially private local protocol for the same problem.


\section{A Protocol for Boolean Sums}
\label{sec:bit-sum}

In this section we describe and analyze a protocol for computing a sum of $\zo$ bits\ifnum\oakland=0, establishing Theorem~\ref{thm:bitsum} in the introduction\fi.

\subsection{The Protocol}

In our model, the data domain is $\cX = \zo$ and the function being computed is $f(x_1,\dots,x_n) = \sum_{i=1}^{n} x_i$.  Our protocol, $P_\lambda$, is specified by a parameter $\lambda \in [0,n]$ that allows us to trade off the level of privacy and accuracy.  Note that $\lambda$ may be a function of the number of users $n$.  We will discuss in Section~\ref{sec:plambda-setting} how to set this parameter to achieve a desired level of privacy.  For intuition, one may wish to think of the parameter $\lambda \approx \frac{1}{\eps^2}$ when $\eps$ is not too small.

The basic outline of $P_\lambda$ is as follows.  Roughly, a random set of $\lambda$ users will choose $y_i$ randomly, and the remaining $n - \lambda$ will choose $y_i$ to be their input bit $x_i$.  The output of each user is the single message $y_i$.  The outputs are then shuffled and the output of the protocol is the sum $\sum_{i=1}^{n} y_i$, shifted and scaled so that it is an unbiased estimator of $\sum_{i=1}^{n} x_i$. 

The protocol is described in Algorithm~\ref{alg:bit-sum-shuffled}. The full name of this protocol is $P^{0/1}_\lambda$, where the superscript serves to distinguish it with the real sum protocol $P^\R_{\lambda,r}$ (Section \ref{sec:real-sum}). Because of the clear context of this section, we drop the superscript. Since the analysis of both the accuracy and utility of the algorithm will depend on the number of users $n$, we will use $P_{n,\lambda}, R_{n,\lambda}, A_{n,\lambda}$ to denote the protocol and its components in the case where the number of users is $n$.


\begin{algorithm}
\ifnum\oakland=1
\caption{The protocol $P^{0/1}_{n,\lambda}=(R^{0/1}_{n,\lambda}, \shuff, A^{0/1}_{n,\lambda})$}
\else
\caption{A shuffled protocol $P^{0/1}_{n,\lambda}=(R^{0/1}_{n,\lambda}, \shuff, A^{0/1}_{n,\lambda})$ 
for computing the sum of bits}
\fi
\label{alg:bit-sum-shuffled}

\tcp{Local Randomizer}
\Fn{$R^{0/1}_{n,\lambda}(x)$}{
	\KwIn{$x\in \zo$, parameters $n \in \N, \lambda\in(0,n)$.}
	\KwOut{$\mathbf{y} \in \zo$} \vspace{5pt}
    
	Let $\mathbf{b} \gets \Ber(\frac{\lambda}{n})$ \\
	\lIf{$\mathbf{b} = 0$}{ \Return{$\mathbf{y} \gets x$} }
	\lElseIf{$\mathbf{b} = 1$}{ \Return{$\mathbf{y} \gets \Ber\left(\half\right)$} }
}

\vspace{10pt}

\tcp{Analyzer}
\Fn{$A^{0/1}_{n,\lambda}(y_1,\dots,y_n)$}{
	\KwIn{$(y_1,\dots,y_n)\in \zo^n$, parameters $n \in \N, \lambda\in(0,n)$.}
	\KwOut{$z\in [0,n]$} \vspace{5pt}
    
	\Return{$z\gets \frac{n}{n-\lambda} \cdot \left( \sum_{i=1}^{n} y_i - \frac{\lambda}{2} \right)$}
}
\end{algorithm}


\subsection{Privacy Analysis} \label{sec:plambda-privacy}

In this section we will prove that $P_{\lambda}$ satisfies $(\eps,\delta)$-differential privacy.  Note that if $\lambda = n$ then the each user's output is independent of their input, so the protocol trivially satisfies $(0,0)$-differential privacy, and thus our goal is to prove an upper bound on the parameter $\lambda$ that suffices to achieve a given $(\eps,\delta)$.

\begin{thm}[Privacy of $P_{\lambda}$]\label{thm:bitsumprivacy}
There are absolute constants $\kappa_1,\dots,\kappa_5$ such that the following holds for $P_\lambda$. For every $n \in \N$, $\delta \in (0,1)$ and $\frac{\kappa_2 \log(1/\delta)}{n}\leq \eps \leq 1$, there exists a $\lambda = \lambda(n,\eps,\delta)$ such that $P_{n,\lambda}$ is $(\eps,\delta)$ differentially private and, 
\begin{equation*}
\lambda \leq 
\begin{cases}
\frac{\kappa_4 \log(1/\delta)}{\eps^2} & \textrm{if $\eps \geq \sqrt{\frac{\kappa_3 \log(1/\delta)}{n}}$} \\
n - \frac{\kappa_5 \eps n^{3/2}}{\sqrt{\log(1/\delta)}} & \textrm{otherwise}
\end{cases}
\end{equation*}
\end{thm}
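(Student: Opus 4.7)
The plan is to exploit the fact that every $y_i \in \{0,1\}$, so the multiset output by the shuffler is in bijection with the scalar count $S := \sum_{i=1}^{n} y_i$. By the post-processing property (Lemma~\ref{lem:postprocessing}), it therefore suffices to show that $S$, viewed as a function of the input $X$, satisfies $(\eps,\delta)$-differential privacy.

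Fix neighboring $X, X'$ differing in one coordinate, say $x_j = 0$ and $x_j' = 1$. Decompose $S = y_j + R$ with $R := \sum_{i \neq j} y_i$. Each $y_i$ is an independent Bernoulli with parameter $x_i(1-p/2) + (1-x_i)(p/2)$ where $p := \lambda/n$, so $R$ has the same law under $X$ and $X'$, while
\begin{equation*}
y_j \sim \mathrm{Ber}(p/2) \text{ under } X, \qquad y_j \sim \mathrm{Ber}(1-p/2) \text{ under } X'.
\end{equation*}
Writing $f_R$ for the pmf of $R$ and $\tau(s) := f_R(s)/f_R(s-1)$, the log density ratio of $S$ equals
\begin{equation*}
\log\frac{\Pr[S=s\mid X]}{\Pr[S=s\mid X']} \;=\; h(\tau(s)), \qquad h(\tau) := \log\frac{(1-p/2)\,\tau + p/2}{(p/2)\,\tau + 1-p/2}.
\end{equation*}
A direct computation gives $h(1) = 0$ and $h'(1) = 1-p$, hence $|h(\tau)| \leq 2(1-p)\,|\tau - 1|$ for $\tau$ sufficiently close to $1$.

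The key technical step is to bound $|\tau(s) - 1|$ on an event of probability at least $1 - \delta$. Because $R$ is a sum of $n-1$ independent indicators with variance $\sigma^2 := (n-1)(p/2)(1-p/2)$, a Chernoff/Bernstein bound shows $|R - \mathbb{E}R| \leq O(\sigma\sqrt{\log(1/\delta)})$ except on a tail of mass at most $\delta$ (under both $X$ and $X'$, since the law of $R$ is common). A standard pmf-ratio calculation for sums of independent Bernoullis then gives $|\tau(s) - 1| = O(\sqrt{\log(1/\delta)/\sigma^2})$ on this concentration window; the tail outputs can be absorbed into the additive $\delta$ slack of $(\eps,\delta)$-DP. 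Combined with the bound on $|h|$, this establishes $(\eps,\delta)$-DP whenever
\begin{equation*}
\eps \;\geq\; \kappa\,(1-p)\,\sqrt{\frac{\log(1/\delta)}{(n-1)(p/2)(1-p/2)}}
\end{equation*}
for some universal constant $\kappa$. Choosing $\lambda = np$ to be the smallest value that makes this inequality hold yields the two cases of the theorem: when $\eps \gtrsim \sqrt{\log(1/\delta)/n}$, the binding regime has $p$ small, so $(1-p)$ and $(1-p/2)$ are both $\approx 1$ and the constraint reduces to $\lambda \gtrsim \log(1/\delta)/\eps^2$, giving the first case; when $\eps$ is smaller, the binding regime has $p$ close to $1$, so $(1-p/2) \approx 1/2$ and the constraint reduces to $n - \lambda \gtrsim \eps n^{3/2}/\sqrt{\log(1/\delta)}$, giving the second case.

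The main technical obstacle I anticipate is establishing the pmf-ratio bound cleanly when $R$ is a sum of two independent binomials $\mathrm{Bin}(k,1-p/2) + \mathrm{Bin}(n-1-k,p/2)$ rather than a single binomial: although the heuristic $|\tau(s) - 1| \asymp |s - \mathbb{E}R|/\sigma^2$ still holds, one needs an explicit inequality uniform over the entire Chernoff window, with constants tracked carefully so that the final privacy bound takes the stated two-regime form and remains valid all the way down to the threshold $\eps \approx \log(1/\delta)/n$ assumed in the hypothesis.
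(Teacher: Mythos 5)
Your proposal is correct in outline and arrives at the right two-regime bound, but it takes a genuinely different route from the paper. You work directly with the marginal law of the count $S=\sum_i y_i$, split off the differing user as $S=y_j+R$, and extract the amplification factor from the likelihood-ratio function $h$; this part of your argument is in fact exact, since $h'(\tau)=\frac{1-p}{[(1-p/2)\tau+p/2]\,[(p/2)\tau+1-p/2]}$, so $|h(\tau)|\le O\!\left((1-p)|\tau-1|\right)$ for $\tau$ bounded away from $0$. The paper instead conditions on the random set $H$ of users who output a uniform bit: given $H$, the mechanism is the sum of the remaining users' true bits plus $\mathrm{Bin}(|H|,1/2)$ noise, whose consecutive pmf ratio is controlled on a Hoeffding window by a two-line binomial computation (Claim~\ref{clm:C_H}); the factor $(1-s/n)$ is then recovered by an amplification-by-sampling argument over whether $j\in H$ (Claim~\ref{clm:C_s}); and a Chernoff bound shows $|H|\ge\lambda-\sqrt{2\lambda\log(2/\delta)}$ with probability $1-\delta/2$ (Claim~\ref{clm:C_lambda}), after which the parameter setting of Lemma~\ref{lem:generate-lambda} produces exactly the two cases you describe. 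Your route buys an exact, lemma-free treatment of the differing user (the $(1-p)$ factor falls out of $h'(1)$ rather than a generic subsampling argument); the paper's route buys a noise distribution whose pmf ratio is trivial to bound.

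The one genuine gap is the step you flag yourself: the uniform bound $|\tau(s)-1|=O(\sqrt{\log(1/\delta)}/\sigma)$ over the whole concentration window when $R$ is a Poisson-binomial, i.e.\ a convolution $\mathrm{Bin}(k,1-p/2)+\mathrm{Bin}(n-1-k,p/2)$. The statement is true, but it is not a routine calculation: writing $f_R(s)=\sum_a f_A(a)f_B(s-a)$, the consecutive ratio of $f_B$ is not uniformly close to $1$ over all $a$ contributing to the sum, so one needs either a quantitative local limit theorem or a split of the convolution together with an anti-concentration lower bound on $f_R$ inside the window, with constants tracked down to the regime $\eps\approx\log(1/\delta)/n$. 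The cheap fix is one more conditioning step, which essentially collapses your argument onto the paper's: condition on the coins $b_i$ of the users $i\neq j$ (these are data-independent mixture labels, so it suffices to prove the $(\eps,\delta)$ inequality per realization and absorb the low-probability realizations with few randomizers, exactly as in Claim~\ref{clm:C_lambda}). Given those coins, $R$ is a data-independent-shaped shifted $\mathrm{Bin}(s',1/2)$, the shift cancels between $X$ and $X'$, and your function $h$ applied to the single-binomial ratio yields $\eps\approx(1-p)\sqrt{\log(1/\delta)/s'}$, matching Corollary~\ref{coro:P_lambda-private}; the final choice of $\lambda$ is then identical to Lemma~\ref{lem:generate-lambda}. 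With that modification your proof closes; without it, the Poisson-binomial ratio lemma must be proved in full.
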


\noindent In the remainder of this section we will prove Theorem~\ref{thm:bitsumprivacy}.

The first step in the proof is the observation that the output of the shuffler depends only on $\sum_i y_i$. It will be more convenient to analyze the algorithm $C_{\lambda}$ (Algorithm \ref{alg:bit-sum}) that simulates $\shuff(R_\lambda(x_1),\dots,R_{\lambda}(x_n))$. Claim~\ref{clm:equivalence-to-C_lambda} shows that the output distribution of $C_{\lambda}$ is indeed the same as that of the output $\sum_{i} y_i$.  Therefore, privacy of $C_{\lambda}$ carries over to $P_{\lambda}$.


\begin{algorithm}
\caption{$C_{\lambda}(x_1\dots x_n)$}
\label{alg:bit-sum}
	\KwIn{$(x_1\dots x_n) \in \zo^n$, parameter $\lambda\in(0,n)$.}
	\KwOut{$\mathbf{y} \in \{0,1,2,\dots,n\}$} \vspace{5pt}
    
  Sample $\mathbf{s} \gets \Bin\left(n,\frac{\lambda}{n}\right)$ \\
  Define $\mathcal{H}_s = \{ H \subseteq [n] : |H| = s\}$ and choose $\mathbf{H} \gets \mathcal{H}_{s}$ uniformly at random \\
  \Return{$\mathbf{y} \gets \sum_{i\notin \mathbf{H}} x_i + \Bin\left(s,\half\right)$ }
\end{algorithm}

\begin{clm}
\label{clm:equivalence-to-C_lambda}
For every $n\in \N$, $x\in \zo^n$, and every $r\in\{0,1,2,\dots,n\}$, \[
\pr{}{C_{\lambda}(X)=r}=\pr{}{\sum_{i=1}^n R_{n,\lambda}(x_i) =r}\]
\end{clm}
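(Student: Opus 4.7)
The plan is to show that the two random processes produce the same distribution by exhibiting an explicit coupling, or equivalently by matching the law of $\sum_i R_{n,\lambda}(x_i)$ step by step against the generative description of $C_\lambda$.

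First I would introduce the latent randomness in $R_{n,\lambda}$ explicitly: for each user $i$, sample $\mathbf{b}_i \sim \mathrm{Ber}(\lambda/n)$ independently, and when $\mathbf{b}_i = 1$ sample a fresh bit $\mathbf{u}_i \sim \mathrm{Ber}(1/2)$. Let $\mathbf{H} = \{i : \mathbf{b}_i = 1\}$. Then by construction
\begin{equation*}
\sum_{i=1}^n R_{n,\lambda}(x_i) \;=\; \sum_{i \notin \mathbf{H}} x_i \;+\; \sum_{i \in \mathbf{H}} \mathbf{u}_i .
\end{equation*}

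Next I would peel off the conditioning on the size of $\mathbf{H}$. Since the $\mathbf{b}_i$ are i.i.d.\ $\mathrm{Ber}(\lambda/n)$, the cardinality $\mathbf{s} = |\mathbf{H}|$ has distribution $\mathrm{Bin}(n, \lambda/n)$, and conditional on $\mathbf{s} = s$ the set $\mathbf{H}$ is uniformly distributed over $\mathcal{H}_s = \{H \subseteq [n] : |H| = s\}$; this is a standard fact about i.i.d.\ Bernoulli indicators. Conditional on $\mathbf{H}$, the bits $\{\mathbf{u}_i : i \in \mathbf{H}\}$ are independent fair coins, so $\sum_{i \in \mathbf{H}} \mathbf{u}_i \sim \mathrm{Bin}(|\mathbf{H}|, 1/2)$ and is independent of $\mathbf{H}$ given its size. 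This matches exactly the three sampling steps in $C_\lambda$: draw $\mathbf{s} \sim \mathrm{Bin}(n, \lambda/n)$, draw $\mathbf{H}$ uniformly from $\mathcal{H}_{\mathbf{s}}$, and return $\sum_{i \notin \mathbf{H}} x_i + \mathrm{Bin}(\mathbf{s}, 1/2)$.

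Finally I would conclude by law of total probability, summing over $s \in \{0, \ldots, n\}$ and over $H \in \mathcal{H}_s$, that the two joint descriptions assign the same probability to any value $r \in \{0, 1, \ldots, n\}$. There is no real obstacle here; the claim is essentially a restatement of the sampling procedure and the only thing to be careful about is that the ``uniform random subset of size $s$'' in $C_\lambda$ is genuinely the conditional distribution of $\mathbf{H}$ given $|\mathbf{H}| = s$, which follows from exchangeability of the $\mathbf{b}_i$.
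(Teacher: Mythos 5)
Your proposal is correct and follows essentially the same route as the paper: both arguments condition on the random set $\mathbf{H}$ of users who randomize and use the fact that drawing $\mathbf{s}\sim\mathrm{Bin}(n,\lambda/n)$ and then a uniform size-$\mathbf{s}$ subset is the same as taking the set of indices with i.i.d.\ $\mathrm{Ber}(\lambda/n)$ indicators, so that both processes reduce to $\sum_{i\notin \mathbf{H}} x_i + \mathrm{Bin}(|\mathbf{H}|,\tfrac12)$. The paper phrases this as a termwise comparison of sums over subsets weighted by $(\lambda/n)^{|H|}(1-\lambda/n)^{n-|H|}$, which is just the computational form of your coupling, so no substantive difference.
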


\ifnum\oakland=0
\begin{proof}
Fix any $r\in \{0,1,2,\dots,n\}$. \anote{How should we lead into this calculation?}
\begin{align*}
\pr{}{C_{\lambda}(X)=r} &= \sum_{H\subseteq[n]} \pr{}{C_{\lambda}(X)=r \cap \mathbf{H}=H}\\
&= \sum_{H\subseteq [n]} \pr{}{\sum_{i\notin H} x_i + \mathit{Bin}\left(|H|,\half\right) =r} \cdot\left(\frac{\lambda}{n}\right)^{|H|}\left(1-\frac{\lambda}{n}\right)^{n-|H|}\\
&= \sum_{H\subseteq [n]} \pr{}{\sum_{i\notin H} x_i + \sum_{i\in H} \mathit{Ber}\left(\half\right) =r} \cdot\left(\frac{\lambda}{n}\right)^{|H|}\left(1-\frac{\lambda}{n}\right)^{n-|H|} \stepcounter{equation} \label{eq:breakdown-Clambda} \tag{\theequation}
\end{align*}
\noindent Let $\mathbf{G}$ denote the (random) set of people for whom $b_i= 1$ in $P_{\lambda}$. Notice that
\begin{align*}
\pr{}{\sum_{i=1}^n R_{n,\lambda}(x_i)=r} &= \sum_{G \subseteq [n]} \pr{}{\sum_i R_{n,\lambda}(x_i) = r \cap \mathbf{G} = G}\\
&= \sum_{G\subseteq [n]} \pr{}{\sum_{i\notin G} x_i + \sum_{i\in G}\mathit{Ber}\left(\half\right) =r} \\
&~~~~~\cdot\left(\frac{\lambda}{n}\right)^{|G|}\left(1-\frac{\lambda}{n}\right)^{n-|G|}
\end{align*}
which is the same as \eqref{eq:breakdown-Clambda}. This concludes the proof.
\end{proof}
\fi

Now we establish that in order to demonstrate privacy of $P_{n,\lambda}$, it suffices to analyze $C_{\lambda}$.
\begin{clm}
\label{clm:sufficient-for-privacy}
If $C_{\lambda}$ is $(\eps,\delta)$ differentially private, then $P_{n,\lambda}$ is $(\eps,\delta)$ differentially private.
\end{clm}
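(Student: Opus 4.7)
The plan is to show that the shuffled output $\Pi_{R_{n,\lambda}}(X) = \shuff(R_{n,\lambda}(x_1),\dots,R_{n,\lambda}(x_n))$ is a post-processing of the sum $\sum_{i=1}^n R_{n,\lambda}(x_i)$, and then chain this observation with Claim~\ref{clm:equivalence-to-C_lambda} and the post-processing lemma.

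First I would note the following elementary fact about uniform shuffles of binary strings: if $(y_1,\dots,y_n) \in \zo^n$ and $\pi$ is a uniformly random permutation of $[n]$, then the distribution of $(y_{\pi(1)},\dots,y_{\pi(n)})$ depends only on the Hamming weight $\sum_i y_i$. Concretely, conditioned on $\sum_i y_i = k$, the shuffled tuple is uniformly distributed over all $\binom{n}{k}$ binary strings of weight $k$. Consequently, there exists a (randomized) function $F \from \{0,1,\dots,n\} \to \zo^n$ such that $\shuff(y_1,\dots,y_n)$ is distributed identically to $F(\sum_i y_i)$ for every input $(y_1,\dots,y_n) \in \zo^n$. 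In particular, $\Pi_{R_{n,\lambda}}(X)$ has the same distribution as $F\bigl(\sum_{i=1}^n R_{n,\lambda}(x_i)\bigr)$.

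Next I would invoke Claim~\ref{clm:equivalence-to-C_lambda}, which says that $\sum_{i=1}^n R_{n,\lambda}(x_i)$ and $C_\lambda(X)$ are identically distributed on every input $X$. Combining this with the previous step, $\Pi_{R_{n,\lambda}}(X)$ is distributed identically to $F(C_\lambda(X))$, i.e., the shuffled view is a post-processing of $C_\lambda$. Since $F$ does not depend on the input dataset $X$, Lemma~\ref{lem:postprocessing} (post-processing) immediately implies that if $C_\lambda$ is $(\eps,\delta)$-differentially private, then so is $\Pi_{R_{n,\lambda}}$. By Definition~\ref{def:mixnet-dp}, this is exactly the statement that $P_{n,\lambda}$ is $(\eps,\delta)$-differentially private.

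I do not expect any serious obstacle here; the only subtlety worth being careful about is the direction of the reduction. We are not claiming that $C_\lambda$ is a post-processing of the shuffled output (which would go the wrong way), but rather that the shuffled output is a post-processing of the sum, which is the correct direction for transferring privacy from $C_\lambda$ to $P_{n,\lambda}$. The binary-alphabet assumption is essential here, since for larger alphabets the shuffled multiset carries strictly more information than a single sum, and this simple reduction would fail.
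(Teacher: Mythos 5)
Your argument is correct and is essentially the paper's own proof: your randomized map $F$ (count $\mapsto$ uniformly random binary string of that weight) is exactly the post-processing function $T$ the paper constructs, and both proofs chain Claim~\ref{clm:equivalence-to-C_lambda} with the post-processing lemma (Lemma~\ref{lem:postprocessing}); the only cosmetic difference is that the paper verifies the distributional identity $\shuff(R_{n,\lambda}(X)) \sim T(C_\lambda(X))$ by an explicit calculation over strings of a given weight, whereas you argue it via conditional uniformity of the shuffle given the Hamming weight. Your closing remarks about the direction of the reduction and the role of the binary alphabet are also accurate.
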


\ifnum\oakland=0
\begin{proof}
Fix any number of users $n$.  Consider the randomized algorithm $T:\{0,1,2,\dots,n\} \rightarrow \bits^n$ that takes a number $r$ and outputs a uniformly random string $z$ that has $r$ ones. If $C_{\lambda}$ is differentially private, then the output of $T\circ C_{\lambda}$ is $(\eps,\delta)$ differentially private by the post-processing lemma.

To complete the proof, we show that for any $X\in\cX^n$ the output of $(T\circ C_{\lambda})(X)$ has the same distribution as $\shuff(R_{\lambda}(x_1),\dots R_{\lambda}(x_n))$.  Fix some vector $Z \in \zo^{n}$ with sum $r$
\begin{align*}
\pr{T,C_{\lambda}}{T(C_{\lambda}(X))=Z} &= \pr{}{T(r)=Z}\cdot \pr{}{C_{\lambda}(X)=r}\\
&= \textstyle\binom{n}{r}^{-1}\cdot \pr{}{C_{\lambda}(X)=r} \\
&= \textstyle\binom{n}{r}^{-1}\cdot \pr{}{f(R_{n,\lambda}(X))=r} \tag{Claim \ref{clm:equivalence-to-C_lambda}}\\
&= {\textstyle\binom{n}{r}^{-1}}\cdot \sum_{Y\in \zo^{n} : |Y| = r} \pr{}{R_{n,\lambda}(X)=Y} \\
&= \sum_{Y\in \zo^{n} : |Y| = r} \pr{}{R_{n,\lambda}(X)=Y}\cdot \pr{}{\shuff(Y)=Z}\\
&= \pr{R_{n,\lambda},\shuff}{\shuff(R_{n,\lambda}(X))=Z}
\end{align*}
This completes the proof of Claim \ref{clm:sufficient-for-privacy}.
\end{proof}
\fi

We will analyze the privacy of $C_{\lambda}$ in three steps.  First we show that for \emph{any} sufficiently large $H$, the final step (encapsulated by Algorithm \ref{bit-sum-central-H}) will ensure differential privacy for some parameters.  When then show that for \emph{any} sufficiently large value $s$ and $H$ chosen \emph{randomly} with $|H| = s$, the privacy parameters actually improve significantly in the regime where $s$ is close to $n$; this sampling of $H$ is performed by Algorithm \ref{bit-sum-central-s}.  Finally, we show that when $s$ is chosen \emph{randomly} then $s$ is sufficiently large with high probability.

\begin{algorithm}
\caption{$C_H(x_1\dots x_n)$}
\label{bit-sum-central-H}

	\KwIn{$(x_1\dots x_n)\in \zo^n$, parameter $H\subseteq [n]$.}
	\KwOut{$\mathbf{y}_H \in \{0,1,2,\dots,n\}$} \vspace{5pt}

Let $\mathbf{B}\leftarrow \mathit{Bin}\left(|H|,\half\right)$ \\
\Return{$\mathbf{y}_H \gets \sum_{i\notin H} x_i + \mathbf{B}$}
\end{algorithm}

\begin{clm}
\label{clm:C_H}
For any $\delta>0$ and any $H\subseteq [n]$ such that $|H|>8\log \frac{4}{\delta}$, $C_H$ is $(\eps,\frac{\delta}{2})$-differentially private for
\ifnum\oakland=0
\[
\eps = \ln\left(1+\sqrt{\frac{32 \log\frac{4}{\delta}}{|H|}}\right) < \sqrt{\frac{32 \log\frac{4}{\delta}}{|H|}}
\]
\else
$\eps \leq \sqrt{\frac{32 \log\frac{4}{\delta}}{|H|}}$
\fi
\end{clm}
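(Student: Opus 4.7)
\noindent\textbf{Proof proposal for Claim~\ref{clm:C_H}.}
My plan is to reduce the problem to the one-dimensional question of how indistinguishable the random variable $B\sim\mathrm{Bin}(|H|,1/2)$ is from $B+1$, and then to control the pointwise likelihood ratio of the binomial on a high-probability ``typical set.''

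First I would observe that for neighboring datasets $X\sim X'$, the fixed term $\sum_{i\notin H}x_i$ either coincides in $X$ and $X'$ (if the differing index lies in $H$, in which case $C_H(X)$ and $C_H(X')$ are identically distributed and there is nothing to prove), or differs by exactly $1$. Thus it suffices to show that $B$ and $B+1$ are $(\eps,\delta/2)$-indistinguishable, where $B\sim\mathrm{Bin}(s,1/2)$ and $s=|H|$; by post-composition with the deterministic shift, this transfers to $C_H$.

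Next, I would define the typical set $G=\{k:|k-s/2|\le \tau\}$ with $\tau=\sqrt{(s/2)\log(4/\delta)}$. Hoeffding's inequality gives $\Pr[B\notin G]\le 2\exp(-2\tau^2/s)=\delta/2$, and the same bound holds for $B+1$ up to shifting $G$ by $1$ (which I would absorb by slightly enlarging the threshold, using the hypothesis $s>8\log(4/\delta)$ to keep $\tau+1\le s/4$). The key pointwise estimate is then
\begin{mymath}
\frac{\Pr[B=k]}{\Pr[B=k-1]}=\frac{s-k+1}{k},
\end{mymath}
and for every $k\in G$ (shifted appropriately) with $|k-s/2|\le\tau+1\le s/4$ I would show that both this ratio and its reciprocal are bounded by $1+\sqrt{32\log(4/\delta)/s}=e^{\eps}$. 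The algebra is routine: writing $k=s/2-u$ with $|u|\le s/4$, the ratio becomes $1+(2u+1)/(s/2-u)$, and using $s/2-u\ge s/4$ together with the choice of $\tau$ yields $(2u+1)/(s/2-u)\le \sqrt{32\log(4/\delta)/s}$.

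Finally, I would assemble the standard ``typical set'' argument: for any measurable $T$,
\begin{mymath}
\Pr[B\in T]\le \Pr[B\in T\cap G]+\Pr[B\notin G]\le e^{\eps}\Pr[B+1\in T\cap G]+\delta/2\le e^{\eps}\Pr[B+1\in T]+\delta/2,
\end{mymath}
and symmetrically for the reverse direction. This yields $(\eps,\delta/2)$-differential privacy for the comparison $B$ vs.\ $B+1$, hence for $C_H$. The main obstacle, as always in this kind of argument, is matching constants: the threshold $\tau$ must be chosen large enough that Hoeffding delivers the desired $\delta/2$ tail, yet small enough (relative to $s/4$) that the binomial likelihood ratio in $G$ is controlled by exactly $1+\sqrt{32\log(4/\delta)/s}$. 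The hypothesis $|H|>8\log(4/\delta)$ is precisely what makes these two demands compatible.
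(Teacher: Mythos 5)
Your proposal follows essentially the same route as the paper's proof: reduce to comparing $B\sim\mathrm{Bin}(|H|,1/2)$ with $B+1$ (after disposing of the case where the differing coordinate lies in $H$), restrict to a Hoeffding typical window of half-width $\sqrt{\tfrac{1}{2}|H|\log\frac{4}{\delta}}$, bound the pointwise binomial likelihood ratio on that window, and assemble via the standard typical-set argument; your explicit treatment of both directions of the ratio is, if anything, more careful than the paper's ``without loss of generality.'' The one concrete issue is the closing algebra: writing $s=|H|$ and $\tau=\sqrt{(s/2)\log\frac{4}{\delta}}$, if you enlarge the window to $|u|\le\tau+1$ and use the crude bound $s/2-u\ge s/4$, you get $(2u+1)/(s/2-u)\le(8\tau+12)/s=\sqrt{32\log(4/\delta)/s}+12/s$, which overshoots the claimed constant; and near the boundary $s\approx 8\log\frac{4}{\delta}$ the claim is essentially tight (both $e^{\eps}-1$ and the true ratio are about $2$), so there is no slack to absorb the extra additive terms. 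To recover the exact bound $e^{\eps}=1+\sqrt{32\log(4/\delta)/|H|}$ you should run the tighter chain the paper uses: keep the deviation at $u=\tau$ (the shift by one can be absorbed into the $k+1$ appearing in the exact ratio $\frac{s-k}{k+1}$ rather than by widening the window), bound the ratio by $\frac{s/2+u}{s/2-u}=\frac{u+\log(4/\delta)}{u-\log(4/\delta)}=1+\frac{2\log(4/\delta)}{u-\log(4/\delta)}$, and then use $u\ge 2\log\frac{4}{\delta}$ (equivalent to $|H|\ge 8\log\frac{4}{\delta}$) to conclude this is at most $1+\frac{4\log(4/\delta)}{u}=1+\sqrt{32\log(4/\delta)/|H|}$. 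With that tightening, your argument is the paper's proof.
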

\begin{proof}
Fix neighboring datasets $X\sim X'\in \zo^n$, any $H\subseteq [n]$ such that $|H|> 8\log\frac{4}{\delta}$, and any $\delta>0$. If the point at which $X,X'$ differ lies within $H$, the two distributions $C_H(X),C_H(X')$ are identical. Hence, without loss of generality we assume that $x_j=0$ and $x_j'=1$ for some $j \not\in H$.


Define $u:=\sqrt{\half|H|\log\frac{4}{\delta}}$ and $I_u:= \left(\half|H|-u, \half|H|+u \right)$ so that by Hoeffding's inequality\ifnum\makeappendix=1 (Theorem \ref{thm:hoeffding})\fi, $\pr{}{\mathbf{B} \in I_u} < \half\delta$. For any $W\subseteq \{0,1,2,\dots,n\}$ we have,
\ifnum\oakland=1
\begin{align*}
\pr{}{C_H(X)\in W}
&\leq \pr{}{C_H(X)\in W \cap \mathbf{B} \in I_u} + \tfrac12 \delta\\
&= \sum_{r \in W\cap I_u} \mathbb{P}\big[\mathbf{B}+ \textstyle\sum_{i\notin H} x_i = r \big] + \half \delta
\end{align*}
\else
\begin{align*}
\pr{}{C_H(X)\in W} &= \pr{}{C_H(X)\in W \cap \mathbf{B} \in I_u} + \pr{}{C_H(X)\in W \cap \mathbf{B} \notin I_u}\\
&\leq \pr{}{C_H(X)\in W \cap \mathbf{B} \in I_u} + \half \delta\\
&= \sum_{r \in W\cap I_u} \pr{}{\mathbf{B}+\sum_{i\notin H} x_i = r} + \half \delta  
\end{align*}
\fi

\noindent Thus to complete the proof, it suffices to show that for any $H$ and $r \in W\cap I_u$
\begin{equation}
\frac{\pr{}{\mathbf{B}+\sum_{i\notin H} x_i =r}}{\pr{}{\mathbf{B}+\sum_{i\notin H} x_i' =r}} \leq 1+\sqrt{\frac{32 \log\frac{4}{\delta}}{|H|}} \label{eq:C_H-core}
\end{equation}

Because $x_j=0,x_j'=1$ and $j\notin H$, we have $\sum_{i\notin H} x_i = \sum_{i\notin H} x_i' - 1$. Thus,
\ifnum\oakland=1
\begin{align*}
\frac{\pr{}{\mathbf{B}+\sum_{i\notin H} x_i =r}}{\pr{}{\mathbf{B}+\sum_{i\notin H} x_i' =r}} = \frac{\pr{}{\mathbf{B}= \left(r-\sum_{i\notin H} x_i'\right)+1}}{\pr{}{\mathbf{B} = \left(r-\sum_{i\notin H} x_i'\right)}}
\end{align*}
\else
\begin{multline*}
\frac{\pr{}{\mathbf{B}+\sum_{i\notin H} x_i =r}}{\pr{}{\mathbf{B}+\sum_{i\notin H} x_i' =r}} = \frac{\pr{}{\mathbf{B}+\sum_{i\notin H} x_i' - 1 =r}}{\pr{}{\mathbf{B}+\sum_{i\notin H} x_i' =r}} 
\\
= \frac{\pr{}{\mathbf{B}= \left(r-\sum_{i\notin H} x_i'\right)+1}}{\pr{}{\mathbf{B} = \left(r-\sum_{i\notin H} x_i'\right)}}
\end{multline*}
\fi
Now we define $k = r-\sum_{i\notin H} x_i'$ so that $$\frac{\pr{}{\mathbf{B}= \left(r-\sum_{i\notin H} x_i'\right)+1}}{\pr{}{\mathbf{B} = \left(r-\sum_{i\notin H} x_i'\right)}} = \frac{\pr{}{\mathbf{B} = k+1}}{\pr{}{\mathbf{B} = k}}.$$
\ifnum\oakland=1
The remainder of the proof is a calculation involving the binomial distribution and the parameters $u,|H|$,
\begin{align*}
\frac{\pr{}{\mathbf{B} = k+1}}{\pr{}{\mathbf{B} = k}} 
= \frac{|H|-k}{k+1}
&\leq \frac{|H|-(\half|H|-u)}{\half|H|-u+1} \\
&\leq 1+ \sqrt{\frac{32\log\frac{4}{\delta}}{|H|}}
\end{align*}
\else
Then we can calculate
\begin{align*}
\frac{\pr{}{\mathbf{B} = k+1}}{\pr{}{\mathbf{B} = k}} 
&= \frac{|H|-k}{k+1} \tag{$\mathbf{B}$ is binomial}\\
&\leq \frac{|H|-(\half|H|-u)}{\half|H|-u+1} \tag{$r \in I_{u}$ so $k \geq \half |H| - u$}\\
&< \frac{\half|H|+u}{\half|H|-u}=\frac{u^2/(\log\frac{4}{\delta})+u}{u^2/(\log\frac{4}{\delta})-u} \tag{$u = \sqrt{\half|H|\log\frac{4}{\delta}}$}   \\
&=\frac{u+\log\frac{4}{\delta}}{u-\log\frac{4}{\delta}} =1 + \frac{2\log \frac{4}{\delta}}{u-\log\frac{4}{\delta}} = 1 + \frac{2\log \frac{4}{\delta}}{\sqrt{\half|H|\log \frac{4}{\delta}}-\log\frac{4}{\delta}}\\
&\leq 1+ \frac{4\log \frac{4}{\delta}}{\sqrt{\half|H|\log \frac{4}{\delta}}} = 1+ \sqrt{\frac{32\log\frac{4}{\delta}}{|H|}} \tag{$|H| > 8\log\frac{4}{\delta}$}
\end{align*}
\fi
which completes the proof.
\end{proof}

Next, we consider the case where $H$ is a \emph{random} subset of $[n]$ with a \emph{fixed} size $s$.  In this case we will use an \emph{amplification via sampling argument}~\cite{KasiviswanathanLNRS08,adamSampleSecrecy} to argue that the randomness of $H$ improves the privacy parameters by a factor of roughly $(1-\frac{s}{n})$, which will be crucial when $s \approx n$. 
\ifnum\oakland=1 We omit the proof for space. \fi

\begin{algorithm}
\caption{$C_s(x_1,\dots,x_n)$}
\label{bit-sum-central-s}

	\KwIn{$(x_1,\dots,x_n)\in \zo^n$, parameter $s\in \{0,1,2,\dots,n\}$.}
	\KwOut{$\mathbf{y}_s \in \{0,1,2,\dots,n\}$} \vspace{5pt}

  Define $\mathcal{H}_s = \{ H \subseteq [n] : |H| = s\}$ and choose $\mathbf{H} \gets \mathcal{H}_{s}$ uniformly at random\\
  \Return{$\mathbf{y}_s \gets C_\mathbf{H}(x)$}
\end{algorithm}

\begin{clm}
\label{clm:C_s}
For any $\delta>0$ and any $s>8\log\frac{4}{\delta}$, $C_s$ is $( \eps ,\half\delta)$ differentially private for
$$
\eps = \sqrt{\frac{32\log\frac{4}{\delta}}{s}} \cdot \left(1-\frac{s}{n}\right)
$$
\end{clm}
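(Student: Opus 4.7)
The plan is to reduce Claim~\ref{clm:C_s} to Claim~\ref{clm:C_H} via an amplification-by-subsampling argument. Observe that $C_s$ fits the subsampling framework exactly: on input $X$, it draws a uniformly random subset $H \subseteq [n]$ of fixed size $s$ and then runs the fixed-$H$ mechanism $C_H$, whose output depends only on the coordinates \emph{outside} $H$. By Claim~\ref{clm:C_H}, for every fixed $H$ with $|H| = s > 8 \log(4/\delta)$, the inner mechanism $C_H$ is $(\eps_0, \delta/2)$-differentially private with
\[
\eps_0 \;=\; \ln\!\bigl(1 + \sqrt{32\log(4/\delta)/s}\bigr),
\qquad \text{so} \qquad e^{\eps_0} - 1 \;=\; \sqrt{32\log(4/\delta)/s}.
\]

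Now fix neighbors $X \sim X'$ differing at coordinate $j$. The key structural fact enabling amplification is that conditional on $j \in H$---which occurs with probability exactly $s/n$ by symmetry of uniform subset sampling---the output distributions $C_H(X)$ and $C_H(X')$ are \emph{identical}, since $C_H$ ignores coordinates inside $H$. Only conditional on $j \notin H$ (probability $1 - s/n$) does Claim~\ref{clm:C_H} meaningfully constrain the ratio of output probabilities, namely by $e^{\eps_0}$ up to additive $\delta/2$. Applying the amplification-by-subsampling lemma of~\cite{KasiviswanathanLNRS08,adamSampleSecrecy} with effective per-coordinate inclusion rate $p = 1 - s/n$ (the probability that $j$ lies outside $H$ and is therefore ``used''), we obtain
\[
\eps \;\leq\; \log\!\bigl(1 + (1-s/n)(e^{\eps_0}-1)\bigr) \;\leq\; (1-s/n)\sqrt{32\log(4/\delta)/s},
\]
where the last step uses $\log(1+x) \leq x$ together with the expression for $e^{\eps_0}-1$. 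The failure term degrades to at most $(1-s/n)\cdot \delta/2 \leq \delta/2$, matching the statement.

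The main subtlety to address is that standard statements of amplification by subsampling assume \emph{independent} (Bernoulli) inclusion, whereas here $H$ is uniform over fixed-size subsets. The per-coordinate marginal inclusion probability is still $s/n$ by symmetry, and the same leading bound $\log(1 + p(e^{\eps_0}-1))$ continues to hold for uniform fixed-size sampling; I would justify this either by directly rerunning the amplification proof in the fixed-size setting (only the single-coordinate distributional structure of the sample is used) or by coupling $H$ with an independent Bernoulli sample of rate $s/n$ conditioned on having size exactly $s$. Verifying that the claimed $(1-s/n)$-factor amplification survives this change---as opposed to collapsing back to $e^{\eps_0}$ in a naive pointwise conditioning argument---is the main technical hurdle of the proof.
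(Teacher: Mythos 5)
Your proposal is correct and follows essentially the same route as the paper: the paper's proof of Claim~\ref{clm:C_s} is itself an amplification-by-sampling argument (citing the same references), carried out inline in the fixed-size setting by conditioning on whether the differing coordinate $j$ lies in $\mathbf{H}$, using that the conditional distributions coincide when $j\in\mathbf{H}$ and invoking Claim~\ref{clm:C_H} when $j\notin\mathbf{H}$, then bounding $1+(1-s/n)\,\eps_0(s)\le \exp\bigl((1-s/n)\,\eps_0(s)\bigr)$ exactly as in your $\log\bigl(1+p(e^{\eps_0}-1)\bigr)$ computation. In other words, the ``main technical hurdle'' you flag is resolved by the first option you sketch (rerunning the amplification proof directly for uniform fixed-size $H$), which is precisely what the paper does.
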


\ifnum\oakland=0

\begin{proof}
As in the previous section, fix $X\sim X'\in \zo^n$ where $x_j=0, x_j'=1$. $C_s(X)$ selects $\mathbf{H}$ uniformly from $\mathcal{H}_s$ and runs $C_H(X)$; let $H$ denote the realization of $\mathbf{H}$. To enhance readability, we will use the shorthand $\eps_0(s) := \sqrt{\frac{32\log\frac{4}{\delta}}{s}}$. For any $W\subset \{0, 1, 2, \dots, n\}$, we aim to show that
\[
\frac{\pr{\mathbf{H},C_\mathbf{H}}{C_\mathbf{H}(X)\in W}-\half\delta}{\pr{\mathbf{H},C_\mathbf{H}}{C_\mathbf{H}(X')\in W}} \leq \exp\left(\eps_0(s) \cdot \left(1-\frac{s}{n}\right) \right)
\]

First, we have 
\begin{align}
&\frac{\pr{\mathbf{H},C_\mathbf{H}}{C_\mathbf{H}(X)\in W}-\half\delta}{\pr{\mathbf{H},C_\mathbf{H}}{C_\mathbf{H}(X')\in W}} \notag \\
={} &\frac{ \pr{\mathbf{H},C_\mathbf{H}}{C_\mathbf{H}(X)\in W \cap j\in \mathbf{H}} + \pr{\mathbf{H},C_\mathbf{H}}{C_\mathbf{H}(X)\in W\cap j\notin \mathbf{H}} -\half\delta}{\pr{\mathbf{H},C_\mathbf{H}}{C_\mathbf{H}(X')\in W\cap j\in \mathbf{H}} + \pr{\mathbf{H},C_\mathbf{H}}{C_\mathbf{H}(X')\in W\cap j\notin \mathbf{H}} }  \notag \\
={} &\frac{(1-p)\gamma(X)+p\zeta(X)-\half\delta}{(1-p)\gamma(X')+p\zeta(X')} \label{eq:amplification1}
\end{align}

where $p := \pr{}{j\notin \mathbf{H}}=(1-s/n)$, $$\gamma(X) := \pr{C_\mathbf{H}}{C_\mathbf{H}(X)\in W\mid j\in \mathbf{H}}\quad\textrm{and}\quad\zeta(X) := \pr{C_\mathbf{H}}{C_\mathbf{H}(X)\in W\mid j\notin \mathbf{H}}\, .$$
When user $j$ outputs a uniformly random bit, their private value has no impact on the distribution.  Hence, $\gamma(X)=\gamma(X')$, and 
\begin{equation} \label{eq:amplification2}
\eqref{eq:amplification1} = \frac{(1-p)\gamma(X)+p\zeta(X)-\half\delta}{(1-p)\gamma(X)+p\zeta(X')}
\end{equation}
Since $s = |H|$ is sufficiently large, by Claim~\ref{clm:C_H} we have $\zeta(X) \leq (1+\eps_0(s)) \cdot \min\{ \zeta(X'), \gamma(X) \} + \half\delta$.
\begin{align}
\eqref{eq:amplification2} &\leq \frac{(1-p)\gamma(X)+p\cdot(1+\eps_0(s)) \cdot \min\{\zeta(X'),\gamma(X)\}+\delta)-\half\delta}{(1-p)\gamma(X)+p\zeta(X')} \notag \\
&\leq \frac{(1-p)\gamma(X)+p\cdot (1+\eps_0(s)) \cdot\min\{\zeta(X'),\gamma(X)\}}{(1-p)\gamma(X)+p\zeta(X')} \notag \\
&= \frac{(1-p)\gamma(X)+p\cdot \min (\zeta(X'),\gamma(X)) + p\cdot \eps_0(s) \cdot \min\{\zeta(X'),\gamma(X)\}}{(1-p)\gamma(X)+p\zeta(X')} \notag \\
&\leq \frac{(1-p)\gamma(X)+p\zeta(X') + p\cdot \eps_0(s) \cdot \min\{\zeta(X'),\gamma(X)\}}{(1-p)\gamma(X)+p\zeta(X')} \notag \\
&= 1+ \frac{p\cdot \eps_0(s) \cdot \min\{\zeta(X'),\gamma(X)\}}{(1-p)\gamma(X)+p\zeta(X')} \label{eq:amplification3}
\end{align}

Observe that $\min\{\zeta(X'),\gamma(X)\} \leq (1-p) \gamma(X) + p \zeta(X')$, so
\begin{align*}
\eqref{eq:amplification3} \leq 1+ p\cdot \eps_0(s)
= 1+\eps_0(s)\cdot\left(1-\frac{s}{n}\right)
&\leq \exp\left( \eps_0(s)\cdot \left(1-\frac{s}{n}\right) \right)\\
&= \exp\left( \sqrt{\frac{32\log(4/\delta)}{s}} \cdot \left(1-\frac{s}{n}\right) \right)
\end{align*}
which completes the proof.
\end{proof}
\else
\fi

We now come to the actual algorithm $C_{\lambda}$, where $s$ is not fixed but is random.  The analysis of $C_{s}$ yields a bound on the privacy parameter that increases with $s$, so we will complete the analysis of $C_{\lambda}$ by using the fact that, with high probability, $s$ is almost as large as $\lambda$.  

\begin{clm}
\label{clm:C_lambda}
For any $\delta>0$ and $n \geq \lambda \geq 14\log\frac{4}{\delta}$, $C_{\lambda}$ is $(\eps, \delta )$ differentially private where
\[
\eps = \sqrt{\frac{32\log\frac{4}{\delta}}{\lambda- \sqrt{2\lambda\log\tfrac{2}{\delta}}}} \cdot \left( 1 - \frac{\lambda- \sqrt{2 \lambda\log\tfrac{2}{\delta}}}{n} \right)
\]
\end{clm}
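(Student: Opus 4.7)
The plan is to reduce to Claim~\ref{clm:C_s} by conditioning on the high-probability event that the random sample size $s \sim \mathrm{Bin}(n,\lambda/n)$ drawn inside $C_\lambda$ is not too far below its mean. Concretely, set
\[
s_* \;=\; \lambda - \sqrt{2\lambda \log\tfrac{2}{\delta}},
\]
and let $E$ denote the event $\{s \geq s_*\}$. The proof has three ingredients: (i) a tail bound giving $\Pr[\neg E] \leq \tfrac12 \delta$; (ii) a verification that $s_* > 8\log\tfrac{4}{\delta}$ so that Claim~\ref{clm:C_s} is applicable for every $s \geq s_*$; and (iii) a monotonicity observation that lets us replace the random $\epsilon(s)$ from Claim~\ref{clm:C_s} by the constant $\epsilon(s_*)$.

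For (i), a standard lower-tail Chernoff (or Bernstein) bound for the binomial with mean $\lambda$ gives $\Pr[s < \lambda - t] \leq \exp(-t^2/(2\lambda))$, and setting $t = \sqrt{2\lambda \log(2/\delta)}$ yields $\Pr[\neg E] \leq \tfrac12 \delta$. For (ii), using $\lambda \geq 14\log\tfrac{4}{\delta}$ and $\log\tfrac{2}{\delta} \leq \log\tfrac{4}{\delta}$, a direct calculation gives $s_* \geq (14 - \sqrt{28})\log\tfrac{4}{\delta} > 8\log\tfrac{4}{\delta}$, so Claim~\ref{clm:C_s} applies with privacy parameter $\epsilon(s) := \sqrt{32\log(4/\delta)/s}\cdot(1 - s/n)$ and $\tfrac12\delta$ slack. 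For (iii), both factors $\sqrt{32\log(4/\delta)/s}$ and $(1 - s/n)$ are decreasing in $s$, so $\epsilon(s) \leq \epsilon(s_*)$ for every $s \geq s_*$.

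Given these ingredients, fix neighboring $X \sim X'$ and any measurable $W$. Decompose
\[
\Pr[C_\lambda(X) \in W] \;\leq\; \sum_{s \geq s_*} \Pr[\mathbf{s}=s]\cdot \Pr[C_s(X) \in W] \;+\; \tfrac12\delta.
\]
Applying Claim~\ref{clm:C_s} termwise and pulling out the uniform bound $e^{\epsilon(s_*)}$,
\[
\sum_{s \geq s_*} \Pr[\mathbf{s}=s]\,\Pr[C_s(X) \in W] \;\leq\; e^{\epsilon(s_*)}\sum_{s \geq s_*} \Pr[\mathbf{s}=s]\,\Pr[C_s(X') \in W] \;+\; \tfrac12\delta,
\]
where the $\tfrac12\delta$ slack from Claim~\ref{clm:C_s} is absorbed once because the $\Pr[\mathbf{s}=s]$ weights sum to at most $1$. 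The remaining sum is bounded by $\Pr[C_\lambda(X') \in W]$, so combining everything gives $\Pr[C_\lambda(X)\in W] \leq e^{\epsilon(s_*)} \Pr[C_\lambda(X') \in W] + \delta$, which is exactly the stated bound since $\epsilon(s_*)$ matches the claim's expression.

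The only mild obstacle is keeping the $\delta$ budget clean: we spend $\tfrac12\delta$ on the binomial tail and $\tfrac12\delta$ on the slack from Claim~\ref{clm:C_s}, which exactly sums to the target $\delta$. Everything else is a careful bookkeeping of a standard ``amplification-then-condition'' argument.
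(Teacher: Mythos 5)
Your proposal is correct and follows essentially the same route as the paper's proof: the same conditioning on $\mathbf{s} \geq \lambda - \sqrt{2\lambda\log\tfrac{2}{\delta}}$ via a Chernoff bound costing $\tfrac12\delta$, the same termwise application of Claim~\ref{clm:C_s} (justified by checking the threshold exceeds $8\log\tfrac{4}{\delta}$), and the same monotonicity argument to replace $\eps_1(s)$ by its value at the threshold, with the remaining $\tfrac12\delta$ slack absorbed across the binomial weights. No gaps; the bookkeeping matches the paper exactly.
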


The proof is in
\ifnum\makeappendix=1
\ifnum\eurocrypt=0 Appendix \else Supplementary Material \fi \ref{sec:appendix-bitsum-privacy}.
\else the full version of the paper\fi

\medskip

From Claim \ref{clm:sufficient-for-privacy}, $C_{\lambda}$ and $P_{n,\lambda}$ share the same privacy guarantees. Hence, Claim \ref{clm:C_lambda} implies the following:
\begin{coro}
\label{coro:P_lambda-private}
For any $\delta\in(0,1)$, $n \in \N$, and $\lambda \in \left[14\log\frac{4}{\delta},n\right]$, $P_{n,\lambda}$ is $(\eps,\delta)$ differentially private, where
\[
\eps = \sqrt{\frac{32\log\frac{4}{\delta}}{\lambda- \sqrt{2\lambda\log\tfrac{2}{\delta}}}} \cdot \left( 1 - \frac{\lambda- \sqrt{2 \lambda\log\tfrac{2}{\delta}}}{n} \right)
\]
\end{coro}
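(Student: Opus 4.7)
The plan is to obtain the corollary as an immediate consequence of the two preceding claims, chained together. First, I would apply Claim \ref{clm:C_lambda}, which under the side condition $n \geq \lambda \geq 14\log(4/\delta)$ certifies that the central simulator $C_\lambda$ is $(\eps,\delta)$-differentially private for precisely the value of $\eps$ displayed in the statement. Then I would invoke Claim \ref{clm:sufficient-for-privacy}, whose content is that any $(\eps,\delta)$-differential privacy guarantee for $C_\lambda$ automatically lifts to the same guarantee for the shuffled protocol $P_{n,\lambda}$. Composing the two implications yields the corollary with identical parameters.

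Conceptually, this is where the careful decomposition developed earlier in the section pays off. The analyzer $A^{0/1}_{n,\lambda}$ is a deterministic (affine) function of the sum $\sum_i y_i$, and Claim \ref{clm:equivalence-to-C_lambda} together with the post-processing lemma reduces privacy of the full shuffled output to privacy of this sum; that reduction is exactly what Claim \ref{clm:sufficient-for-privacy} formalizes. Consequently, all the substantive work has already been done in the progression $C_H \to C_s \to C_\lambda$, and the corollary is really just a final repackaging of $C_\lambda$'s privacy guarantee as a guarantee about $P_{n,\lambda}$.

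I do not foresee any serious obstacle. The only bookkeeping to verify is that the parameter ranges line up: the hypothesis $\lambda \in [14 \log(4/\delta),\, n]$ in the corollary is exactly the one required by Claim \ref{clm:C_lambda}, and $\delta \in (0,1)$ ensures that $\log(4/\delta)$ and $\log(2/\delta)$ are well-defined and positive so the expression for $\eps$ makes sense and the square roots are real. Once the two applications are written down in sequence, the proof collapses to essentially a single sentence.
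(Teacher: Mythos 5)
Your proposal is correct and matches the paper's own derivation exactly: the corollary is obtained by applying Claim~\ref{clm:C_lambda} to get $(\eps,\delta)$-privacy of $C_\lambda$ under the hypothesis $\lambda \in [14\log\frac{4}{\delta}, n]$, and then invoking Claim~\ref{clm:sufficient-for-privacy} to transfer that guarantee to $P_{n,\lambda}$. Your parameter-range check is the only bookkeeping needed, just as in the paper.
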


\subsection{Setting the Randomization Parameter} \label{sec:plambda-setting}

Corollary~\ref{coro:P_lambda-private} gives a bound on the privacy of $P_{n,\lambda}$ in terms of the number of users $n$ and the randomization parameter $\lambda$.  While this may be enough on its own, in order to understand the tradeoff between $\eps$ and the accuracy of the protocol, we want to identify a suitable choice of $\lambda$ to achieve a desired privacy guarantee $(\eps,\delta)$.  To complete the proof of Theorem~\ref{thm:bitsumprivacy}, we prove such a bound.

For the remainder of this section, fix some $\delta \in (0,1)$.  Corollary~\ref{coro:P_lambda-private} states that for any $n$ and $\lambda \in \left[14\log\frac{4}{\delta},n\right]$, $P_{n,\lambda}$ satisfies $(\eps^*(\lambda),\delta)$-differential privacy, where
\begin{align*}
\eps^*(\lambda) = \sqrt{\frac{32\log\frac{4}{\delta}}{\lambda- \sqrt{2\lambda\log\tfrac{2}{\delta}}}} \cdot \left( 1 - \frac{\lambda- \sqrt{2 \lambda\log\tfrac{2}{\delta}}}{n} \right)
\end{align*}
Let $\lambda^*(\eps)$ be the inverse of $\eps^*$, i.e.~the minimum $\lambda \in [0,n]$ such that $\eps^*(\lambda) \leq \eps$.  Note that $\eps^*(\lambda)$ is decreasing as $\lambda \to n$ while $\lambda^*(\eps)$ increases as $\eps \to 0$. By definition, $P_{n,\lambda}$ satisfies $(\eps,\delta)$ privacy if $\lambda \geq \lambda^*(\eps)$; the following Lemma gives such an upper bound:


\begin{lem}
\label{lem:generate-lambda}
For all $\delta \in (0,1)$, $n \geq 14\log\frac{4}{\delta}$, $\eps \in \left(\frac{\sqrt{3456}}{n}\log\frac{4}{\delta},1\right)$, $P_{n,\lambda}$ is $(\eps,\delta)$ differentially private if
\begin{equation}
\lambda = \begin{cases}
\frac{64}{\eps^2}\log\frac{4}{\delta} & \textrm{if $\eps \geq \sqrt{\frac{192}{n}\log\frac{4}{\delta}}$}\\
n-\frac{\eps n^{3/2}}{\sqrt{432\log (4/\delta)}} & \textrm{otherwise}
\end{cases}
\end{equation}
\end{lem}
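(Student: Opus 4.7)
The plan is to invoke Corollary~\ref{coro:P_lambda-private} and verify, in each of the two regimes, that the proposed value of $\lambda$ satisfies $\eps^*(\lambda)\le \eps$, where
\begin{equation*}
\eps^*(\lambda) = \sqrt{\tfrac{32\log(4/\delta)}{\lambda - \sqrt{2\lambda\log(2/\delta)}}} \cdot \left(1 - \tfrac{\lambda - \sqrt{2\lambda\log(2/\delta)}}{n}\right).
\end{equation*}
Since $\eps^*$ is decreasing on the relevant range, it suffices to exhibit a $\lambda$ for which $\eps^*(\lambda) \le \eps$. To tame the awkward $\sqrt{2\lambda\log(2/\delta)}$ subtraction inside the square root, I will abbreviate $\lambda' := \lambda - \sqrt{2\lambda \log(2/\delta)}$ and show that in both regimes $\lambda' \ge \tfrac{1}{2}\lambda$ (equivalently, $\lambda \ge 8\log(2/\delta)$). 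This reduces the verification of $\eps^*(\lambda)\le\eps$ to two clean inequalities: $\sqrt{32\log(4/\delta)/\lambda'} \le \eps\cdot(\text{slack})$ and $(1-\lambda'/n) \le (\text{slack})$.

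For the first (high-$\eps$) regime, take $\lambda = 64\log(4/\delta)/\eps^2$. The hypothesis $\eps \ge \sqrt{192\log(4/\delta)/n}$ ensures $\lambda \le n/3$, so the second factor $1-\lambda'/n$ is at most $1$, and I only need to bound the square-root factor. Since $\lambda \gg \log(2/\delta)$ (using $\eps \le 1$ and the lower bound on $\eps$), one checks $\lambda' \ge \lambda/2$; then $\sqrt{32\log(4/\delta)/\lambda'} \le \sqrt{64\log(4/\delta)/\lambda} = \eps$, and the claim follows. The only delicate point is confirming the constant $64$ is large enough to absorb the factor-of-two loss from replacing $\lambda'$ by $\lambda/2$ in the denominator; a quick calculation shows the two factors of $2$ exactly cancel and the inequality is tight.

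For the second (low-$\eps$) regime, take $\lambda = n - \tfrac{\eps n^{3/2}}{\sqrt{432\log(4/\delta)}}$, so that $n-\lambda = \tfrac{\eps n^{3/2}}{\sqrt{432\log(4/\delta)}}$. The hypothesis $\eps < \sqrt{192 \log(4/\delta)/n}$ guarantees $n-\lambda \le n/\sqrt{432/192}$, i.e.\ $\lambda$ is still a constant fraction of $n$ away from $0$ (in particular $\lambda \ge 14\log(4/\delta)$ and $\lambda \ge 8\log(2/\delta)$, so again $\lambda' \ge \lambda/2 \ge n/c$ for a constant $c$). Writing $\eps^*(\lambda) = \sqrt{32\log(4/\delta)/\lambda'}\cdot (n-\lambda')/n$, I substitute the explicit $n-\lambda$, bound $n-\lambda' \le (n-\lambda) + \sqrt{2\lambda\log(2/\delta)} \le 2(n-\lambda)$ (this is where the threshold $\eps > \sqrt{3456}\log(4/\delta)/n$ comes in, ensuring the correction term is dominated by $n-\lambda$), and use $\lambda' \ge n/c$ in the square root. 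Collecting constants $32\cdot 4 / 432 = 8/27 \cdot\text{something}$---this is the place where the exact constant $432$ must be chosen to make the arithmetic work, which is the main bookkeeping obstacle.

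The main obstacle throughout is constant tracking: one must verify that the loss from $\lambda' \le \lambda$, from the $2$ in $(n-\lambda') \le 2(n-\lambda)$, and from the various square-root expansions all fit inside the constants $64$ and $432$. The structural content---monotonicity of $\eps^*$ and the regime split around $\eps \asymp \sqrt{\log(1/\delta)/n}$---is immediate from Corollary~\ref{coro:P_lambda-private}; the work is purely algebraic verification of the two inequalities above.
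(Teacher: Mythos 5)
Your first regime is exactly the paper's argument: bound the second factor of $\eps^*(\lambda)$ by $1$, use $\lambda \geq 8\log\frac{2}{\delta}$ to get $\lambda - \sqrt{2\lambda\log\frac{2}{\delta}} \geq \lambda/2$, and the two factors of $2$ cancel to give exactly $\eps$. (Minor remark: the hypothesis $\eps \geq \sqrt{\tfrac{192}{n}\log\frac{4}{\delta}}$ is what places $\lambda \leq n$ inside the range where Corollary~\ref{coro:P_lambda-private} applies; the second factor is at most $1$ for free, not because $\lambda \leq n/3$.)

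In the second regime your route differs from the paper's, and with the constant you wrote it does not close. Writing $\lambda' = \lambda - \sqrt{2\lambda\log\frac{2}{\delta}}$, you propose $n-\lambda' \leq 2(n-\lambda)$; combined with your bounds $\lambda' \geq \lambda/2$ and $\lambda \geq n/3$, the chain gives $\eps^*(\lambda) \leq \sqrt{\tfrac{192}{n}\log\frac{4}{\delta}} \cdot \tfrac{2(n-\lambda)}{n} = 2\eps\sqrt{192/432} = \tfrac{4}{3}\eps$, which does not prove the claim. There is no slack to absorb a factor of $2$: since $\sqrt{192/432} = 2/3$, the most you can afford on $n-\lambda$ is a factor $3/2$. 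The fix stays within your approach: the hypothesis $\eps > \tfrac{\sqrt{3456}}{n}\log\frac{4}{\delta} = \tfrac{2\sqrt{864}}{n}\log\frac{4}{\delta}$ yields $\sqrt{2\lambda\log\frac{2}{\delta}} \leq \sqrt{2n\log\frac{2}{\delta}} \leq \tfrac{1}{2}(n-\lambda)$, hence $n-\lambda' \leq \tfrac{3}{2}(n-\lambda)$ and the product is exactly $\eps$. The paper instead splits additively, $1-\tfrac{\lambda'}{n} = \tfrac{n-\lambda}{n} + \tfrac{\sqrt{2\lambda\log(2/\delta)}}{n}$, bounding the first contribution by $\tfrac{2}{3}\eps$ and the second by $\tfrac{\sqrt{384}}{n}\log\frac{4}{\delta} < \tfrac{1}{3}\eps$ using the same lower bound on $\eps$. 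Either decomposition works, but since the entire content of this lemma is the constants, the verification you explicitly deferred ("the constant $432$ must be chosen to make the arithmetic work") is precisely the point at which your stated bound fails and must be tightened from $2$ to $3/2$.
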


\ifnum\oakland=1
The proof is essentially a case analysis depending on which of the two terms in $\eps^*(\lambda)$ is dominant.  When $\lambda < n/2$ we have that $\eps^*(\lambda) \approx \sqrt{1/\lambda}$, which gives the choice of $\lambda$ for relatively large values of $\eps$, and when $\lambda \geq n/2$ we have that $\eps^*(\lambda) \approx \frac{1}{\sqrt{n}}(1-\frac{\lambda}{n})$, which gives the choice of $\lambda$ for relatively small values of $\eps$.
\else
We'll prove the lemma in two claims, each of which corresponds to one of the two cases of our bound on $\lambda^*(\eps)$.  The first bound applies when $\eps$ is relatively large.


\begin{clm}
\label{clm:lambda-bound-1}
For all $\delta \in (0,1)$, $n \geq 14\log\frac{4}{\delta}$, $\eps \in \left(\sqrt{\frac{192}{n}\log\frac{4}{\delta}},1\right)$, if $\lambda = \frac{64}{\eps^2} \log \frac{4}{\delta}$ then $P_{n,\lambda}$ is $(\eps,\delta)$ private.
\end{clm}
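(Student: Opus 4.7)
The plan is to apply Corollary~\ref{coro:P_lambda-private} directly with the stated choice of $\lambda = \frac{64}{\eps^2}\log\frac{4}{\delta}$ and verify that the resulting privacy parameter $\eps^*(\lambda)$ is at most $\eps$. First, I will check that the hypotheses of Corollary~\ref{coro:P_lambda-private} are satisfied, namely that $\lambda \in [14\log\frac{4}{\delta}, n]$. Since $\eps \leq 1$, we have $\lambda \geq 64\log\frac{4}{\delta} \geq 14\log\frac{4}{\delta}$, so the lower bound is immediate. For the upper bound $\lambda \leq n$, I will use the assumption $\eps \geq \sqrt{\frac{192}{n}\log\frac{4}{\delta}}$, which yields $\lambda = \frac{64\log(4/\delta)}{\eps^2} \leq \frac{n}{3} \leq n$.

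The main task is then to bound
\[
\eps^*(\lambda) = \sqrt{\frac{32\log\frac{4}{\delta}}{\lambda - \sqrt{2\lambda \log\frac{2}{\delta}}}} \cdot \left(1 - \frac{\lambda - \sqrt{2\lambda \log\frac{2}{\delta}}}{n}\right)
\]
by $\eps$. The second factor is trivially at most $1$ (since the quantity being subtracted is nonnegative and at most $n$), so it suffices to bound the square-root factor by $\eps$. For this, I will show that the denominator inside the square root satisfies $\lambda - \sqrt{2\lambda \log\frac{2}{\delta}} \geq \lambda/2$, which would give $\sqrt{\frac{32\log(4/\delta)}{\lambda/2}} = \sqrt{\frac{64\log(4/\delta)}{\lambda}} = \eps$ after substituting $\lambda = \frac{64\log(4/\delta)}{\eps^2}$.

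The inequality $\lambda - \sqrt{2\lambda\log\frac{2}{\delta}} \geq \lambda/2$ is equivalent to $\lambda \geq 8\log\frac{2}{\delta}$, which follows from $\lambda \geq 64\log\frac{4}{\delta} \geq 8\log\frac{2}{\delta}$. This is the only nontrivial algebraic step; the rest is just unpacking definitions. I do not expect any real obstacle here since the parameter $\lambda$ was defined precisely to make the first factor of $\eps^*$ equal to $\eps$ up to a constant slack, and that slack is exactly what absorbs the $\sqrt{2\lambda\log(2/\delta)}$ correction. Putting these ingredients together yields $\eps^*(\lambda) \leq \eps$, and Corollary~\ref{coro:P_lambda-private} then gives the claimed $(\eps,\delta)$-differential privacy of $P_{n,\lambda}$.
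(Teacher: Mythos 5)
Your proposal is correct and follows essentially the same route as the paper: apply Corollary~\ref{coro:P_lambda-private}, drop the second factor of $\eps^*(\lambda)$ using $\lambda \leq n$, and use $\lambda \geq 8\log\frac{2}{\delta}$ to replace the denominator $\lambda - \sqrt{2\lambda\log\frac{2}{\delta}}$ by $\lambda/2$, after which substituting $\lambda = \frac{64}{\eps^2}\log\frac{4}{\delta}$ gives exactly $\eps$. The only difference is that you also verify the corollary's hypothesis $\lambda \in [14\log\frac{4}{\delta}, n]$ explicitly, which the paper leaves implicit; this is a harmless (indeed welcome) addition.
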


\begin{proof}
Let $\lambda = \frac{64}{\eps^2} \log\frac{4}{\delta}$ as in the statement. Corollary~\ref{coro:P_lambda-private} states that $P_{n,\lambda}$ satisfies $(\eps^*(\lambda),\delta)$ privacy for
\begin{align*}
\eps^*(\lambda) &= \sqrt{\frac{32\log\frac{4}{\delta}}{\lambda- \sqrt{2\lambda\log\tfrac{2}{\delta}}}} \cdot \left( 1 - \frac{\lambda- \sqrt{2 \lambda\log\tfrac{2}{\delta}}}{n} \right) \\
&\leq \sqrt{\frac{32\log\frac{4}{\delta}}{\lambda- \sqrt{2\lambda\log\tfrac{2}{\delta}}}} \tag{$\lambda \leq n$} \\
&\leq \sqrt{\frac{64\log\frac{4}{\delta}}{\lambda}} \tag{$\lambda \geq 8 \log \frac{2}{\delta}$} \\
&= \eps
\end{align*}
This completes the proof of the claim.
\end{proof}

The value of $\lambda$ in the previous claim can be as large as $n$ when $\eps$ approaches $1/\sqrt{n}$.  We now give a meaningful bound for smaller values of $\eps$.

\begin{clm}
\label{clm:lambda-bound-2}
For all $\delta \in (0,1)$, $n \geq 14\log\frac{4}{\delta}$, $\eps \in \left(\frac{\sqrt{3456}}{n}\log\frac{4}{\delta}, \sqrt{\frac{192}{n} \log \frac{4}{\delta}}\right)$, if $$\lambda= n - \frac{\eps n^{3/2}}{\sqrt{432\log(4/\delta)}}$$ then $P_{n,\lambda}$ is $(\eps,\delta)$ private.
\end{clm}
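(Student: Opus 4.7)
Write $L=\log(4/\delta)$ and $t=n-\lambda=\eps n^{3/2}/\sqrt{432L}$. The plan is to substitute $\lambda=n-t$ into Corollary~\ref{coro:P_lambda-private} and show that the resulting privacy parameter $\eps^*(\lambda)$ is at most $\eps$; the lower and upper bounds on $\eps$ in the hypothesis will each play a distinct role.

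\textbf{Step 1: Locate $\lambda$.} The upper bound $\eps<\sqrt{192L/n}$ forces
\[
t \;<\; \sqrt{192L/n}\cdot\frac{n^{3/2}}{\sqrt{432L}} \;=\; n\sqrt{192/432}\;=\;\tfrac{2}{3}n,
\]
so $\lambda\geq n/3$. Combined with $n\geq 14L$, this makes $\lambda$ a constant fraction of $n$ and also comfortably larger than $\log(2/\delta)$, both of which we need below. In particular, $\lambda$ is large enough that
$\sqrt{2\lambda\log(2/\delta)}$ is a small constant multiple of $\lambda$, so the quantity
$\mu := \lambda-\sqrt{2\lambda\log(2/\delta)}$ satisfies $\mu\gtrsim \lambda$, and in particular $\mu\geq n/6$ (up to verification of the exact constant).

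\textbf{Step 2: Bound the two factors of $\eps^*(\lambda)$ separately.} Corollary~\ref{coro:P_lambda-private} gives
\[
\eps^*(\lambda) \;=\; \sqrt{\frac{32L}{\mu}}\cdot\frac{n-\mu}{n}.
\]
For the first factor, Step~1 yields $\sqrt{32L/\mu}\leq \sqrt{192L/n}$. For the second factor, write
\[
n-\mu \;=\; (n-\lambda) + \sqrt{2\lambda\log(2/\delta)} \;\leq\; t + \sqrt{2nL},
\]
so $(n-\mu)/n \leq t/n + \sqrt{2L/n}$. Multiplying gives
\[
\eps^*(\lambda) \;\leq\; \sqrt{\frac{192L}{n}}\left(\frac{t}{n}+\sqrt{\frac{2L}{n}}\right) \;=\; \frac{t\sqrt{192L}}{n^{3/2}} \;+\; \frac{\sqrt{384}\,L}{n}.
\]

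\textbf{Step 3: Cancel using the definition of $t$.} Plugging in $t=\eps n^{3/2}/\sqrt{432L}$, the first term becomes $\eps\sqrt{192/432}=\tfrac{2}{3}\eps$. Thus
\[
\eps^*(\lambda)\;\leq\;\tfrac{2}{3}\eps \;+\; \frac{\sqrt{384}\,L}{n},
\]
and it remains to show that the residual term is at most $\tfrac{1}{3}\eps$. This is equivalent to $\eps\geq 3\sqrt{384}\,L/n=\sqrt{3456}\,L/n$, which is exactly the lower bound on $\eps$ assumed in the hypothesis. Hence $\eps^*(\lambda)\leq\eps$, and the claim follows.

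\textbf{Main obstacle.} The calculations themselves are routine; the only delicate point is Step~1, where one must verify that $\mu$ is a sufficiently large fraction of $\lambda$ (and hence of $n$) so that the constant $\sqrt{192L/n}$ used as an upper bound on $\sqrt{32L/\mu}$ is honest. Once that constant is pinned down, Step~3 shows that the problem is essentially ``tight'' in its constants: the upper bound on $\eps$ is exactly the value at which the two cases of Lemma~\ref{lem:generate-lambda} meet (giving $\lambda=n/3$), and the lower bound on $\eps$ is exactly what is needed to absorb the sampling-fluctuation term $\sqrt{384}\,L/n$ into $\eps/3$.
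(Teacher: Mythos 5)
Your proposal is correct and follows essentially the same route as the paper's proof: substitute the given $\lambda$ into Corollary~\ref{coro:P_lambda-private}, use the upper bound on $\eps$ to get $\lambda > n/3$ and hence bound the first factor by $\sqrt{192\log(4/\delta)/n}$, split $1-\mu/n$ into the $(n-\lambda)/n$ term (which cancels to $\tfrac{2}{3}\eps$) and the sampling-fluctuation term, and absorb the latter into $\tfrac{1}{3}\eps$ via the lower bound on $\eps$. The constant verification you defer in Step~1 is exactly what the paper supplies, via $\lambda \geq 8\log(2/\delta)$ (giving $\mu \geq \lambda/2$) together with $\lambda \geq n/3$.
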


\begin{proof}
Let $\lambda = n - \eps n^{3/2} / \sqrt{432\log(4/\delta)}$ as in the statement. Note that for this $\eps$ regime, we have $n/3 < \lambda < n$. Corollary~\ref{coro:P_lambda-private} states that $P_{n,\lambda}$ satisfies $(\eps^*(\lambda),\delta)$ privacy for

\begin{align*}
\eps^*(\lambda) &= \sqrt{\frac{32\log\frac{4}{\delta}}{\lambda- \sqrt{2\lambda\log\tfrac{2}{\delta}}}} \cdot \left( 1 - \frac{\lambda- \sqrt{2 \lambda\log\tfrac{2}{\delta}}}{n} \right)\\
&\leq \sqrt{\frac{64\log\frac{4}{\delta}}{\lambda}} \cdot \left( 1 - \frac{\lambda- \sqrt{2 \lambda\log\tfrac{2}{\delta}}}{n} \right) \tag{$\lambda \geq 8\log\frac{2}{\delta}$} \\
&= \sqrt{\frac{64\log\frac{4}{\delta}}{\lambda}} \cdot \left( \frac{\eps \sqrt{n}}{\sqrt{432\log(4/\delta)}} + \frac{\sqrt{2 \lambda\log\tfrac{2}{\delta}}}{n} \right) \\
&\leq \sqrt{\frac{64\log\frac{4}{\delta}}{\lambda}} \cdot \left( \frac{\eps \sqrt{n}}{\sqrt{432\log(4/\delta)}} + \sqrt{ \frac{2 \log\tfrac{2}{\delta}}{n}} \right) \tag{$\lambda \leq n$}\\
&\leq \sqrt{\frac{192\log\frac{4}{\delta}}{n}} \cdot \left( \frac{\eps \sqrt{n}}{\sqrt{432\log(4/\delta)}} + \sqrt{ \frac{2 \log\tfrac{2}{\delta}}{n}} \right) \tag{$\lambda \geq n/3$}\\
&= \frac{2}{3}\eps + \frac{\sqrt{384 \log\frac{4}{\delta} \log\tfrac{2}{\delta}}}{n} < \frac{2}{3}\eps + \frac{\sqrt{384}}{n}\log\frac{4}{\delta}\\
&< \frac{2}{3}\eps + \frac{1}{3}\eps = \eps \tag{$\eps > \frac{\sqrt{3456}}{n}\log\frac{4}{\delta}$}
\end{align*}
which completes the proof.
\end{proof}
\fi

\subsection{Accuracy Analysis} \label{sec:plambda-accuracy}

In this section, we will bound the error of $P_{\lambda}(X)$ with respect to $\sum_i x_i$.  Recall that, to clean up notational clutter, we will often write $f(X) = \sum_i x_i$. As with the previous section, our statements will at first be in terms of $\lambda$ but the section will end with a statement in terms of $\eps,\delta$.

\begin{thm}
\label{thm:bit-sum-accurate}
For every $n \in \N$, $\beta > 0$, $n > \lambda \geq 2\log\frac{2}{\beta}$, and $x \in \zo^{n}$,
$$
\pr{}{\left| P_{n,\lambda}(x) - \sum_i x_i \right| > \sqrt{2\lambda \log(2/\beta)} \cdot \left(\frac{n}{n-\lambda}\right)} \leq \beta
$$
\end{thm}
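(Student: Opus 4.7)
The plan is to rewrite $z - \mu$ (where $\mu := \sum_i x_i$) as a scaled, centered sum of independent bounded random variables, and then apply Bernstein's inequality, using that the per-user variance is small (proportional to $\lambda/n$ rather than to a constant).

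First I would observe that $T := \sum_{i=1}^n y_i$ is a sum of $n$ independent Bernoulli variables, since each $y_i$ is independently equal to $x_i$ with probability $1 - \lambda/n$ and uniform on $\zo$ otherwise. Thus $\expect[y_i] = (1-\lambda/n)x_i + \lambda/(2n)$ and $\expect[T] = (1-\lambda/n)\mu + \lambda/2$. A direct calculation from the definition of the analyzer $A^{0/1}_{n,\lambda}$ then gives the identity
\[
z - \mu \;=\; \frac{n}{n-\lambda}\bigl(T - \expect[T]\bigr).
\]
So it suffices to show $|T - \expect[T]| \leq \sqrt{2\lambda\log(2/\beta)}$ with probability at least $1 - \beta$, and then scale by $n/(n-\lambda)$.

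For the tail bound, I would use Bernstein's inequality applied to the independent mean-zero summands $y_i - \expect[y_i]$. Since each $y_i$ is Bernoulli with parameter $p_i \in \{\lambda/(2n),\, 1 - \lambda/(2n)\}$, its variance is $p_i(1-p_i) \leq \lambda/(2n)$ and its centered range is bounded by $1$; hence the total variance satisfies $\sigma^2 \leq \lambda/2$. Bernstein then yields
\[
\Pr\bigl[|T - \expect[T]| > t\bigr] \;\leq\; 2\exp\!\left(-\frac{t^2}{2\sigma^2 + 2t/3}\right) \;\leq\; 2\exp\!\left(-\frac{t^2}{\lambda + 2t/3}\right).
\]
Plugging in $t = \sqrt{2\lambda \log(2/\beta)}$ and using the hypothesis $\lambda \geq 2\log(2/\beta)$, which forces $t \leq \tfrac{3\lambda}{2}$ and hence $\lambda + 2t/3 \leq 2\lambda$, bounds the exponent by $-\log(2/\beta)$, so the right-hand side is at most $\beta$, as required.

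The main obstacle is simply the constant bookkeeping: verifying that the hypothesis $\lambda \geq 2\log(2/\beta)$ is exactly what is needed to absorb the $2t/3$ term in the Bernstein denominator and yield the desired tail probability. Everything else (the algebraic identity for $z - \mu$, the variance estimate $\sigma^2 \leq \lambda/2$, and the rescaling by $n/(n-\lambda)$) is routine.
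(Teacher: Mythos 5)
Your proposal is correct and follows essentially the same route as the paper: both center the per-user messages $y_i$, bound each variance by $\tfrac{\lambda}{2n}\bigl(1-\tfrac{\lambda}{2n}\bigr)\leq\tfrac{\lambda}{2n}$, apply Bernstein's inequality to the sum, and rescale by $\tfrac{n}{n-\lambda}$ via the same algebraic identity. The only difference is bookkeeping: the paper invokes a pre-packaged Bernstein variant whose variance-lower-bound hypothesis is what the condition on $\lambda$ feeds, whereas you use the standard Bernstein form and use $\lambda\geq 2\log\tfrac{2}{\beta}$ to absorb the $2t/3$ term---both yield the stated bound.
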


Observe that, using the choice of $\lambda$ specified in Theorem~\ref{thm:bitsumprivacy}, we conclude that for every $\frac1n \lesssim \eps \lesssim 1$ and every $\delta$ the protocol $P_{\lambda}$ satisfies
$$
\pr{}{\left| P_{n,\lambda}(x) - \sum_i x_i \right| > O\left( \frac{\sqrt{\log(1/\delta)\log(1/\beta)}}{\eps} \right)} \leq \beta
$$
To see how this follows from Theorem~\ref{thm:bit-sum-accurate}, consider two parameter regimes:
\begin{enumerate}
\item When $\eps \gg 1/\sqrt{n}$ then $\lambda \approx \frac{\sqrt{\log(1/\delta)}}{\eps^2} \ll n$, so the bound in Theorem~\ref{thm:bit-sum-accurate} is $O(\sqrt{\lambda \log(1/\beta)})$, which yields the desired bound.  
\item When $\eps \ll 1/\sqrt{n}$ then $n-\lambda \approx \eps n^{3/2}/\sqrt{\log(1/\delta)} \ll n$, so the bound in Theorem~\ref{thm:bit-sum-accurate} is $O\left(\frac{n^{3/2} \sqrt{\log(1/\beta)}}{n-\lambda}\right)$, which yields the desired bound.  
\end{enumerate}

\ifnum\makeappendix=1
    We formalize this intuition in Corollary~\ref{coro:bit-sum-accurate-concrete} \ifnum\oakland=0 to obtain Theorem~\ref{thm:bitsum} in the introduction\fi.  
    
    \medskip
    
    The remainder of the analysis can be found in \ifnum\eurocrypt=0 Appendix \else Supplementary Material \fi \ref{sec:appendix-bitsum}.

\else
    Theorem \ref{thm:bitsum} in the introduction follows from this intuition; a formal proof can be found in the full version.
\fi

\section{A Protocol for Sums of Real Numbers}
\label{sec:real-sum}



In this section, we show how to extend our protocol to compute sums of bounded real numbers.  In this case the data domain is $\cX = [0,1]$, but the function we wish to compute is still $f(x) = \sum_i x_i$.  The main idea of the protocol is to randomly round each number $x_i$ to a Boolean value $b_i \in \zo$ with expected value $x_i$.  However, since the randomized rounding introduces additional error, we may need to round multiple times and estimate several sums. As a consequence, this protocol is not one-message.

\subsection{The Protocol}

Our algorithm is described in two parts, an encoder $E_r$ that performs the randomized rounding (Algorithm~\ref{alg:encode-real}) and a shuffled protocol $P^\R_{\lambda,r}$ (Algorithm~\ref{alg:real-sum-shuffled}) that is the composition of many copies of our protocol for the binary case, $P^{0/1}_{\lambda}$.  The encoder takes a number $x \in [0,1]$ and a parameter $r \in \N$ and outputs a vector $(b_1,\dots,b_r) \in \zo^{r}$ such that $\ex{}{\frac1r \sum_j b_j} = x_j$ and $\var{}{\frac1r \sum_j b_j} = O(1/r^2)$.  To clarify, we give two examples of the encoding procedure:
\begin{itemize}
\item If $r = 1$ then the encoder simply sets $b = \Ber(x)$.  The mean and variance of $b$ are $x$ and $x(1-x) \leq \frac14$, respectively.
\item If $x = .4$ and $r = 4$ then the encoder sets $b = (1,\Ber(.6),0,0)$.  The mean and variance of $\frac14(b_1 + b_2 + b_3 + b_4)$ are $.4$ and $.015$, respectively.  
\end{itemize}

After doing the rounding, we then run the bit-sum protocol $P^{0/1}_{\lambda}$ on the bits $b_{1,j},\dots,b_{n,j}$ for each $j \in [r]$ and average the results to obtain an estimate of the quantity
$$
\sum_{i} \frac{1}{r} \sum_{j} b_{i,j} \approx \sum_{i} x_i
$$
To analyze privacy we use the fact that the protocol is a composition of bit-sum protocols, which are each private, and thus we can analyze privacy via the composition properties of differential privacy.

Much like in the bit-sum protocol, we use $P^\R_{n,\lambda,r}, R^\R_{n,\lambda,r}, A^\R_{n,\lambda,r}$ to denote the real-sum protocol and its components when $n$ users participate.



\begin{algorithm}[ht]
\caption{An encoder $E_{r}(x)$}
\label{alg:encode-real}
\KwIn{$x \in [0,1]$, a parameter $r \in \N$.}
\KwOut{$(\mathbf{b}_{1},\dots, \mathbf{b}_{r}) \in \zo^{r}$} \vspace{5pt}

Let $\mu \leftarrow \lceil x \cdot r \rceil$ and $p \leftarrow x\cdot r - \mu + 1$\\
\For{$j = 1,\dots,r$}{
	$\mathbf{b}_{j} = \begin{cases}1 & j < \mu\\ \mathit{Ber}(p) & j = \mu\\  0 & j > \mu \end{cases}$
}
\Return{$(\mathbf{b}_{1},\dots, \mathbf{b}_{r})$}
\end{algorithm}


\begin{algorithm} [ht]
\caption{The protocol $P^\R_{\lambda,r} = (R^\R_{\lambda,r}, \shuff, A^\R_{\lambda,r})$}
\label{alg:real-sum-shuffled}
\tcp{Local randomizer}
\Fn{$R^\R_{n,\lambda,r}(x)$}{
	\KwIn{$x \in [0,1]$, parameters $n,r \in \N, \lambda\in(0,n)$.}
	\KwOut{$(\mathbf{y}_{1},\dots \mathbf{y}_{r})\in \zo^{r}$} \vspace{5pt}
    
	$(\mathbf{b}_{1},\dots \mathbf{b}_{r})\gets E_{r}(x)$\\
	
    \Return{$(\mathbf{y}_{1},\dots \mathbf{y}_{r})\gets \left( R^{0/1}_{n,\lambda}( \mathbf{b}_{1} ),\dots, R^{0/1}_{n,\lambda}( \mathbf{b}_{r}) \right)$} 
}

\vspace{10pt}

\tcp{Analyzer}
\Fn{$A^\R_{n,\lambda,r}(y_{1,1},\dots, y_{n,r})$}{
	\KwIn{$(y_{1,1},\dots, y_{n,r})\in \zo^{n\cdot r}$, parameters $n,r \in \N, \lambda\in(0,n)$.}
	\KwOut{$z\in [0,n]$} \vspace{5pt}

	\Return{$z\leftarrow \frac{1}{r}\cdot \frac{n}{n-\lambda} \left( \left( \sum_j \sum_i y_{i,j} \right) - \frac{\lambda\cdot r}{2} \right)$} 
}
\end{algorithm}

\begin{thm}
\label{thm:main-real-sum}
For every $\delta = \delta(n)$ such that $e^{-\Omega(n^{1/4})} < \delta(n) < \frac{1}{n}$ and $\frac{\mathrm{poly(\log n)}}{n}$ $< \eps < 1$ and every sufficiently large $n$, there exists parameters $\lambda \in [0,n],r\in \N$ such that $P^\R_{n,\lambda,r}$ is both $(\eps,\delta)$ differentially private and for every $\beta > 0$, and every $X = (x_1,\dots,x_n) \in [0,1]^n$,
$$
\pr{}{\left| P^\R_{n,\lambda,r}(X) - \sum_{i=1}^{n} x_{i} \right| > O\left( \frac{1}{\eps} \log\frac{1}{\delta} \sqrt{ \log\frac{1}{\beta}}\right) } \leq \beta
$$
\end{thm}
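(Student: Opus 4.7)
\textbf{Proof plan for Theorem~\ref{thm:main-real-sum}.}

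\emph{Privacy by composition.} The encoder $R^\R_{n,\lambda,r}$ applies $R^{0/1}_{n,\lambda}$ independently to each of the $r$ randomized-rounding bits, and the shuffled protocol $\Pi^\R$ shuffles all $n\cdot r$ outputs. Thus $\Pi^\R$ is precisely the composition (in the sense of Lemma~\ref{ref:mixnetcomposition}) of $r$ independent copies of the bit-sum protocol $\Pi^{0/1}_{n,\lambda}$, one per coordinate. The plan is to pick $\lambda$ so that each bit-sum copy is $(\eps_0,\delta_0)$-differentially private via Theorem~\ref{thm:bitsumprivacy}, and then solve the composition equation $\eps = \eps_0(e^{\eps_0}-1)r + \eps_0 \sqrt{2r\log(1/\delta')}$ with $\delta' = \delta/2$ and $\delta_0 = \delta/(2r)$. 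In the relevant regime $\eps_0$ will be small so the second term dominates, giving $\eps_0 \asymp \eps/\sqrt{r\log(1/\delta)}$, and then Theorem~\ref{thm:bitsumprivacy} yields $\lambda \asymp \log(r/\delta)/\eps_0^2 \asymp r\log^2(1/\delta)/\eps^2$ (in the ``large $\eps_0$'' branch; the small-$\eps_0$ branch of that theorem is handled analogously).

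\emph{Accuracy via a two-term decomposition.} Writing $b_{i,j}$ for the output of the encoder $E_r$ on user $i$'s input and using $\expect{\tfrac{1}{r}\sum_j b_{i,j}} = x_i$, I would split
\begin{equation*}
P^\R_{n,\lambda,r}(X) - \sum_i x_i \;=\; \underbrace{P^\R_{n,\lambda,r}(X) - \tfrac{1}{r}\sum_{i,j} b_{i,j}}_{\text{(I) shuffled-protocol noise}} \;+\; \underbrace{\tfrac{1}{r}\sum_{i,j} b_{i,j} - \sum_i x_i}_{\text{(II) rounding error}}.
\end{equation*}
By the definition of $A^\R$, term (I) equals $\tfrac{1}{r}\sum_{j=1}^r\bigl(A^{0/1}_{n,\lambda}(y_{\cdot,j}) - \sum_i b_{i,j}\bigr)$, which is an average of $r$ independent, mean-zero random variables each with standard deviation $O\bigl(\sqrt{\lambda}\cdot n/(n-\lambda)\bigr)$ by inspection of the proof of Theorem~\ref{thm:bit-sum-accurate}. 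A Bernstein/Chernoff bound (together with the per-summand tail bound of Theorem~\ref{thm:bit-sum-accurate} to handle boundedness) then gives $|(\mathrm{I})| = O\bigl(\sqrt{\lambda/r}\cdot n/(n-\lambda)\cdot\sqrt{\log(1/\beta)}\bigr)$ with probability $\geq 1-\beta/2$. For term (II), each user contributes a zero-mean random variable $\tfrac{1}{r}\sum_j b_{i,j} - x_i$ bounded by $1/r$ and independent across users, so Hoeffding yields $|(\mathrm{II})| = O\bigl(\sqrt{n\log(1/\beta)}/r\bigr)$ with probability $\geq 1-\beta/2$.

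\emph{Parameter balancing.} Plugging in the privacy-driven choice $\lambda \asymp r\log^2(1/\delta)/\eps^2$, term (I) is $O\bigl(\log(1/\delta)/\eps \cdot \sqrt{\log(1/\beta)}\bigr)$ (in the regime $\lambda \ll n$, which will hold for our choices), while term (II) is $O\bigl(\sqrt{n\log(1/\beta)}/r\bigr)$. Equating the two terms (ignoring the $\sqrt{\log(1/\beta)}$ factor common to both) suggests $r \asymp \sqrt{n}\,\eps/\log(1/\delta)$, which then gives $\lambda \asymp \sqrt{n}\log(1/\delta)/\eps$. A quick check confirms $\lambda \leq n$ precisely when $\eps \gtrsim \log(1/\delta)/\sqrt{n}$, which is consistent with the hypothesis on $\eps$, and that the corresponding $\eps_0$ sits in the ``large-$\eps_0$'' branch of Theorem~\ref{thm:bitsumprivacy} so $\delta_0 = \delta/(2r)$ suffices; a union bound over the two accuracy events completes the argument.

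\emph{Main obstacle.} The delicate step is verifying, for every $\eps$ in the stated range, that the chosen $\lambda$ and $r$ simultaneously satisfy (a) the hypothesis $\lambda \in [14\log(4/\delta_0),n]$ of Theorem~\ref{thm:bitsumprivacy}, (b) the small-$\eps_0$ condition under which the composition bound is dominated by its $\sqrt{r\log(1/\delta')}$ term, and (c) the accuracy bookkeeping with the $n/(n-\lambda)$ blow-up. Handling the boundary regime $\eps$ close to $\log(1/\delta)/\sqrt{n}$ requires invoking the second branch of Theorem~\ref{thm:bitsumprivacy}, and I expect this case analysis to be the most tedious part; the restriction to $\eps > \mathrm{poly}(\log n)/n$ and $\delta > e^{-\Omega(n^{1/4})}$ in the theorem statement should be exactly what is needed to keep both branches tractable.
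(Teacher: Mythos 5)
Your proposal is correct and takes essentially the same route as the paper: privacy via composing $r$ copies of the bit-sum protocol with $\eps_0 = \eps/\sqrt{8r\log(2/\delta)}$ and $\delta_0 = \delta/(2r)$, and accuracy via the same decomposition into rounding error (Hoeffding over the $n$ users, giving $O(\sqrt{n\log(1/\beta)}/r)$) plus shuffled-protocol noise (Bernstein over the $nr$ randomized messages, giving $O(\tfrac{n}{n-\lambda}\sqrt{(\lambda/r)\log(1/\beta)})$). The only difference is in parameter choice and is immaterial: the paper fixes $r = \lceil \eps\sqrt{n}\rceil$ so the rounding term is strictly dominated rather than balancing the two terms as you do with $r \asymp \eps\sqrt{n}/\log(1/\delta)$, both yielding the stated $O\bigl(\tfrac{1}{\eps}\log\tfrac{1}{\delta}\sqrt{\log\tfrac{1}{\beta}}\bigr)$ bound, and the small-$\eps$ boundary regime is handled exactly as you anticipate, by the second branch of Theorem~\ref{thm:bitsumprivacy} (with $r=1$ the composition is vacuous and the $\sqrt{n}$ rounding error is below $1/\eps$).
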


\ifnum\eurocrypt=0
\subsection{Warmup: One Message}

To simplify the discussion, we will handle the case where $\eps < 1/\sqrt{n}$ is quite small, in which it suffices to consider $r = 1$.  In this case the protocol is \emph{exactly} the bit-sum protocol run on the bits $r_1,\dots,r_n$.  In this case we have two sources of error, the rounding, and the bit-sum protocol itself, and we can simply analyze the combination.  

The error of the rounding is bounded by Hoeffding's inequality. \jnote{Fix ref to Hoeffding and the broken link.}
\begin{claim}
For every $n \in \N$, $x_1,\dots,x_n \in [0,1]$, and $\beta > 0$,
$
\pr{}{\left|  \sum_{i=1}^{n} x_i - \sum_{i=1}^{n} E_{1}(x_i) \right| > \sqrt{\tfrac{1}{2}n \log(2/\beta)}} \leq \beta.
$
\end{claim}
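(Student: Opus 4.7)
The plan is to reduce this to a direct application of Hoeffding's inequality on a sum of bounded independent random variables, after first verifying that $E_1(x)$ is simply a Bernoulli trial with mean $x$.

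First, I would unpack the definition of $E_r$ from Algorithm~\ref{alg:encode-real} in the special case $r = 1$. For $x \in (0,1)$, we get $\mu = \lceil x \rceil = 1$ and $p = x - 1 + 1 = x$, so the only coordinate output is $\mathrm{Ber}(x)$. The boundary cases $x=0$ (which gives $\mu = 0$ and output $0$) and $x=1$ (which gives $\mu = 1$, $p = 1$, and output $1$) both agree with $\mathrm{Ber}(x)$. Hence $E_1(x_i)$ is distributed as $\mathrm{Ber}(x_i)$, a $\{0,1\}$-valued random variable with mean $x_i$, and the $E_1(x_i)$ are mutually independent across $i$.

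Next, I would set $Z_i := E_1(x_i) - x_i$, so that each $Z_i$ is a mean-zero random variable bounded in the interval $[-x_i, 1-x_i] \subseteq [-1, 1]$ of length at most $1$. The quantity of interest is $\sum_{i=1}^n Z_i = \sum_i E_1(x_i) - \sum_i x_i$. By Hoeffding's inequality for bounded independent random variables, for any $t > 0$,
\begin{equation*}
\pr{}{\left| \sum_{i=1}^{n} Z_i \right| > t} \leq 2 \exp\!\left( - \frac{2 t^2}{n} \right).
\end{equation*}

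Finally, I would choose $t = \sqrt{\tfrac{1}{2} n \log(2/\beta)}$, which makes the right-hand side equal to $2 \exp(-\log(2/\beta)) = \beta$. Substituting this choice of $t$ yields exactly the claimed tail bound.

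There is no real obstacle here: the only subtlety is confirming the boundary behavior of the encoder $E_1$ (since $\lceil 0 \rceil = 0$ triggers the ``$j > \mu$'' branch vacuously when $j=1,\mu=0$). Once that case check is in hand, Hoeffding finishes the proof in one line.
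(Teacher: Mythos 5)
Your proof is correct and follows essentially the same route as the paper, which simply invokes Hoeffding's inequality on the $\{0,1\}$-valued rounding variables $E_1(x_i)$ (the general-$r$ version, Claim~\ref{clm:real-sum-encoding-error} in the appendix, is proved the same way). Your explicit verification that $E_1(x)=\mathrm{Ber}(x)$ including the boundary cases and the choice $t=\sqrt{\tfrac12 n\log(2/\beta)}$ match the paper's intended one-line argument.
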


Using this claim, combined with Corollary~\ref{coro:bit-sum-accurate-concrete}, we immediately obtain the following
\begin{thm}
For every $n \in \N$, $\delta \in (0,1)$, and $\eps \in \left[\frac{\sqrt{3456}}{n} \log \frac{4}{\delta}, \sqrt{\frac1n}\right]$, and every $\beta > 0$, there is a $\lambda$ such that the protocol $P_{n,\lambda,1}$ is $(\eps,\delta)$-differentially private and for every $x_1,\dots,x_n \in \N$,
\begin{mymath}
\pr{}{\left| P_{n,\lambda,1}(x) - \sum_{i=1}^{n} x_i \right| > \alpha} \leq \beta
\end{mymath}
for
$$
\alpha = O\left(\left(\sqrt{n} +  \tfrac{1}{\eps} \sqrt{\log \tfrac{1}{\delta}}\right) \sqrt{\log \tfrac{1}{\beta}}\right) = O\left(\tfrac{1}{\eps} \sqrt{\log \tfrac{1}{\delta} \log \tfrac{1}{\beta}}\right)
$$
\end{thm}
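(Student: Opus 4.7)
The plan is to observe that $P^\R_{n,\lambda,1}$ is literally the bit-sum protocol applied to the Bernoulli roundings $b_i = E_1(x_i) \sim \Ber(x_i)$: one checks from Algorithm~\ref{alg:real-sum-shuffled} that, for $r=1$, the analyzer is exactly the bit-sum analyzer, so $P^\R_{n,\lambda,1}(X) = P^{0/1}_{n,\lambda}(E_1(x_1),\dots,E_1(x_n))$. Given this, the proof splits cleanly into a privacy argument and a two-term accuracy argument, with the hypothesis $\eps \le 1/\sqrt{n}$ used only at the very end to collapse the error bound.

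For privacy, I would argue as follows. Fix neighboring $X \sim X'$ differing only at index $j$. Couple the encodings so that $E_1(x_i) = E_1(x_i')$ for all $i \neq j$; then the two random vectors $B = (b_1,\dots,b_n)$ and $B' = (b_1',\dots,b_n')$ either coincide or are neighboring in $\zo^n$. By Corollary~\ref{coro:P_lambda-private} (with the choice of $\lambda$ from Lemma~\ref{lem:generate-lambda}, valid in the stated range of $\eps$), $P^{0/1}_{n,\lambda}$ is $(\eps,\delta)$-differentially private on every pair of neighboring bit vectors. Averaging over the coupling (equivalently, conditioning on the randomness of the encodings $E_1(x_i)$ for $i \ne j$, which is independent of $x_j$) and invoking post-processing (Lemma~\ref{lem:postprocessing}) yields $(\eps,\delta)$-privacy for $P^\R_{n,\lambda,1}$.

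For accuracy, I would apply the triangle inequality
\[
\Bigl| P^\R_{n,\lambda,1}(X) - \sum_i x_i \Bigr| \;\le\; \Bigl| P^{0/1}_{n,\lambda}(B) - \sum_i b_i \Bigr| \;+\; \Bigl| \sum_i b_i - \sum_i x_i \Bigr|,
\]
and bound each term separately with failure probability $\beta/2$. The first term is controlled by Corollary~\ref{coro:bit-sum-accurate-concrete} (or Theorem~\ref{thm:bit-sum-accurate}) conditional on $B$, giving $O\bigl(\tfrac{1}{\eps}\sqrt{\log(1/\delta)\log(1/\beta)}\bigr)$; note the conditional bound holds uniformly in $B$, so it holds unconditionally. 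The second term is the rounding error, bounded by the displayed Claim via Hoeffding by $\sqrt{\tfrac12 n \log(4/\beta)}$, since the $b_i$ are independent and $\expect{}{b_i} = x_i$. A union bound then gives the first form of $\alpha$.

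The last step is just algebra: because the hypothesis restricts to $\eps \le 1/\sqrt{n}$, we have $\sqrt{n} \le 1/\eps \le \tfrac{1}{\eps}\sqrt{\log(1/\delta)}$ (using $\delta < 1$), so the $\sqrt{n}\sqrt{\log(1/\beta)}$ rounding term is absorbed into the bit-sum term, yielding the cleaner bound $\alpha = O\bigl(\tfrac{1}{\eps}\sqrt{\log(1/\delta)\log(1/\beta)}\bigr)$. There is no real obstacle here; the only subtle point is phrasing the privacy reduction so that the bit-sum guarantee is applied pointwise in the encoding randomness before marginalizing, which is exactly what the coupling formulation makes transparent.
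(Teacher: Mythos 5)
Your proposal is correct and follows essentially the same route as the paper: the paper likewise identifies $P^\R_{n,\lambda,1}$ with the bit-sum protocol run on the Bernoulli roundings $E_1(x_i)$, bounds the rounding error via Hoeffding, invokes Corollary~\ref{coro:bit-sum-accurate-concrete} for the protocol's privacy and error, and uses $\eps \le 1/\sqrt{n}$ to absorb the $\sqrt{n}$ rounding term into the $\tfrac{1}{\eps}\sqrt{\log(1/\delta)}$ term. Your explicit coupling/marginalization argument for privacy merely spells out what the paper treats as immediate, so there is no substantive difference.
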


Summing up, when $\eps < 1/\sqrt{n}$, the error term coming from rounding is smaller than the error already in the bit-sum protocol.  Thus, we have established Theorem~\ref{thm:main-real-sum} for the regime where $\eps < 1/\sqrt{n}$.  However, when $\eps$ is larger, the bit-sum protocol has much less than $\sqrt{n}$ error, so we will need to perform the more elaborate rounding with $r > 1$.  
\fi

\subsection{Privacy Analysis}

Privacy will follow immediately from the composition properties of shuffled protocols (Lemma~\ref{ref:mixnetcomposition}) and the privacy of the bit-sum protocol $P_{n,\lambda}$.    One technical nuisance is that the composition properties are naturally stated in terms of $\eps$, whereas the protocol is described in terms of the parameter $\lambda$, and the relationship between $\eps, \lambda,$ and $n$ is somewhat complex.  Thus, we will state our guarantees in terms of the level of privacy that each individual bit-sum protocol achieves with parameter $\lambda$.  To this end, define the function $\lambda^*(n,\eps,\delta)$ to be the minimum value of $\lambda$ such that the bit-sum protocol with $n$ users satisfies $(\eps,\delta)$-differential privacy.  We will state the privacy guarantee in terms of this function.
\begin{thm}
\label{coro:real-sum-private} 
For every $\eps,\delta \in (0,1), n,r \in \N$, define
$$
\eps_0 = \frac{\eps}{\sqrt{8 r \log(2/\delta)}}~~~~~~\delta_0 = \frac{\delta}{2r}~~~~~~\lambda^* = \lambda^*(n,\eps_0,\delta_0)
$$
For every $\lambda \geq \lambda^*$, $P^\R_{n,\lambda,r}$ is $(\eps,\delta)$-differentially private.
\end{thm}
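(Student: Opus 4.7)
The plan is to view $P^\R_{n,\lambda,r}$ as the composition of $r$ copies of the bit-sum shuffled protocol $P^{0/1}_{n,\lambda}$, one per coordinate of the encoding, and then invoke the shuffled-model composition lemma (Lemma~\ref{ref:mixnetcomposition}). By the definition of $\lambda^* = \lambda^*(n,\eps_0,\delta_0)$, the assumption $\lambda \ge \lambda^*$ guarantees that each bit-sum sub-protocol satisfies $(\eps_0,\delta_0)$-DP.

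For each $j\in[r]$, define the local randomizer $R_j(x)=R^{0/1}_{n,\lambda}\bigl(E_r(x)_j\bigr)$ and the shuffled sub-protocol $\Pi_j=(R_j,\shuff)$. Unpacking Algorithm~\ref{alg:real-sum-shuffled}, $P^\R_{n,\lambda,r}$ is exactly of the form $\shuff\bigl(\bigcup_{i,j}R_j(x_i)\bigr)$ followed by the analyzer, which matches the setting of Lemma~\ref{ref:mixnetcomposition}. The one non-routine step is verifying that each $\Pi_j$ is $(\eps_0,\delta_0)$-DP: I would condition on the encoder coins (independent across users) and observe that, for every realization, the inputs to the $j$-th bit-sum derived from any neighbors $X\sim X'$ differ in at most one position; the bit-sum privacy then yields $(\eps_0,\delta_0)$-DP conditionally, and convexity of differential privacy lifts this to the marginal.

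Applying Lemma~\ref{ref:mixnetcomposition} with $T=r$, per-component parameters $(\eps_0,\delta_0)$, and slack $\delta'=\delta/2$ yields that $P^\R_{n,\lambda,r}$ is $(\eps',\delta'+r\delta_0)$-DP with
\begin{mymath}
\eps' \;=\; \eps_0\bigl(e^{\eps_0}-1\bigr)\,r \;+\; \eps_0\sqrt{2r\log(1/\delta')}.
\end{mymath}
Substituting $\delta_0=\delta/(2r)$ and $\delta'=\delta/2$ makes the total $\delta$-parameter exactly $\delta$. With $\eps_0=\eps/\sqrt{8r\log(2/\delta)}$, the second summand of $\eps'$ evaluates to exactly $\eps/2$. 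Using $e^{\eps_0}-1\le 2\eps_0$ (valid since $\eps_0\le 1$ when $\eps\le 1$), the first summand is at most $2r\eps_0^2=\eps^2/\bigl(4\log(2/\delta)\bigr)\le \eps/2$ because $\eps\le 1\le 2\log(2/\delta)$. Hence $\eps'\le \eps$, completing the proof.

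The main subtlety is that the encoder $E_r$'s single coin flip is shared across the $r$ output coordinates of a given user, which prevents a direct ``independent mechanisms'' application of composition; this is resolved by conditioning on the encoder coins and invoking convexity of DP, after which the computation is a clean parameter substitution into the advanced composition bound.
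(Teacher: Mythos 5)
Your proposal is correct and follows exactly the route the paper intends: it decomposes $P^\R_{n,\lambda,r}$ into $r$ instances of the bit-sum protocol (each $(\eps_0,\delta_0)$-private since $\lambda \ge \lambda^*$) and invokes the shuffled-model composition lemma (Lemma~\ref{ref:mixnetcomposition}) with $T=r$ and $\delta'=\delta/2$, after which the parameter arithmetic gives $\eps'\le\eps$ and total failure probability $\delta$. The paper states this only as an immediate consequence of composition and omits the details; your conditioning-on-encoder-coins/convexity step and the explicit bound $e^{\eps_0}-1\le 2\eps_0$ correctly fill in what the paper leaves implicit.
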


\jnote{Consider giving proof?}

\subsection{Accuracy Analysis}
In this section, we bound the error of $P^\R_{\lambda,r}(X)$ with respect to $\sum_i x_i$. Recall that $f(X) = \sum_i x_i$.


Observe that there are two sources of randomness: the encoding of the input $X=(x_1,\dots x_n)$ as bits and the execution of $R^{0/1}_{n,\lambda}$ on that encoding. We first show that the bit encoding lends itself to an unbiased and concentrated estimator of $f(X)$. Then we show that the output of $P_{n,\lambda,r}$ is concentrated around any value that estimator takes.

\begin{thm}
\label{thm:real-sum-error}
For every $\beta > 0$, $n \geq \lambda \geq \frac{16}{9}\log\frac{2}{\beta}$, $r\in \N$, and $X\in [0,1]^n$,
\[
\pr{}{\left| P^\R_{n,\lambda,r}(X) - f(X) \right| 
\geq \tfrac{\sqrt{2}}{r}\sqrt{ n\log\tfrac{2}{\beta}} + \tfrac{n}{n-\lambda} \cdot \sqrt{2\tfrac{\lambda}{r}  \log \tfrac{2}{\beta}}} < 2\beta
\]
\end{thm}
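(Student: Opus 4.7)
The plan is to decompose the error into two independent sources---the randomized rounding performed by the encoder $E_r$, and the noise injected by the $r$ parallel bit-sum protocols---and to bound each source with a concentration inequality. Let $b_{i,j}$ denote the $j$-th bit produced by $E_r(x_i)$, and define the intermediate quantity
$$
\widetilde f \;=\; \frac{1}{r}\sum_{i=1}^n \sum_{j=1}^r b_{i,j}.
$$
By the triangle inequality,
$$
\bigl|P^{\R}_{n,\lambda,r}(X) - f(X)\bigr| \;\leq\; \bigl|\widetilde f - f(X)\bigr| \,+\, \bigl|P^{\R}_{n,\lambda,r}(X) - \widetilde f\bigr|,
$$
and I will show that each term is bounded by the corresponding summand in the theorem with probability at least $1-\beta$, then union bound.

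For the encoding error, the construction of $E_r$ ensures that each $\tfrac{1}{r}\sum_j b_{i,j}$ has mean $x_i$ and is supported on two adjacent points $1/r$ apart. Since the encodings are independent across $i$, Hoeffding's inequality applied to $n$ independent variables in intervals of length $1/r$ yields $|\widetilde f - f(X)| \leq \tfrac{\sqrt{2}}{r}\sqrt{n\log(2/\beta)}$ with probability at least $1-\beta$.

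For the mechanism error, condition on the encoded bits. The messages $y_{i,j}$ are then mutually independent with $\mathbb{E}[y_{i,j}\mid b_{i,j}] = (1-\lambda/n)b_{i,j} + \lambda/(2n)$. Substituting this into the analyzer's formula and simplifying shows that $\mathbb{E}[P^{\R}_{n,\lambda,r}(X)\mid \{b_{i,j}\}] = \widetilde f$, and
$$
P^{\R}_{n,\lambda,r}(X) - \widetilde f \;=\; \frac{n}{r(n-\lambda)}\sum_{i,j}\bigl(y_{i,j} - \mathbb{E}[y_{i,j}\mid b_{i,j}]\bigr).
$$
Each of the $nr$ summands is bounded in $[-1,1]$ with conditional variance at most $\lambda/(2n)$, since $y_{i,j}$ is Bernoulli with parameter in $\{\lambda/(2n),\,1-\lambda/(2n)\}$, so the total variance is at most $r\lambda/2$. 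Bernstein's inequality then delivers a deviation bound of order $\sqrt{r\lambda\log(2/\beta)}$ plus a linear correction of order $\log(2/\beta)$; the assumption $\lambda \geq \tfrac{16}{9}\log(2/\beta)$ is precisely enough to absorb the linear correction into the main term, giving $|P^{\R}_{n,\lambda,r}(X) - \widetilde f| \leq \tfrac{n}{n-\lambda}\sqrt{\tfrac{2\lambda}{r}\log(2/\beta)}$ with probability at least $1-\beta$.

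The main obstacle is extracting the $1/\sqrt{r}$ improvement in the mechanism error. Invoking Theorem~\ref{thm:bit-sum-accurate} on each of the $r$ bit-sum protocols separately and then applying the triangle inequality would cost a factor of $\sqrt{r}$ in the error; the savings are only visible when the noise across all $r$ copies is treated as a single sum of $nr$ independent, small-variance terms and Bernstein's inequality is applied in one shot. This is the step that allows the real-sum protocol to beat a naive ``error per copy'' bound and to rival central-model accuracy as $r$ grows.
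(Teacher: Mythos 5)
Your proposal is correct and follows essentially the same route as the paper's proof: decompose via the rounded sum $\widetilde f=\frac1r\sum_{i,j}b_{i,j}$, bound the rounding error by Hoeffding over the $n$ users, and, conditioning on the bits, bound the mechanism error by a single application of Bernstein to all $nr$ centered messages (which is exactly how the paper obtains the $\sqrt{\lambda/r}$ term, with the hypothesis $\lambda\geq\tfrac{16}{9}\log\tfrac{2}{\beta}$ playing the same role of making the variance term dominate), finishing with a union bound. The only cosmetic differences are your slightly tighter support-length observation in the Hoeffding step and your use of the standard two-term Bernstein inequality instead of the paper's packaged variant (Theorem~\ref{thm:bernstein}); both yield the stated constants.
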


The analysis can be found in \ifnum\makeappendix=1 \ifnum\eurocrypt=0 Appendix \else Supplementary Material \fi\ref{sec:appendix-realsum}. Later in that section, we argue that setting $r\leftarrow \eps\cdot \sqrt{n}$ suffices to achieve the bound in Theorem \ref{thm:main-real-sum}\else the full version of the paper, which also argues that setting $r\leftarrow \eps\cdot \sqrt{n}$ suffices to achieve the bound in Theorem \ref{thm:main-real-sum}\fi.

\section{Lower Bounds for the Shuffled Model}
\label{sec:lower-bounds}

In this section, we prove separations between central model algorithms and shuffled model protocols where each user's local randomizer is identical and sends one indivisible message to the shuffler (the one-message model).


\begin{thm}[Shuffled-to-Local Transformation]
\label{thm:shuffled-to-local}
Let $P_{S}$ be a protocol in the one-message shuffled model that is
\begin{itemize}
\item $(\eps_{S},\delta_{S})$-differentially private in the shuffled model for some $\eps_{S} \leq 1$ and $\delta_{S} = \delta_{S}(n) < n^{-8}$, and
\item $(\alpha,\beta)$-accurate with respect to $f$ for some $\beta = \Omega(1)$.
\end{itemize}
Then there exists a protocol $P_{L}$ in the local model that is
\begin{itemize}
\item $(\eps_{L}, 0)$-differentially private in  the local model for $\eps_{L} = 8(\eps_{S} + \ln n)$, and
\item $(\alpha,4\beta)$-accurate with respect to $f$ (when $n$ is larger than some absolute constant)
\end{itemize}
\end{thm}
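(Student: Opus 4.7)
The plan is to combine two ingredients: the structural result Theorem~\ref{thm:remove-shuffler-informal}, which says that the local randomizer in a one-message shuffled protocol must itself satisfy local differential privacy, and the transformation of~\cite{bns18}, which upgrades any $(\eps,\delta)$-DP local randomizer (with small enough $\delta$) to a pure $(8\eps,0)$-DP local randomizer at a small cost in total variation distance. The factor of $8$ in the target $\eps_L = 8(\eps_S + \ln n)$ is exactly the blowup incurred by the BNS transformation, and the quantity $\eps_S + \ln n$ is exactly what comes out of Theorem~\ref{thm:remove-shuffler-informal}, so the overall shape of the argument is forced.

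First I would apply Theorem~\ref{thm:remove-shuffler-informal} to $P_S = (R,S,A)$ to conclude that the local randomizer $R$ is itself $(\eps_S + \ln n, \delta_S)$-DP. This immediately produces an intermediate local protocol $P_L^{(0)} := (R, A \circ S)$ with exactly the same output distribution as $P_S$: we have only re-classified the shuffler as part of the (post-processing) analyzer. Consequently $P_L^{(0)}$ is $(\alpha,\beta)$-accurate for $f$ and $(\eps_S + \ln n, \delta_S)$-DP in the local model.

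Next I would invoke~\cite{bns18} on the randomizer $R$, with $\eps' := \eps_S + \ln n$, to obtain a pure $(8\eps',0)$-DP local randomizer $R'$ satisfying $d_{TV}(R(x), R'(x)) \le O(\delta_S / \eps')$ for every $x \in \cX$. Setting $P_L := (R', A \circ S)$, by construction $P_L$ is $(\eps_L, 0)$-DP in the local model, matching the claimed privacy guarantee. To transfer accuracy, I would run a standard hybrid argument over the $n$ users together with the fact that post-processing by $A \circ S$ does not increase total variation distance: this shows that the output distribution of $P_L(X)$ differs from that of $P_L^{(0)}(X)$ in TV distance by at most $n \cdot O(\delta_S / \eps') = O(n \delta_S / \ln n)$. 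The hypotheses $\delta_S < n^{-8}$ and $\beta = \Omega(1)$ make this TV distance vanish with $n$, and in particular bound it by $3\beta$ once $n$ exceeds an absolute constant, giving $(\alpha, \beta + 3\beta) = (\alpha, 4\beta)$-accuracy for $P_L$.

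The main obstacle I anticipate is the careful invocation of~\cite{bns18} in the regime $\eps' = \Theta(\ln n) \gg 1$, which is somewhat outside the $\eps \le 1$ range in which the transformation is most often stated; one must check that the construction and its $8\eps$ constant-factor blowup carry over, which they do because the BNS argument is based on truncating extreme likelihood ratios and relies only on $\delta_S$ being polynomially small in $1/n$. Everything else is bookkeeping: applying Theorem~\ref{thm:remove-shuffler-informal}, absorbing the shuffler into post-processing, a per-user hybrid, and a union bound to convert per-user TV distance into a failure-probability increment of at most $3\beta$.
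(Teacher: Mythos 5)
Your overall architecture is the same as the paper's: apply Theorem~\ref{thm:remove-shuffler} to conclude that the one-message randomizer $R$ is itself $(\eps_S+\ln n,\delta_S)$-DP, observe (Claim~\ref{clm:sameAnalysis}) that $(R, A\circ S)$ has exactly the same output distribution as the shuffled protocol, and then purify via the Bun--Nelson--Stemmer transformation extended to large $\eps$ (Theorem~\ref{thm:transform}). The gap is in how you transfer accuracy. You assert that the transformation yields a pure $(8\eps',0)$-DP randomizer $R'$ with $d_{TV}(R(x),R'(x)) \le O(\delta_S/\eps')$ for every input $x$, and then pay $n$ times that in a hybrid. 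That bound is not what BNS (or the paper's extension) provides, and in the regime $\eps'=\eps_S+\ln n$ it is false in general: take $R$ over a huge output alphabet that, on input $i$, emits a ``name tag'' $i$ with probability $\delta_S$ and is uniform otherwise; this is $(0,\delta_S)$-DP, yet any $(8\eps',0)$-DP randomizer $R'$ satisfies $\sum_i \Pr[R'(i)=i]\le e^{8\eps'}$, so once the alphabet size exceeds roughly $e^{8\eps'}/\delta_S$ some input suffers TV distortion close to $\delta_S$, not $O(\delta_S/\ln n)$. The actual construction (rejection sampling against a reference distribution $R(c)$ with likelihood ratios truncated to $[\tfrac12 e^{-2\eps},\tfrac12 e^{2\eps}]$, repeated $T$ times) incurs error terms of the form $n(1-\Theta(e^{-O(\eps)}))^T + nT\delta_S e^{O(\eps)}$; balancing forces $T \approx e^{O(\eps)}\log(n/\beta)$ and a total distortion of order $n\,\delta_S\, e^{O(\eps)}\log(n/\beta)$, i.e.\ $\mathrm{poly}(n)\cdot\delta_S\log n$ when $\eps\approx\ln n$. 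Concretely, the paper's Theorem~\ref{thm:transform} needs $\delta_S < \frac{\beta}{8n\ln(n/\beta)}e^{-6\eps}$, which with $\eps=\eps_S+\ln n\le 1+\ln n$ becomes $\delta_S\lesssim \beta/(n^{7}\ln n)$ --- this is exactly where the hypothesis $\delta_S<n^{-8}$ and the ``$n$ larger than an absolute constant'' proviso are consumed.

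Relatedly, extending BNS from its stated range $\eps\le 1/4$ to $\eps>2/3$ is not a routine remark: the paper re-derives the entire construction and its accuracy analysis in Appendix~\ref{sec:approx-to-pure} precisely to verify that the $8\eps$ blowup survives and to quantify the $e^{6\eps}$ dependence of the admissible $\delta$. Under the stated hypotheses your final conclusion is the right one (since $\delta_S<n^{-8}$ dominates the $\mathrm{poly}(n)$ factor), but the accuracy-transfer step as you wrote it would not hold up, and taken at face value it would prove a stronger theorem (tolerating $\delta_S$ nearly as large as $\beta\ln n/n$) than the known argument supports; you should replace the $O(\delta_S/\eps')$ claim with the $e^{O(\eps')}$-dependent bound that the truncated-rejection-sampling construction actually gives.
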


This means that an impossibility result for approximating $f$ in the local model implies a related impossibility result for approximating $f$ in the shuffled model.  In Section~\ref{sec:lbapplications} we combine this result with existing lower bounds for local differential privacy to obtain several strong separations between the central model and the one-message shuffled model.


The key to Theorem~\ref{thm:shuffled-to-local} is to show that if $P_{S} = (R_{S},S,A_{S})$ is a protocol in the one-message shuffled model satisfying $(\eps_S,\delta_S)$-differential privacy, then the algorithm $R_{S}$ itself satisfies $(\eps_{L},\delta_{S})$-differential privacy without use of the shuffler $S$.  Therefore, the local protocol $P_{L} = (R_{S}, A_{S} \circ S)$ is $(\eps_{L},\delta_{S})$-private in the local model and has the exact same output distribution, and thus the exact same accuracy, as $P_{S}$.
To complete the proof, we use (a slight generalization of) a transformation of Bun, Nelson, and Stemmer~\cite{bns18} to turn $R$ into a related algorithm $R'$ satisfying $(8(\eps_{S} + \ln n),0)$-differential privacy with only a slight loss of accuracy. We prove the latter result in \ifnum\makeappendix=1 \ifnum\eurocrypt=0 Appendix \else Supplementary Material \fi ~\ref{sec:approx-to-pure} \else the full version of the paper\fi.

\subsection{One-message Randomizers Satisfy Local Differential Privacy}

The following lemma is the key step in the proof of Theorem~\ref{thm:shuffled-to-local}, and states that for any symmetric shuffled protocol, the local randomizer $R$ must satisfy local differential privacy with weak, but still non-trivial, privacy parameters.


\begin{thm}
\label{thm:remove-shuffler}
Let $P = (R,S,A)$ be a protocol in the one-message shuffled model. If $n \in \N$ is such that $P_{n}$ satisfies $(\eps_{S},\delta_{S})$-differential privacy, then the algorithm $R$ satisfies $(\eps_{L},\delta_L)$-differential privacy for $\eps_{L} = \eps_{S} + \ln n$.  Therefore, the symmetric local protocol $P_{L} = (R,A \circ S)$ satisfies $(\eps_{L},\delta_L)$-differential privacy.
\end{thm}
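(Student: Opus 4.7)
The plan is to show that if the shuffled protocol is private, then $R$ on a single input must already provide a (weak) local DP guarantee, by using a carefully chosen pair of neighboring datasets that ``amplifies'' information about a single user's randomized output into the shuffled output.

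First, I would fix two arbitrary inputs $a, b \in \cX$ and an arbitrary measurable event $T \subseteq \cY$ in the range of $R$. My goal is to show
\begin{equation*}
\pr{}{R(a) \in T} \leq e^{\eps_S + \ln n}\, \pr{}{R(b) \in T} + \delta_S,
\end{equation*}
which is exactly $(\eps_S + \ln n, \delta_S)$-DP for $R$. The rest then follows from post-processing applied to $A \circ S$.

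Next I would construct the neighboring datasets $X = (a,b,b,\ldots,b)$ and $X' = (b,b,b,\ldots,b)$, which differ only in the first user's record. Let $W$ be the set of length-$n$ sequences over $\cY$ (equivalently, multisets, since the shuffler's output distribution is invariant under relabeling) that contain at least one coordinate in $T$. This event is well defined on the shuffled output and the DP hypothesis gives
\begin{equation*}
\pr{}{\Pi_R(X) \in W} \leq e^{\eps_S}\, \pr{}{\Pi_R(X') \in W} + \delta_S.
\end{equation*}

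The heart of the argument is sandwiching both sides. For the left, the first user's message $R(a)$ is one of the $n$ shuffled messages, so $\pr{}{\Pi_R(X) \in W} \geq \pr{}{R(a) \in T}$. For the right, a union bound over the $n$ independent copies of $R(b)$ gives $\pr{}{\Pi_R(X') \in W} \leq n \cdot \pr{}{R(b) \in T}$. Chaining the three inequalities yields
\begin{equation*}
\pr{}{R(a) \in T} \leq e^{\eps_S} \cdot n \cdot \pr{}{R(b) \in T} + \delta_S = e^{\eps_S + \ln n}\, \pr{}{R(b) \in T} + \delta_S,
\end{equation*}
which is the desired local DP bound on $R$. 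The conclusion for $P_L = (R, A \circ S)$ then follows from the post-processing lemma (Lemma~\ref{lem:postprocessing}).

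There is no serious obstacle: the proof relies only on the one-message assumption (so that exactly one message per user reaches the shuffler, which makes the union bound tight at factor $n$) and on the fact that ``at least one message lies in $T$'' is a permutation-invariant event. The potentially subtle point is verifying that the event $W$ is indeed well defined on the shuffler's output (it is, since it depends only on the multiset of messages), and recognizing that we never needed a particular structure on $R$, $S$, or $A$ beyond the one-message constraint.
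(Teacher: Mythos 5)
Your proof is correct and follows essentially the same route as the paper: the paper also takes neighboring datasets of the form $(x',x,\dots,x)$ versus $(x,\dots,x)$, uses the permutation-invariant event ``some message lands in $Y$,'' lower-bounds one side by the single-user probability and upper-bounds the other by a union bound over $n$ copies, and chains with the shuffled DP guarantee; the only cosmetic difference is that the paper phrases it via the supremal $\eps$ for which $R$ fails to be private rather than as a direct bound for arbitrary $a,b,T$.
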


\begin{proof}

By assumption, $P_n$ is $(\eps_S,\delta_S)$-private.  Let $\eps$ be the supremum such that $R \from \cX \to \cY$ is \emph{not} $(\eps,\delta_{S})$-private.  We will attempt to find a bound on $\eps$.  If $R$ is not $(\eps,\delta_{S})$-differentially private, there exist $Y \subset \univy$ and $x,x'\in \univx$ such that
\[
\pr{}{R(x')\in Y} > \exp(\eps) \cdot \pr{}{R(x)\in Y} +\delta_{S}
\]
For brevity, define $p := \prob(R(x)\in Y)$ and $p' := \prob(R(x')\in Y)$ so that we have
\begin{equation}
p' > \exp(\eps)p +\delta_{S}\label{eq:transform-contradiction}
\end{equation}

We will show that if $\eps$ is too large, then \eqref{eq:transform-contradiction} will imply that $P_n$ is \emph{not} $(\eps_S,\delta_S)$-differentially private, which contradicts our assumption.  To this end, define the set $\cW:=\{W\in \univy^n~|~\exists i~ w_i \in Y\}$. Define two datasets $X\sim X'$ as $$X := (\underbrace{x,\dots,x}_{n \textrm{ times}})~~~\textrm{and}~~~X' := (x',\underbrace{x,\dots,x}_{n-1 \textrm{ times}})$$
Because $P_n$ is $(\eps_S,\delta_S)$-differentially private
\begin{equation} \label{eq:transform-privacy}
\pr{}{P_n(X')\in \cW} \leq \exp(\eps_S)\cdot \pr{}{P_n(X)\in W} + \delta_S 
\end{equation}
Now we have
\begin{align*}
&\pr{}{P_n(X)\in \cW} \\
={} &\pr{}{\shuff(\underbrace{R(x),\dots,R(x)}_{n \textrm{ times}}) \in \cW}  \\
={} &\pr{}{(\underbrace{R(x),\dots,R(x)}_{n \textrm{ times}}) \in \cW} \tag{$\cW$ is symmetric} \\
={} &\pr{}{\exists i~~R(x) \in Y} 
\leq{} n \cdot \pr{}{R(x) \in Y} \tag{Union bound} \\
={} &np
\end{align*}
where the second equality is because the set $W$ is closed under permutation, so we can remove the random permutation $\shuff$ without changing the probability.  Similarly, we have
\begin{align*}
\pr{}{P_n(X')\in \cW}
={} &\pr{}{(R(x'),\underbrace{R(x)\dots,R(x)}_{n-1 \textrm{ times}}) \in \cW}  \\
\geq{} &\pr{}{R(x') \in Y} 
={} p' \\
>{} &\exp(\eps)p + \delta_{S} \tag{By \eqref{eq:transform-contradiction}}
\end{align*}
Now, plugging the previous two inequalities into \eqref{eq:transform-privacy}, we have
\begin{align*}
\exp(\eps)p + \delta_{S} <{} &\pr{}{P_n(X')\in \cW} \\
\leq{} &\exp(\eps_{S})\cdot \pr{}{P_n(X)\in \cW} \\
\leq{} &\exp(\eps_{S})np + \delta_{S}
\end{align*}
By rearranging and canceling terms in the above we obtain the conclusion
$$
\eps \leq \eps_{S} + \ln n
$$
Therefore $R$ must satisfy $(\eps_{S} + \ln n, \delta_{S})$-differential privacy.
\end{proof}

\begin{clm}
\label{clm:sameAnalysis}
If the shuffled protocol $P_{S} = (R,S,A)$ is $(\alpha,\beta)$-accurate for some function $f$, then the local protocol $P_{L} = (R, A \circ S)$ is $(\alpha,\beta)$-accurate for $f$, where
$$
(A \circ S)(y_1,\dots,y_N) = A( S(y_1,\dots,y_N))
$$
\end{clm}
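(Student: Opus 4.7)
\medskip

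The plan is to observe that the two protocols, when viewed as randomized maps from $\cX^n$ to $\cZ$, induce identical output distributions on every input. Accuracy is defined purely in terms of this output distribution, so the $(\alpha,\beta)$-accuracy property transfers verbatim.

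Concretely, I would unwind the definitions. In the shuffled model, the output of $P_S$ on input $X=(x_1,\dots,x_n)$ is by definition
\[
P_S(X) \;=\; A\bigl(\,S(R(x_1),\dots,R(x_n))\,\bigr),
\]
where the randomness comes from the encoders $R$, the shuffler $S$, and the analyzer $A$. In the local model, the analyzer is allowed to be any (possibly randomized) post-processing of the transcript, and we have defined $P_L(X) = (A\circ S)(R(x_1),\dots,R(x_n))$, which by the stated definition of $A\circ S$ equals $A(S(R(x_1),\dots,R(x_n)))$. Thus, as randomized functions of $X$, the two protocols are literally the same algorithm.

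Since $P_S(X)$ and $P_L(X)$ are identically distributed for every $X$, we get, for every $X \in \cX^*$,
\[
\Pr\bigl[\,d(P_L(X), f(X)) \leq \alpha\,\bigr] \;=\; \Pr\bigl[\,d(P_S(X), f(X)) \leq \alpha\,\bigr] \;\geq\; 1-\beta,
\]
where the final inequality is the assumed $(\alpha,\beta)$-accuracy of $P_S$. This exactly matches the definition of $(\alpha,\beta)$-accuracy for $P_L$, completing the argument.

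There is no real obstacle here: the claim is essentially a restatement of the fact that the shuffled-model output is a function of the randomizers' outputs, and so the shuffler can always be absorbed into the analyzer for the purpose of any property (like accuracy) that depends only on the final output distribution. The only subtle point worth flagging is that this absorption is legitimate because the local-model framework places no restriction on the analyzer beyond post-processing, so $A\circ S$ is a valid local-model analyzer.
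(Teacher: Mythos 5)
Your proposal is correct and matches the paper's treatment: the paper omits a formal proof precisely because, as you observe, $P_S(X)$ and $P_L(X)$ are the same randomized algorithm (the shuffler is absorbed into the analyzer), so their output distributions coincide and accuracy transfers immediately. Nothing further is needed.
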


We do not present a proof of Claim \ref{clm:sameAnalysis}, as it is immediate that the distribution of $P_{S}(x)$ and $P_{L}(x)$ are identical, since $A \circ S$ incorporates the shuffler.

We conclude this section with a slight extension of a result of Bun, Nelson, and Stemmer~\cite{bns18} showing how to transform any local algorithm satisfying $(\eps,\delta)$-differential privacy into one satisfying $(O(\eps),0)$-differential privacy with only a small decrease in accuracy.  Our extension covers the case where $\eps > 2/3$, whereas their result as stated requires $\eps \leq 1/4$.

\begin{thm} [Extension of~\cite{bns18}]
\label{thm:transform}
Suppose local protocol $P_{L} = (R,A)$ is $(\eps,\delta)$ differentially private and $(\alpha,\beta)$ accurate with respect to $f$. If $\eps > 2/3$ and 
$$
\delta < \frac{\beta}{8n\ln(n/\beta)}\cdot \frac{1}{\exp(6\eps)}
$$
then there exists another local protocol $P_{L}' = (R',A)$ that is $(8\eps,0)$ differentially private and $(\alpha,4\beta)$ accurate with respect to $f$.
\end{thm}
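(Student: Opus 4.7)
The plan is to follow the Bun--Nelson--Stemmer transformation, which produces a pure-DP randomizer $R'$ by truncating $R$ on a low-probability ``bad'' set and replacing the truncated outputs with draws from a canonical reference distribution $\pi_0$ (for concreteness, take $\pi_0 = R(x_0)$ for a fixed reference $x_0 \in \cX$, or an average over $\cX$). The original proof assumes $\eps \leq 1/4$ and absorbs several exponential factors via $e^\eps \approx 1 + \eps$; our task is to redo the bookkeeping for $\eps > 2/3$, and the $e^{6\eps}$ factor in the hypothesized upper bound on $\delta$ is precisely what surfaces once those approximations are kept explicit.

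Concretely, for each $x \in \cX$ I would define the bad set $B_x \subseteq \cY$ to be the outputs $y$ at which the density ratio $\Pr[R(x) = y] / \pi_0(y)$ falls outside $[e^{-7\eps}, e^{7\eps}]$, so that on the complement the distribution of $R(x)$ is already pure-$7\eps$-DP against $\pi_0$. The new randomizer $R'(x)$ samples $y \sim R(x)$, returns it if $y \notin B_x$, and otherwise returns a fresh draw from $\pi_0$. Because the fallback distribution does not depend on $x$, the pairwise density ratio of $R'(x)$ against $R'(x')$ is at most $e^{14\eps}$ on the good region and is bounded on the fallback region by the ratio of bad-set probabilities $q_x := \Pr[R(x) \in B_x]$. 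Balancing the two contributions and slightly tightening the threshold in $B_x$ then yields the stated $(8\eps, 0)$-pure-DP guarantee for $R'$, which is preserved by post-processing through $A$.

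The quantitative heart of the proof is a bound on $q_x$. Applying approximate DP of $R(x)$ against $\pi_0$ to the event $B_x$ gives $q_x \leq e^\eps \pi_0(B_x) + \delta \leq e^{\eps - 7\eps} q_x + \delta$ on the upper-ratio portion, so that $q_x \leq 2\delta / (1 - e^{-6\eps})$; the lower-ratio portion is symmetric. A Chernoff-style concentration bound across the $n$ users, with the $\ln(n/\beta)$ factor in the hypothesized bound on $\delta$ absorbing the log slack needed to sharpen a naive union bound, then shows that with probability at least $1 - 3\beta$ no user's sample lands in its own bad set. Conditioned on this good event, $(R'(x_1), \ldots, R'(x_n))$ and $(R(x_1), \ldots, R(x_n))$ are identically distributed, so $P_L' = (R', A)$ and $P_L = (R, A)$ agree in output, and $P_L'$ inherits $(\alpha, 4\beta)$-accuracy. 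The main obstacle is the pure-DP check: the fallback-region ratio depends on $q_x / q_{x'}$, which must itself be bounded using approximate DP applied to the bad set; keeping these ratios in line without the small-$\eps$ simplifications of the original BNS18 argument is exactly what forces the $e^{6\eps}$ slack in the $\delta$ requirement.
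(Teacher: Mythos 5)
There is a genuine gap, and it sits exactly at the step you defer to the end: the pure-DP check on the fallback region. In your construction the density of $R'(x)$ at a point $y$ in the bad set $B_x$ is $q_x\,\pi_0(y)$, where $q_x=\Pr[R(x)\in B_x]$, so on that region the privacy ratio between neighboring inputs is governed by $q_x/q_{x'}$. Approximate DP gives only the additive bound $q_x\le e^{\eps}q_{x'}+\delta$; it never gives a multiplicative bound on small-probability events, and indeed one can have $q_{x'}=0$ while $q_x>0$ (for instance, $B_{x'}$ can consist entirely of points where $\Pr[R(x')=y]=0$ but $\pi_0(y)>0$, so $\pi_0(B_{x'})>0$ yet $q_{x'}=0$). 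At such a point the ratio $\Pr[R'(x)=y]/\Pr[R'(x')=y]$ is infinite, so the proposed $R'$ is not $(\eps',0)$-DP for any finite $\eps'$. Patching this by mixing in a fixed amount $\gamma$ of $\pi_0$ (so the fallback mass no longer depends on $x$) forces $\gamma\gtrsim e^{-O(\eps)}$ to keep the ratio at $e^{8\eps}$ on points excluded from one bad set but not the other, and such a large resampling probability cannot be absorbed into the black-box $(\alpha,4\beta)$ accuracy guarantee across all parameter regimes. So the $e^{6\eps}$ condition on $\delta$ is not what rescues this construction; the construction itself must change.

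The paper (following \cite{bns18}) avoids the problem by making the candidate outputs independent of $x$: the new randomizer draws $T$ samples $\mathbf{v}_1,\dots,\mathbf{v}_T$ from $R(c)$ for a fixed public $c$, and the input $x$ enters only through clipped Bernoulli acceptance probabilities $\mathbf{p}_t k$ with $\mathbf{p}_t=\tfrac12\,\Pr[R(x)=\mathbf{v}_t]/\Pr[R(c)=\mathbf{v}_t]$ forced into $[\tfrac12 e^{-2\eps},\tfrac12 e^{2\eps}]$ (Algorithm~\ref{alg:transform}). Since every accept probability lies in an interval of multiplicative width $e^{4\eps}$ and the values $\mathbf{v}_t$ carry no information about $x$, the output ratio is at most $(U/L)^2=e^{8\eps}$ with $\delta'=0$ (Claim~\ref{clm:transform-is-private}); there is no fallback region whose mass scales with a data-dependent $q_x$. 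Accuracy is then argued as you intended in spirit, but via Claims~\ref{clm:transform-is-accurate-1}--\ref{clm:bounding-the-gap}: conditioned on every drawn $\mathbf{v}_{i,t}$ having a ratio in the good interval and on every user accepting at least one sample, the output is distributed as $R(x_i)$ conditioned on the good set, and the failure probabilities $n(1-\tfrac12 e^{-2\eps}k)^T$ and $O(nT\delta e^{\eps})$ are balanced by choosing $T\approx 2(e^{5\eps}-1)\ln(n/\beta)$ --- this is where the $\ln(n/\beta)$ in the hypothesis actually comes from (the number of rejection-sampling repetitions), not from sharpening a union bound over users as you suggest. Your statement that, conditioned on the good event, the joint outputs of $R'$ and $R$ are \emph{identically} distributed is also not literally correct (conditioning skews each $R(x_i)$ toward its good set), but that is a repairable bookkeeping issue handled by a statistical-distance argument; the unbounded fallback ratio is the essential flaw.
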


The proof can be found in \ifnum\makeappendix=1\ifnum\eurocrypt=0 Appendix \else Supplementary Material \fi ~\ref{sec:approx-to-pure} \else the full version of the paper\fi. Theorem~\ref{thm:shuffled-to-local} now follows by combining Theorem~\ref{thm:remove-shuffler} and Claim~\ref{clm:sameAnalysis} with Theorem~\ref{thm:transform}.

\subsection{Applications of Theorem \ref{thm:shuffled-to-local}} \label{sec:lbapplications}

In this section, we define two problems and present known lower bounds in the central and local models. By applying Theorem \ref{thm:shuffled-to-local}, we derive lower bounds in the one-message shuffled model. These bounds imply large separations between the central and one-message shuffled models.

\renewcommand{\arraystretch}{1.25}
\begin{table}[t]
\centering
\caption{Comparisons Between Models.  When a parameter is unspecified, the reader may substitute $\eps = 1, \delta = 0, \alpha = \beta = .01$.  \textbf{All results are presented as the minimum dataset size $n$ for which we can hope to achieve the desired privacy and accuracy as a function of the relevant parameter for the problem.}}

\label{table:model-comparisons}

\begin{tabular}{|c|c|c|c|c|}
\hline
\multirow{3}{*}{\begin{tabular}[c]{@{}c@{}}Function\\ \\ (Parameters)\end{tabular}} & \multicolumn{4}{c|}{Differential Privacy Model} \\ \cline{2-5} 
 & \multirow{2}{*}{Central} & \multicolumn{2}{c|}{\textbf{Shuffled (this paper)}} & \multirow{2}{*}{Local} \\ \cline{3-4}
 &  & \textbf{One-Message} & \textbf{General} &  \\ \hline
\begin{tabular}[c]{@{}c@{}}Mean, $\mathcal{X} = \{0,1\}$\\ (Accuracy $\alpha$)\end{tabular} & \multirow{2}{*}{$\Theta(\frac{1}{\alpha \eps})$}                 & \multicolumn{2}{c|}{$O\big(\frac{\sqrt{\log(1/\delta)}}{\alpha \eps}\big)$}                                                                                         & \multirow{2}{*}{$\Theta(\frac{1}{\alpha^2 \eps^2})$} \\ \cline{1-1} \cline{3-4}
\begin{tabular}[c]{@{}c@{}}Mean, $\mathcal{X} = [0,1]$\\ (Accuracy $\alpha$)\end{tabular}   &                                                                  & $O\big(\frac{1}{\alpha^2} + \frac{\sqrt{\log(1/\delta)}}{\alpha \eps}\big)$        & $O\big(\frac{\log(1/\delta)}{\alpha \eps}\big)$                                &                                                      \\ \hline
\begin{tabular}[c]{@{}c@{}}Selection\\ (Dimension $d$)\end{tabular}                         & $\Theta(\log d)$                                                 & $\Omega(d^{\frac{1}{17}})$                                                         & $\tilde{O}(\sqrt{d } \log \frac{d}{\delta})$                                     & $\Theta(d \log d)$                                   \\ \hline
\begin{tabular}[c]{@{}c@{}}Histograms\\ (Domain Size $D$)\end{tabular}                      & $\Theta\big(\min\big\{ \log\frac{1}{\delta}, \log D \big\}\big)$ & $\Omega(\log^{\frac{1}{17}} D)$                                                    & $O(\sqrt{\log D})$                                                             & $\Theta(\log D)$                                     \\ \hline
\end{tabular}
\end{table}

\subsubsection{The Selection Problem}

We define the \emph{selection problem} as follows.  The data universe is $\cX = \zo^{d}$ where $d$ is the \emph{dimension} of the problem and the main parameter of interest.  Given a dataset $x = (x_1,\dots,x_n) \in \cX^n$, the goal is to identify a coordinate $j$ such that the sum of the users' $j$-th bits is approximately as large as possible.  That is, a coordinate $j \in [d]$ such that
\begin{equation} \label{eq:selection}
\sum_{i=1}^{n} x_{i,j} \geq \max_{j' \in [d]} \sum_{i=1}^{n} x_{i,j'} - \frac{n}{10}
\end{equation}
We say that an algorithm \emph{solves the selection problem with probability $1-\beta$} if for every dataset $x$, with probability at least $1-\beta$, it outputs $j$ satisfying~\eqref{eq:selection}.

We would like to understand the minimum $n$ (as a function of $d$) such that there is a differentially private algorithm that can solve the selection problem with constant probability of failure.  We remark that this is a very weak notion of accuracy, but since we are proving a negative result, using a weak notion of accuracy only strengthens our results.  

The following lower bound for locally differentially private protocols for selection is from~\cite{selection}, and is implicit in the work of~\cite{DuchiJW13}.\footnote{These works assume that the dataset $x$ consists of independent samples from some distribution $\mathcal{D}$, and define accuracy for selection with respect to mean of that distribution.  By standard arguments, a lower bound for the distributional version implies a lower bound for the version we have defined.}
\begin{thm} If $P_{L} = (R_{L},A_{L})$ is a local protocol that satisfies $(\eps,0)$-differen\-tial privacy and $P_{L}$ solves the selection problem with probability $\frac{9}{10}$ for datasets $x \in (\zo^{d})^n$, then $n = \Omega\left(\frac{d \log d}{(e^{\eps} - 1)^2}\right)$.
\end{thm}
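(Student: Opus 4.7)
The plan is to reduce the worst-case selection problem to a multi-way hypothesis test among $d$ carefully chosen product distributions, and then invoke the standard Duchi--Jordan--Wainwright machinery for locally private sample-complexity lower bounds.

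First, I would construct a hard instance family. For each $j\in[d]$, let $\mathcal{D}_j$ be the product distribution on $\{0,1\}^d$ whose $j$-th coordinate is $\mathrm{Ber}(3/4)$ and whose remaining $d-1$ coordinates are $\mathrm{Ber}(1/2)$. A Chernoff plus union bound shows that, provided $n\gg\log d$, a sample $x\sim\mathcal{D}_j^n$ satisfies $\sum_i x_{i,j}-\max_{k\neq j}\sum_i x_{i,k}>n/10$ with probability $1-o(1)$, so coordinate $j$ is the unique feasible answer to selection on this sample. Consequently, any worst-case $(n/10,1/10)$-accurate selection protocol $P_L$, applied to a sample from $\mathcal{D}_J^n$ with $J\sim\mathrm{Unif}([d])$, must output $J$ with probability at least $4/5$; this exhibits $P_L$ as an $\eps$-LDP $d$-way hypothesis tester for the family $\{\mathcal{D}_j\}_j$.

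Next, I would invoke the LDP contraction inequality of Duchi, Jordan, and Wainwright, which states that for every $\eps$-LDP channel $R$ and every pair of input distributions $P, Q$, one has $D_{\mathrm{KL}}(R(P)\,\|\,R(Q))\leq 4(e^\eps-1)^2 D_{\mathrm{TV}}(P, Q)^2$. Tensorizing over the $n$ independent samples and feeding the result into the KL form of Fano's inequality would yield a lower bound relating $n$, the average per-sample mutual information about $J$ under $R$, and $\log d$.

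The step I expect to be the main obstacle is obtaining the extra factor of $d$ in the per-sample information bound: a vanilla Fano-plus-DJW argument that uses pairwise TV in our family yields only $n\geq\Omega(\log d/(e^\eps-1)^2)$, since the pairwise TV between distinct $\mathcal{D}_j$'s is a constant. To recover the full $d\log d$ bound, I would use an Assouad-style coordinate-wise decomposition that exploits the fact that the $d$ hypotheses are spread across $d$ coordinates. Intuitively, any $\eps$-LDP view extracts a total of $O((e^\eps-1)^2)$ bits of information about its input, and under the uniform prior on $J$ these bits must be split among $d$ equally plausible ``special'' coordinates; hence the information per view about the specific identity of $J$ should be only $O((e^\eps-1)^2/d)$. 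Formalizing this cleanly requires the $\chi^2$-divergence form of the DJW bound applied with the mixture $\bar{\mathcal{D}}=\frac{1}{d}\sum_j\mathcal{D}_j$ as the reference (under which each coordinate is biased only by $1/(4d)$ from uniform), after which the refined Fano inequality gives $n\cdot(e^\eps-1)^2/d\gtrsim\log d$ and therefore the claimed $n=\Omega(d\log d/(e^\eps-1)^2)$.
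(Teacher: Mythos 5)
A preliminary remark: the paper does not prove this theorem itself --- it is imported from prior work (cited to \cite{selection} and described as implicit in \cite{DuchiJW13}), with a footnote waving off the transfer from the distributional to the worst-case formulation as ``standard arguments.'' So your proposal has to be judged against that standard argument, and in outline it matches it: a hard product family $\{\mathcal{D}_j\}_{j\in[d]}$, a reduction from worst-case selection to $d$-way hypothesis testing (valid once $n \geq C\log d$, so that the planted coordinate is the empirical argmax by margin greater than $n/10$), a per-report mutual-information bound of order $(e^{\eps}-1)^2/d$, and Fano. Your diagnosis that the pairwise contraction $D_{\mathrm{KL}}(R(P)\,\|\,R(Q)) \leq 4(e^{\eps}-1)^2\|P-Q\|_{TV}^2$ alone only yields $n = \Omega(\log d/(e^{\eps}-1)^2)$ is also correct.

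The gap is in the step you yourself flag as the main obstacle: the mechanism you propose for the $1/d$ factor is not the right one. The factor does not come from the fact that each coordinate of the mixture $\bar{\mathcal{D}} = \frac{1}{d}\sum_j \mathcal{D}_j$ is biased by only $1/(4d)$, nor from a plain $\chi^2$ data-processing bound against that reference: $\chi^2(\mathcal{D}_j\,\|\,\bar{\mathcal{D}})$ is $\Theta(1)$ (coordinate $j$ is biased by $1/4$ under $\mathcal{D}_j$ and only by $O(1/d)$ under $\bar{\mathcal{D}}$), so contracting that quantity by $(e^{\eps}-1)^2$ again gives per-report information $O((e^{\eps}-1)^2)$ and hence only the $\log d$ bound you were trying to beat. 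What actually produces the $1/d$ is the averaged, \emph{variational} form of the Duchi--Jordan--Wainwright bound: for an $\eps$-LDP channel, $I(J;Y) \leq O\bigl((e^{\eps}-1)^2\bigr)\cdot \sup_{\|\gamma\|_\infty \leq 1} \frac{1}{d}\sum_j \bigl(\int \gamma\,(d\mathcal{D}_j - d\bar{\mathcal{D}})\bigr)^2$ (equivalently, expand $\ex{J}{\chi^2(R(\mathcal{D}_J)\,\|\,R(\bar{\mathcal{D}}))}$ one output symbol at a time and use that the normalized likelihood ratios $\gamma_y(x) = R(y\mid x)/R(\bar{\mathcal{D}})(y) - 1$ are uniformly bounded by roughly $e^{\eps}-1$). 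For your family, $\int\gamma\,(d\mathcal{D}_j - dP_0) = \tfrac12\hat{\gamma}(\{j\})$, a degree-one Fourier coefficient, so Parseval gives $\frac{1}{d}\sum_j(\cdot)^2 \leq O(1/d)\,\|\gamma\|_\infty^2$: it is the \emph{orthogonality} of the $d$ coordinate perturbations that forces any single bounded test (equivalently any single LDP report) to split its correlation across the $d$ candidates, not the smallness of the mixture's bias. With that substitution your argument goes through and coincides with the cited proof; the remaining loose end --- your reduction needs $n \gtrsim \log d$, and smaller $n$ must be excluded separately or by proving the distributional statement directly --- is the same technicality the paper's footnote glosses over.
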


By applying Theorem~\ref{thm:shuffled-to-local} we immediately obtain the following corollary.
\begin{coro}
If $P_{S} = (R_{S},S,A_{S})$ is a $(1,\delta)$-differentially private protocol in the one-message shuffled model, for $\delta = \delta(n) < n^{-8}$, and $P_{S}$ solves the selection problem with probability $\frac{99}{100}$, then $n = \Omega( (d \log d)^{1/17} )$.
\end{coro}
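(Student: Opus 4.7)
The plan is to apply Theorem~\ref{thm:shuffled-to-local} directly to convert $P_S$ into a local-model protocol, and then invoke the cited local lower bound for selection. The computation reduces to solving for $n$ in an inequality in which $n$ appears on both sides due to the $\ln n$ loss in the transformation.

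First I would verify the hypotheses of Theorem~\ref{thm:shuffled-to-local} for $P_S$. We have $\eps_S = 1 \leq 1$ and $\delta_S = \delta(n) < n^{-8}$ as required, and $P_S$ is $(n/10, \beta)$-accurate with $\beta = 1/100 = \Omega(1)$ (viewing selection as a function $f$ whose distance measure is $0$ if the output satisfies~\eqref{eq:selection} and $\infty$ otherwise, so that solving with probability $99/100$ is exactly $(0,\tfrac{1}{100})$-accuracy, or equivalently an $(\alpha,\beta)$-accuracy guarantee with this distance). Then Theorem~\ref{thm:shuffled-to-local} produces a local protocol $P_L = (R_L, A_L)$ which is $(\eps_L, 0)$-differentially private with
\begin{equation*}
\eps_L \;=\; 8(\eps_S + \ln n) \;=\; 8(1 + \ln n),
\end{equation*}
and which is $(n/10, 4\beta)$-accurate for selection, i.e.\ succeeds with probability at least $1 - 4/100 = 24/25 \geq 9/10$, for all $n$ above some constant.

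Next I would plug $P_L$ into the local lower bound for selection quoted just above the corollary: any $(\eps_L,0)$-DP local protocol solving selection with probability $9/10$ on datasets in $(\zo^d)^n$ requires
\begin{equation*}
n \;=\; \Omega\!\left(\frac{d \log d}{(e^{\eps_L}-1)^2}\right).
\end{equation*}
Substituting $\eps_L = 8(1+\ln n)$ gives $e^{\eps_L} = e^{8} \cdot n^{8}$, so $(e^{\eps_L}-1)^2 = O(n^{16})$. Therefore
\begin{equation*}
n \;=\; \Omega\!\left(\frac{d \log d}{n^{16}}\right),
\end{equation*}
which on rearranging yields $n^{17} = \Omega(d \log d)$, i.e.\ $n = \Omega((d \log d)^{1/17})$, as claimed.

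There is no real obstacle beyond keeping the parameters straight: the transformation blows $\eps$ up to $\Theta(\ln n)$, which in turn blows $(e^{\eps_L}-1)^2$ up to $\Theta(n^{16})$, eating all but a $17$-th root of the local bound. One small bookkeeping item worth checking is the hypothesis $\delta_S < \beta / (8n \ln(n/\beta)) \cdot \exp(-6\eps_S)$ implicit in applying Theorem~\ref{thm:transform} inside Theorem~\ref{thm:shuffled-to-local}; with $\eps_S \leq 1$ and $\beta = \Omega(1)$ the right-hand side is $\Omega(1/(n\log n))$, which is comfortably larger than our assumed $\delta_S < n^{-8}$, so the chain of implications goes through. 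The statement that the corollary holds for accuracy $99/100$ (rather than $24/25$) is because Theorem~\ref{thm:shuffled-to-local} loses a factor of $4$ in the failure probability, and $4 \cdot (1 - 99/100) = 4/100 \leq 1/10$.
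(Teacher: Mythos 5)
Your proposal is correct and takes essentially the same route as the paper, which obtains the corollary immediately by feeding the protocol through Theorem~\ref{thm:shuffled-to-local} and plugging $\eps_L = 8(1+\ln n)$ into the local selection lower bound, exactly as you compute. One aside: your sanity check of the $\delta$-condition from Theorem~\ref{thm:transform} uses $\exp(-6\eps_S)$, whereas inside Theorem~\ref{thm:shuffled-to-local} that condition is applied with the local parameter $\eps_S + \ln n$ (making the threshold $\Omega(1/(n^{7}\log n))$ rather than $\Omega(1/(n\log n))$), but this is immaterial since the corollary's hypotheses ($\eps_S \le 1$, $\delta_S < n^{-8}$, $\beta = \Omega(1)$) already match the stated hypotheses of Theorem~\ref{thm:shuffled-to-local}, and $n^{-8}$ is still below the corrected threshold.
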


Using a multi-message shuffled protocol\footnote{The idea is to simulate multiple rounds of our protocol for binary sums, one round per dimension.}, we can solve selection with $\tilde{O}(\frac{1}{\eps} \sqrt{d})$ samples.  By contrast, in the local model $n = \Theta(\frac{1}{\eps^2} d \log d)$ samples are necessary and sufficient.  In the central model, this problem is solved by the \emph{exponential mechanism}~\cite{McSherryT07} with a dataset of size just $n = O(\frac{1}{\eps} \log d)$, and this is optimal~\cite{BafnaU17,SteinkeU17}. These results are summarized in Table~\ref{table:model-comparisons}.

\subsubsection{Histograms}
We define the \emph{histogram problem} as follows.  The data universe is $\cX = [D]$ where $D$ is the \emph{domain size} of the problem and the main parameter of interest.  Given a dataset $x = (x_1,\dots,x_n) \in \cX^n$, the goal is to build a vector of size $D$ such that for all $j\in[D]$ the $j$-th element is as close to the frequency of $j$ in $x$. That is, a vector $v \in [0,n]^D$ such that
\begin{equation}
\label{eq:histogram}
\max_{j\in[D]} \left| v_j - \sum_{i=1}^n \mathds{1}(x_i=j) \right| \leq \frac{n}{10}
\end{equation}
where $\mathds{1}(\texttt{conditional})$ is defined to be $1$ if \texttt{conditional} evaluates to \texttt{true} and $0$ otherwise.

Similar to the selection problem, an algorithm \emph{solves the histogram problem with probability} $1-\beta$ if for every dataset $x$, with probability at least $1-\beta$ it outputs $v$ satisfying \eqref{eq:histogram}. We would like to find the minimum $n$ such that a differentially private algorithm can solve the histogram problem; the following lower bound for locally differentially private protocols for histograms is from \cite{BassilyS15}.

\begin{thm}
If $P_L=(R_L,A_L)$ is a local protocol that satisfies $(\eps,0)$ differential privacy and $P_L$ solves the histogram problem with probability $\frac{9}{10}$ for any $x\in [D]^n$ then $n=\Omega\left(\frac{\log D}{(e^\eps-1)^2}\right)$
\end{thm}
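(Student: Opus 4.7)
The plan is to reduce an identification (multi-hypothesis testing) problem to histogram estimation and then apply a standard local-DP mutual-information bound. Consider the $D$ ``constant'' datasets $x^{(j)} = (j, j, \dots, j) \in [D]^n$ for $j \in [D]$. If $P_L$ solves the histogram problem with probability $9/10$ on every input, then on input $x^{(j)}$ it outputs a vector $v \in [0,n]^D$ with $v_j \geq 9n/10$ and $v_{j'} \leq n/10$ for $j' \neq j$ with probability at least $9/10$. Hence $\widehat J := \arg\max_j v_j$ equals $j$ with probability $\geq 9/10$. Therefore the composed algorithm $\widehat J \circ P_L$ is a locally $(\eps,0)$-DP procedure that correctly identifies the hypothesis $j$ with probability $\geq 9/10$ from $n$ i.i.d.\ samples drawn from the point mass on $j$.

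Next I would apply the standard local-DP information-theoretic lower bound for multiple hypothesis testing. The key lemma (essentially the bound used by Duchi--Jordan--Wainwright and Bassily--Smith) states that if $Q$ is any $(\eps,0)$-local randomizer, then for any two inputs $x, x'$ the KL divergence satisfies
\[
\mathrm{KL}\!\left(Q(x) \,\|\, Q(x')\right) \;\leq\; O\!\left((e^{\eps}-1)^2\right).
\]
Applied to the transcripts, this implies that the mutual information between the true hypothesis $J$ (uniform on $[D]$) and the full transcript $(R_L(x_1), \dots, R_L(x_n))$ is at most $O\!\left(n (e^\eps - 1)^2\right)$. By Fano's inequality, reliable identification of $J$ (success probability $\geq 9/10$) requires this mutual information to be $\Omega(\log D)$, giving
\[
n \;=\; \Omega\!\left(\frac{\log D}{(e^\eps - 1)^2}\right),
\]
as claimed.

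The main technical obstacle is obtaining the $(e^\eps - 1)^2$ factor (rather than the naive $\eps^2$ bound that holds only when $\eps \leq 1$). To get this, I would use the sharper KL contraction bound for $(\eps,0)$-local channels, which follows from writing $\mathrm{KL}(Q(x)\|Q(x'))$ in terms of the output likelihood ratio (bounded by $e^\eps$) and exploiting that the total-variation distance between $Q(x)$ and $Q(x')$ is at most $(1 - e^{-\eps})$; a standard Pinsker-type calculation converts these into a KL bound of order $(e^\eps - 1)^2$. A minor subsidiary obstacle is handling the non-interactive vs.\ sequentially-interactive distinction, but since the tensorization of KL and the chain rule for mutual information both respect sequential adaptivity, the same bound carries through. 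Everything else is routine: the reduction, the union bound giving success on the identification task, and Fano's inequality.
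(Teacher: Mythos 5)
The paper itself does not prove this theorem — it is stated as an import, cited directly from Bassily and Smith~\cite{BassilyS15}, and then combined with Theorem~\ref{thm:shuffled-to-local} to obtain the one-message shuffled lower bound. So there is no ``paper's proof'' to compare against. That said, your proposal is a sound and essentially textbook reconstruction of the kind of argument Bassily--Smith (and Duchi--Jordan--Wainwright before them) use: reduce $D$-ary hypothesis identification to histogram estimation via the constant datasets $x^{(j)} = (j,\dots,j)$, note that any $v$ achieving $\ell_\infty$ error $\le n/10$ on $x^{(j)}$ has $v_j \ge 9n/10 > n/10 \ge v_{j'}$ so the argmax recovers $j$, then invoke the local-DP KL-contraction lemma together with the chain rule and Fano. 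Your stated KL bound $\mathrm{KL}(Q(x)\,\|\,Q(x')) = O\bigl((e^\eps-1)^2\bigr)$ does hold for all $\eps>0$, not just $\eps\le 1$: writing the symmetrized divergence as $\mathbb{E}_{Q(x)}[f]-\mathbb{E}_{Q(x')}[f]$ with $f = \log\bigl(dQ(x)/dQ(x')\bigr)$ bounded in $[-\eps,\eps]$, and bounding the difference of expectations by $2\eps\cdot\mathrm{TV}\bigl(Q(x),Q(x')\bigr) \le 2\eps(1-e^{-\eps})$, gives $\mathrm{KL} \le 2\eps(1-e^{-\eps})$, which is $O\bigl((e^\eps-1)^2\bigr)$ uniformly in $\eps$ (and in fact tighter for large $\eps$, only strengthening the lower bound). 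The remark about sequential interactivity is also correct, since the chain-rule decomposition of mutual information tolerates round-by-round adaptivity. The one thing I would make explicit: you need $D$ large enough that Fano's $\Omega(\log D)$ bound dominates the additive $+1$ term, but this is the standard caveat. In short, your proof route is correct and matches the cited reference's approach; it just isn't reproduced in this paper.
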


By applying Theorem \ref{thm:shuffled-to-local}, we immediately obtain the following corollary.
\begin{coro}
If $P_{S} = (R_{S},S,A_{S})$ is a $(1,\delta)$-differentially private protocol in the one-message shuffled model, for $\delta = \delta(n) < n^{-8}$, and $P_{S}$ solves the histogram problem with probability $\frac{99}{100}$, then $n=\Omega\left(\log^{1/17}D\right)$
\end{coro}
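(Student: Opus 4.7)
The plan is to mirror the proof structure used for the selection problem corollary, substituting in the histogram lower bound in the local model. Concretely, I would take the shuffled protocol $P_{S}$ and feed it into Theorem~\ref{thm:shuffled-to-local} to extract a local protocol with the same input/output behavior (hence the same accuracy guarantee up to the constant factor of $4$), then invoke the histogram lower bound of~\cite{BassilyS15} on the resulting local protocol, and finally solve the resulting inequality for $n$ in terms of $D$.

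First, note that since $P_{S}$ solves the histogram problem with probability $99/100$, it is $(n/10, 1/100)$-accurate in the sense of Definition~\ref{def:mixnet-dp}'s accompanying accuracy definition, with $\beta = 1/100 = \Omega(1)$. Also, the hypotheses $\eps_{S} = 1 \leq 1$ and $\delta_{S} < n^{-8}$ of Theorem~\ref{thm:shuffled-to-local} are satisfied. So that theorem produces a local protocol $P_{L}$ that is $(\eps_{L}, 0)$-differentially private in the local model for $\eps_{L} = 8(1 + \ln n)$, and $(n/10, 4/100)$-accurate for the histogram problem. Since $4/100 \leq 1/10$, $P_{L}$ solves the histogram problem with probability at least $9/10$.

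Now apply the local-model histogram lower bound from~\cite{BassilyS15}: any $(\eps_{L},0)$-differentially private local protocol that solves the histogram problem with probability $9/10$ on inputs from $[D]^{n}$ must satisfy
\begin{equation*}
n = \Omega\!\left(\frac{\log D}{(e^{\eps_{L}} - 1)^{2}}\right).
\end{equation*}
Substituting $\eps_{L} = 8(1 + \ln n)$ gives $(e^{\eps_{L}} - 1)^{2} \leq e^{16(1+\ln n)} = e^{16} \cdot n^{16}$, so $n \geq c \cdot \log D / n^{16}$ for some absolute constant $c > 0$, i.e.\ $n^{17} = \Omega(\log D)$, which yields $n = \Omega(\log^{1/17} D)$ as claimed.

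The argument is essentially a plug-and-chug after Theorem~\ref{thm:shuffled-to-local} is in hand, so there is no real obstacle. The only subtlety worth being careful about is bookkeeping: checking that the accuracy parameter $\beta = 1/100$ is still $\Omega(1)$ so that Theorem~\ref{thm:shuffled-to-local} applies, that the resulting $4\beta$ is still small enough to invoke the $9/10$-probability version of the lower bound, and that the blow-up in $\eps$ from $\eps_{S}$ to $\eps_{L} = 8(\eps_{S} + \ln n)$ only costs a polynomial factor in $n$ (rather than a $\delta$-dependent factor), since this is what makes the final inequality $n^{17} \gtrsim \log D$ nontrivial.
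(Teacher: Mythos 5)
Your proposal is correct and is exactly the argument the paper intends: the paper states the corollary follows ``immediately'' by applying Theorem~\ref{thm:shuffled-to-local} to the local-model histogram lower bound of~\cite{BassilyS15}, which is precisely your route. Your bookkeeping (accuracy $4\beta = 4/100 \leq 1/10$, $\eps_L = 8(1+\ln n)$ giving $(e^{\eps_L}-1)^2 \leq e^{16}n^{16}$ and hence $n^{17} = \Omega(\log D)$) matches the exponent $1/17$ in the statement.
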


In the shuffled model, we can solve this problem using our protocol for bit-sums by having each user encode their data as a ``histogram'' of just their value $x_i \in [D]$ and then running the bit-sum protocol $D$ times, once for each value $j \in [D]$, which incurs error $O(\frac{1}{\eps} \sqrt{ \log \frac{1}{\delta} \log D})$.\footnote{Note that changing one user's data can only change two entries of their local histogram, so we only have to scale $\eps,\delta$ by a factor of $2$ rather than a factor that grows with $D$.} But in the central model, this problem can be solved to error $O( \min \{ \log \frac{1}{\delta}, \log D \})$, which is optimal (see, e.g.~\cite{Vadhan16}). Thus, the central and one-message shuffled models are qualitatively different with respect to computing histograms: $D$ may be infinite in the former whereas $D$ must be bounded in the latter.



\section*{Acknowledgements}
AC was supported by NSF award CCF-1718088.  AS was supported by NSF awards IIS-1447700 and AF-1763786 and a Sloan Foundation Research Award.  JU was supported by NSF awards CCF-1718088, CCF-1750640, CNS-1816028 and a Google Faculty Research Award. 

\ifnum\oakland=1
     \footnotesize
     \bibliographystyle{abbrvnat}
\else
	\ifnum\eurocrypt=1
		\bibliographystyle{splncs04}
	\else
		\bibliographystyle{abbrvnat}
	\fi
\fi
\bibliography{mixnetmodel,local,extrareferences}

\ifnum\oakland=1
\end{document}
\fi

\ifnum\makeappendix=1

\ifnum\eurocrypt=1
\newpage
\begin{center}
  \Large \bf Supplementary Material
\end{center}
\fi

\appendix

\section{Privacy of Bit Sum Protocol}\label{sec:appendix-bitsum-privacy}
Here, we prove Claim \ref{clm:C_lambda}, which expresses the privacy of $C_\lambda$ in terms of $\lambda$:

\begin{clm*}[Restatement of Claim \ref{clm:C_lambda}]
For any $\delta>0$ and $n \geq \lambda \geq 14\log\frac{4}{\delta}$, $C_{\lambda}$ is $(\eps, \delta )$ differentially private where
\[
\eps = \sqrt{\frac{32\log\frac{4}{\delta}}{\lambda- \sqrt{2\lambda\log\tfrac{2}{\delta}}}} \cdot \left( 1 - \frac{\lambda- \sqrt{2 \lambda\log\tfrac{2}{\delta}}}{n} \right)
\]
\end{clm*}

\begin{proof}
Fix any $X\sim X'\in\zo^n$ and any $W \subseteq [n]$.
\begin{align*}
\pr{}{C_{\lambda}(X) \in W} &= \pr{}{C_{\lambda}(X) \in W \cap
                              \mathbf{s}\geq \lambda-\sqrt{2
                              \lambda\log\tfrac{2}{\delta}}} \\
  & \quad \quad \quad \quad+ \pr{}{C_{\lambda}(X) \in W \cap \mathbf{s}<\lambda-\sqrt{2 \lambda\log\tfrac{2}{\delta}} }\\
&\leq \pr{}{C_{\lambda}(X) \in W \cap \mathbf{s}\geq \lambda-\sqrt{2 \lambda\log\tfrac{2}{\delta}}} + \half\delta \tag{Chernoff bound}\\
&= \sum_{s\geq \lambda-\sqrt{2 \lambda\log\frac{2}{\delta}} } \pr{}{C_s(X) \in W}\cdot \pr{}{\mathbf{s}=s} +\half\delta \stepcounter{equation}\tag{\theequation}\label{eq:C_lambda-privacy}
\end{align*}
Because $\lambda$ is sufficiently large, $\lambda-\sqrt{2 \lambda\log\frac{2}{\delta}} > 8\log\frac{4}{\delta}$. Claim \ref{clm:C_s} thus applies to each term in the sum.

\[
\pr{}{C_s(X) \in W} \leq \exp\left( \sqrt{\frac{32\log\frac{4}{\delta}}{s}} \cdot \left(1-\frac{s}{n}\right)\right)\cdot \pr{}{C_s(X') \in W} + \half \delta
\]

For notational convenience, we will use the shorthand $\eps_1(s) := \sqrt{\frac{32\log\frac{4}{\delta}}{s}} \cdot \left(1-\frac{s}{n}\right)$. Therefore,
\begin{align*}
\eqref{eq:C_lambda-privacy} &\leq \left( \sum_{s\geq \lambda-\sqrt{2 \lambda\log\tfrac{2}{\delta}} } \left[ e^{\eps_1(s)} \cdot\pr{}{C_s(X') \in W} + \half\delta \right] \cdot \pr{}{\mathbf{s}=s}\right) +\half\delta\\
&\leq \left( \sum_{s\geq \lambda-\sqrt{2 \lambda\log\tfrac{2}{\delta}} } e^{\eps_1(s)} \cdot\pr{}{C_s(X') \in W} \cdot \pr{}{\mathbf{s}=s} \right) +\delta\\
&\leq \max_{s\geq \lambda-\sqrt{2 \lambda\log\tfrac{2}{\delta}} } e^{\eps_1(s)} \cdot \pr{}{C_{\lambda}(X') \in W} +\delta
\end{align*}
Because $\eps_1(s)$ decreases with $s$, the above is maximized at the lower bound on $s$:
\begin{align*}
\pr{}{C_{\lambda}(X) \in W} \leq& \exp\left(\sqrt{\frac{32\log\frac{4}{\delta}}{\lambda- \sqrt{2\lambda\log\tfrac{2}{\delta}}}} \cdot \left( 1 - \frac{\lambda- \sqrt{2 \lambda\log\tfrac{2}{\delta}}}{n} \right) \right) \cdot \\
&~~~~~ \pr{}{C_{\lambda}(X') \in W} + \delta
\end{align*}
which completes the proof.
\end{proof}

\section{Accuracy of Bit Sum Protocol}\label{sec:appendix-bitsum}
In this section, we will show there exists a value of $\lambda$ for any (sufficiently large) target value of $\eps$ such that the error of bit sum protocol $P_{n,\lambda}$ is $\tilde{O}((1/\eps)\sqrt{\log (1/\delta)})$. But first, we shall bound the error of the protocol in terms of $\lambda$:

\begin{thm*}[Restatement of Theorem \ref{thm:bit-sum-accurate}]
For every $n \in \N$, $\beta > 0$, $n > \lambda \geq 2\log\frac{2}{\beta}$, and $x \in \zo^{n}$,
$$
\pr{}{\left| P_{n,\lambda}(x) - \sum_i x_i \right| > \sqrt{2\lambda \log(2/\beta)} \cdot \left(\frac{n}{n-\lambda}\right)} \leq \beta
$$
\end{thm*}

\medskip

Recall that each user $i$ sends a message $y_i$ which is a randomization of bit $x_i$. We begin the analysis by determining the mean and variance of each $y_i$:
\begin{clm}
\label{clm:bit-mean-variance}
For any $n\in \N$, $0 < \lambda \leq n$, and $x \in \bits$,
\begin{align*}
\ex{}{R_{n,\lambda}(x)} &= \frac{\lambda}{2n} + \left(1-\frac{\lambda}{n}\right)\cdot x \\
\var{}{R_{n,\lambda}(x)} &= \frac{\lambda}{2n}\cdot \left(1 - \frac{\lambda}{2n}\right) \end{align*}
\end{clm}
\begin{proof} For shorthand, $\mathbf{y} = R_{n,\lambda}(x)$. The calculation of the expectation is not long:
\begin{align*}
\ex{}{\mathbf{y}} &= \frac{\lambda}{n}\cdot \ex{}{\mathit{Ber}\left(\half\right)} + \left(1-\frac{\lambda}{n}\right)\cdot x\\
&= \frac{\lambda}{2n} + \left(1-\frac{\lambda}{n}\right)\cdot x
\end{align*}

If $x=0$, then $\mathbf{y}$ is a Bernoulli random variable with probability $\half\cdot \frac{\lambda}{n}$ of being 1. The variance of $\mathit{Ber}(p)$ is $p(1-p)$, which is here $\frac{\lambda}{2n}\left(1 - \frac{\lambda}{2n}\right)$. A symmetric argument applies to the case where $x=1$. This concludes the proof.
\end{proof}

Using the above, one can compute the mean and variance of the protocol's estimate using linearity of expectation and the fact that the output of the protocol is a (rescaled) sum of independent messages:
\begin{clm}
For any $n\in \N$, $0 < \lambda \leq n$, and $X \in \bits^n$,
\begin{align*}
\ex{}{P_{n,\lambda}(X)} &= \sum_{i=1}^{n} x_i\\
\var{}{P_{n,\lambda}(X)} &= \left( \frac{n}{n-\lambda} \right)^2 \cdot \frac{\lambda}{2} \cdot \left(1 - \frac{\lambda}{2n}\right)
\end{align*}
\end{clm}




We omit the proof for space.  From the previous claim, and the fact that the protocol is output a sum of independent bits, we can obtain a high-probability bound on the error.  \ifnum\oakland=1 The proof is omitted for space. \fi 

\begin{coro}
For any $n\in \N$, $0<\beta<1$, and $\frac{16}{9} \log\frac{2}{\beta} < \lambda < n$, the protocol $P_{n,\lambda}$ is $(\alpha,\beta)$-accurate for \begin{mymath}
\alpha = \frac{n}{n-\lambda}\sqrt{2\lambda \log\frac{2}{\beta}}
\end{mymath}
\end{coro}

\ifnum\oakland=0
\begin{proof}
Fix any $X\in \zo^n$. Let $\mathbf{d_i}$ denote the random variable $R_{n,\lambda}(x_i) - \frac{\lambda}{2n} - \left(1-\frac{\lambda}{n}\right)\cdot x_i$. It has maximum $1-\frac{\lambda}{2n} < 1$ and minimum $-1+\frac{\lambda}{2n} > -1$. From Claim \ref{clm:bit-mean-variance}, $\ex{}{\mathbf{d_i}}=0$ and $\var{}{\mathbf{d_i}}=\frac{\lambda}{2n}\left(1 - \frac{\lambda}{2n}\right)$. Because $\lambda$ is sufficiently large, the variance is larger than $\frac{4}{9n}\log\frac{2}{\beta}$. By assumption, there are $n$ honest users. These facts imply that Bernstein's inequality (Theorem \ref{thm:bernstein}) is compatible:
\begin{equation}
\label{eq:bernstein-1}
\pr{}{\left| \sum_{i=1}^n \mathbf{d_i} \right| > \sqrt{ 2\lambda\left(1-\frac{\lambda}{2n}\right) \log\frac{2}{\beta}}} < \beta
\end{equation}

Define $\mathbf{y}_i := R_{\lambda}(x_i)$ for shorthand. Observe that 
\begin{align*}
\sum_i \mathbf{d_i} &= \sum_i \left(\mathbf{y_i} - \frac{\lambda}{2n} - \left(1-\frac{\lambda}{n}\right)\cdot x_i \right) \\
&= \left(\sum_i \mathbf{y_i} \right) - \frac{\lambda}{2} - \left(1-\frac{\lambda}{n}\right)\cdot f(x)\\
\frac{n}{n-\lambda} \sum_i \mathbf{d_i} &= \frac{n}{n-\lambda} \left(\left(\sum_i \mathbf{y_i} \right) - \frac{\lambda}{2} \right) - f(x)\\
&= (A_{\lambda}\circ \shuff)(\mathbf{y}_1,\dots,\mathbf{y}_n) - f(x) \tag{$\shuff$ only permutes}\\
&= P_{n,\lambda}(x) - f(x) \stepcounter{equation} \tag{\theequation} \label{eq:bernstein-2}
\end{align*}

Substitution of \eqref{eq:bernstein-2} in \eqref{eq:bernstein-1} yields
\[
\pr{}{\left| P_{n,\lambda}(x) - f(x) \right| > \frac{n}{n-\lambda}\sqrt{ 2\lambda\left(1-\frac{\lambda}{2n}\right) \log\frac{2}{\beta}}} < \beta
\]
The Claim follows from the fact that $\lambda > 0$. This concludes the proof.

\end{proof}
\fi

When $\lambda$ is set to the piecewise function in Lemma \ref{lem:generate-lambda}, the error of $P_{\lambda}$ with respect to the bit-sum is of the same order as the Gaussian mechanism:

\begin{coro}
\label{coro:bit-sum-accurate-concrete}
For any $n \in \N$, $0 < \delta < 1$, $\frac{\sqrt{3456}}{n}\log\frac{4}{\delta} < \eps < 1$, and $\delta < \beta < 1$, there exists a $\lambda \in [0,n]$ such that $P_{n,\lambda}$ is $(\eps,\delta)$-differentially private and for every $X \in \zo^{n}$, with probability at least $1-\beta$,
\[
\left| P_{n,\lambda}(X) - \sum_{i=1}^{n} x_i \right| \leq \frac{30}{\eps}\sqrt{ \log\frac{2}{\beta} \log\frac{4}{\delta}}
\]
\end{coro}

\begin{proof}
Fix any $X\in \zo^n$. Let 
\[
\lambda =\begin{cases}
\frac{64}{\eps^2}\log\frac{4}{\delta} & \textrm{if $\eps \geq \sqrt{\frac{192}{n}\log\frac{4}{\delta}}$}\\
n-\frac{\eps n^{3/2}}{\sqrt{432\log (4/\delta)}} & \textrm{otherwise}
\end{cases}
\]
By Lemma~\ref{lem:generate-lambda}, $P_{\lambda}$ is $(\eps,\delta)$ private. By Theorem \ref{thm:bit-sum-accurate}, $P_{n,\lambda}$ is $(\alpha,\beta)$ accurate for
\[
\alpha = \sqrt{2\lambda \log\frac{2}{\beta}} \cdot \frac{n}{n-\lambda}
\]

\noindent We have to consider both ranges of $\eps$.  If $\eps < \sqrt{\frac{192}{n}\log\frac{4}{\delta}}$, then 
\begin{align*}
\alpha &= \sqrt{2\lambda \cdot \log\frac{2}{\beta}} \cdot \frac{n}{n-\left( n - \frac{\eps n^{3/2}}{\sqrt{432\log(4/\delta)}}\right)} \\
&= \sqrt{2\lambda \cdot \log\frac{2}{\beta}} \cdot \frac{\sqrt{432\log(4/\delta)}} {\eps \sqrt{n}} \\
&< \frac{30}{\eps} \sqrt{\frac{\lambda}{n} \cdot \log\left(\frac{2}{\beta}\right) \log\frac{4}{\delta}}\\
&\leq \frac{30}{\eps} \sqrt{\log\left(\frac{2}{\beta}\right) \log\frac{4}{\delta}} \tag{$\lambda \leq n$}
\end{align*}

\noindent If $\eps \geq \sqrt{\frac{192}{n}\log\frac{4}{\delta}}$, then 
\begin{align*}
\alpha &= \sqrt{2\cdot \frac{64}{\eps^2}\log\left(\frac{4}{\delta}\right) \cdot \log\frac{2}{\beta}} \cdot \frac{n}{n-\lambda} \\&\leq \frac{1}{\eps}\sqrt{128\log\left(\frac{4}{\delta}\right) \cdot \log\frac{2}{\beta}} \cdot \frac{3}{2} \tag{$\lambda < n/3$}\\
&< \frac{17}{\eps} \sqrt{\log\left(\frac{4}{\delta}\right) \log\frac{2}{\beta}}
\end{align*}

\noindent Combining the two bounds completes the proof.
\end{proof}


\section{Accuracy of Real Sum Protocol} \label{sec:appendix-realsum}
As in the previous section, we will show that there exists values of parameters $\lambda,r$ where the error of the real sum protocol $P^\R_{n,\lambda,r}$ is $\tilde{O}((1/\eps)\log (1/\delta))$. But we begin by bounding the error of the protocol in terms of $\lambda, r$:

\begin{thm*}[Restatement of Theorem \ref{thm:real-sum-error}]
For every $\beta > 0$, $n \geq \lambda \geq \frac{16}{9}\log\frac{2}{\beta}$, $r\in \N$, and $X\in [0,1]^n$,
\[
\pr{}{\left| P^\R_{n,\lambda,r}(X) - f(X) \right| 
\geq \tfrac{\sqrt{2}}{r}\sqrt{ n\log\tfrac{2}{\beta}} + \tfrac{n}{n-\lambda} \cdot \sqrt{2\tfrac{\lambda}{r}  \log \tfrac{2}{\beta}}} < 2\beta
\]
\end{thm*}

\medskip
It will help to establish notation. Throughout this section, we fix some $X=(x_1,\dots, x_n)\in \zo^n$. For any $i\in [n]$, the vector of bits $(\mathbf{b}_{i,1},\dots \mathbf{b}_{i,r})$ denotes randomized encoding from $E_r(x_i)$. The set of all such bits $\mathbf{b}_{1,1},\dots \mathbf{b}_{n,r}$ is denoted $\mathbf{B} \in \zo^{n\cdot r}$.

\begin{clm}
\label{clm:real-sum-encoding-error}
For every $n,r\in N$, $X\in (0,1)^n$, and $0 < \beta < 1$,

\begin{equation}
\label{eq:real-sum-encoding-error}
\pr{\mathbf{B}}{\left| f(X) - \frac{1}{r} \sum_{i=1}^n \sum_{j=1}^r \mathbf{b}_{i,j}  \right| > \frac{\sqrt{2}}{r}\sqrt{n\log\frac{2}{\beta}} } \leq \beta
\end{equation}
\end{clm}

\begin{proof}
Fix some $i\in [n]$. Only one bit among $b_{i,1},\dots,b_{i,r}$ is random, at index $\mu$. The remainder have sum $\lfloor x_i\cdot r\rfloor$. Hence,
\begin{equation}
\label{eq:unary-small-error}
\left| x_i - \frac{1}{r} \sum_{j=1}^r \mathbf{b}_{i,j} \right| \leq \frac{1}{r}
\end{equation}

\ifnum\oakland=1
One can verify from our definition of $\mu_i,p_i$ that $\frac{1}{r}(\mathbf{b}_{i,1}+ \dots + \mathbf{b}_{i,r})$ has expected value $x_i$
\else
We show that the average of $\frac{1}{r}(\mathbf{b}_{i,1}+ \dots + \mathbf{b}_{i,r})$ has expected value $x_i$:
\begin{align*}
\ex{\mathbf{b}_{i,1},\dots \mathbf{b}_{i,r}}{\frac{1}{r} \sum_{j=1}^r \mathbf{b}_{i,j}  } &= \ex{}{\frac{1}{r} \left(\mathbf{b}_{i,\mu_i} + \sum_{j\neq \mu} b_{i,j} \right)}\\
&= \frac{1}{r} \left(\ex{}{ \mathbf{b}_{i,\mu} }+ \mu_i-1 \right)\\
&= \frac{1}{r} (p_i + \mu-1)\\
&= \frac{1}{r} (x_i\cdot r - \mu + 1 + \mu-1)\\
&= \frac{1}{r} \cdot x_i\cdot r\\
&= x_i \stepcounter{equation}\label{eq:unary-centered}\tag{\theequation}
\end{align*}
\fi

From \eqref{eq:unary-small-error}, \ifnum\oakland=0\eqref{eq:unary-centered}, and Hoeffding's inequality (Theorem \ref{thm:hoeffding}),\fi the sum over all $x_i - \frac{1}{r} \sum_{j=1}^r \mathbf{b}_{i,j}$ is concentrated:
\[
\pr{\mathbf{B}}{\left|\sum_{i=1}^n x_i - \frac{1}{r}\sum_{i=1}^n \sum_{j=1}^r \mathbf{b}_{i,j} \right| > \frac{2}{r}\sqrt{\half n\log\frac{2}{\beta}} } \leq \beta
\]
which is equivalent to \eqref{eq:real-sum-encoding-error}. This concludes the proof.
\end{proof}

Condition on an encoding $b_{1,1},\dots,b_{n,r}$ of the $n$ real-valued inputs. When we treat the output of the protocol as an estimator of $\frac{1}{r}\sum_j\sum_i b_{i,j}$, we find that it is unbiased and concentrated:

\begin{clm}
\label{clm:real-sum-protocol-error}
For every $\beta > 0$, $n \geq \lambda \geq \frac{16}{9}\log\frac{2}{\beta}$, $X\in (0,1)^n$, $r\in \N$ and every \emph{fixed} set of bits $B=(b_{1,1},\dots,b_{n,r}) \in \bits^{n\cdot r}$,
\ifnum\oakland=1
if we condition on those bits being chosen by the encoder, then
\begin{equation}
\label{eq:real-sum-protocol-error}
\pr{}{ \left| P^\R_{n,\lambda,r}(X) - \frac{1}{r}\sum_j\sum_i b_{i,j} \right| > \frac{n}{n-\lambda}\sqrt{2 \frac{\lambda}{r} \log\frac{2}{\beta}}} < \beta
\end{equation}
\else
\begin{equation}
\label{eq:real-sum-protocol-error}
\pr{}{ \left| P^\R_{n,\lambda,r}(X) - \frac{1}{r}\sum_j\sum_i b_{i,j} \right| > \frac{n}{n-\lambda}\sqrt{2 \frac{\lambda}{r} \log\frac{2}{\beta}} ~\left\vert\vphantom{\frac{1}{1}}\right.~ \mathbf{B} = B } < \beta
\end{equation}
\fi
\end{clm}

\begin{proof}
As in the statement, fix any $B= (b_{1,1},\dots, b_{n,r}) \in \zo^{n\cdot r}$. Let $\mathbf{d}_{i,j}$ denote $R_{n,\lambda}(b_{i,j}) - \frac{\lambda}{2n} - \left(1-\frac{\lambda}{n}\right) \cdot b_{i,j}$. Its value is in $[-1,1]$.  Applying Claim \ref{clm:bit-mean-variance} here, we have $\ex{}{\mathbf{d}_{i,j}}=0$ and $\var{}{\mathbf{d}_{i,j}}=\frac{\lambda}{2n}\left(1- \frac{\lambda}{2n} \right)$. From Bernstein's inequality\ifnum\oakland=0 (Theorem \ref{thm:bernstein})\fi,



\begin{equation}
\pr{}{\left| \sum_j \sum_i \mathbf{d}_{i,j} \right| > \sqrt{2\lambda r\left(1-\frac{\lambda}{2n}\right)\log\frac{2}{\beta}}} < \beta \label{eq:bernstein-real-1}
\end{equation}

Let $\mathbf{y_{i,j}}$ denote the random variable output by $R_\lambda(b_{i,j})$. 
\ifnum\oakland=0
Observe that
\begin{align*}
\sum_j \sum_i \mathbf{d}_{i,j} &= \left(\sum_j \sum_i \mathbf{y}_{i,j} \right) - \frac{\lambda\cdot r}{2} - \left(1- \frac{\lambda}{n}\right) \sum_j \sum_i b_{i,j}\\
\frac{1}{r}\cdot \frac{n}{n-\lambda} \sum_j \sum_i \mathbf{d}_{i,j} &= \frac{1}{r}\cdot \frac{n}{n-\lambda} \left( \left( \sum_j \sum_i \mathbf{y_{i,j}} \right) - \frac{\lambda\cdot r}{2} \right) - \frac{1}{r} \sum_j \sum_i b_{i,j}\\
&= A^\R_{n,\lambda,r}(\mathbf{y}_{1,1},\dots, \mathbf{y}_{n,r})- \frac{1}{r}\sum_j \sum_i b_{i,j} \tag{Defn. of $A^\R_{n,\lambda,r}$}\\
&= (A^\R_{n,\lambda,r} \circ \shuff)(\mathbf{y}_{1,1},\dots, \mathbf{y}_{n,r}) - \frac{1}{r}\sum_j \sum_i b_{i,j} \tag{$\shuff$ only permutes}\\
&= (A^\R_{n,\lambda,r} \circ \shuff \circ R^{0/1}_{n,\lambda})(b_{1,1},\dots, b_{n,r}) - \frac{1}{r}\sum_j \sum_i b_{i,j} \stepcounter{equation} \tag{\theequation} \label{eq:bernstein-real-2}
\end{align*}
\else
A calculation shows that
\begin{align*}
&\sum_j \sum_i \mathbf{d}_{i,j} \\ 
={} &(A^\R_{n,\lambda,r} \circ \shuff \circ R^{0/1}_{n,\lambda})(b_{1,1},\dots, b_{n,r}) - \frac{1}{r}\sum_j \sum_i b_{i,j} \stepcounter{equation} \tag{\theequation} \label{eq:bernstein-real-2}
\end{align*}
\fi

When executing $P^\R_{n,\lambda,r}$, condition on $\mathbf{B} = B$. By substituting \eqref{eq:bernstein-real-2} into \eqref{eq:bernstein-real-1}, we have that
\ifnum\oakland=0
\[
\pr{}{\left| P^\R_{n,\lambda,r}(x) - \frac{1}{r}\sum_j\sum_i b_{i,j} \right| > \frac{n}{n-\lambda}\sqrt{2 \frac{\lambda}{r} \left(\frac{2n-\lambda}{2n}\right)\log\frac{2}{\beta}} \left\vert\vphantom{\frac{1}{1}}\right. \mathbf{B} = B } < \beta
\]
\else
\[
\pr{}{\left| P^\R_{n,\lambda,r}(x) - \tfrac{1}{r}\textstyle\sum_{i,j}  b_{i,j} \right| > \tfrac{n}{n-\lambda}\sqrt{2 \tfrac{\lambda}{r} \left(1-\tfrac{\lambda}{2n}\right)\log\tfrac{2}{\beta}}  } < \beta
\]
\fi
Now, \eqref{eq:real-sum-protocol-error} follows from the fact that $\lambda > 0$. This concludes the proof.
\end{proof}

Theorem \ref{thm:real-sum-error} now follows from a union bound over \eqref{eq:real-sum-encoding-error} and \eqref{eq:real-sum-protocol-error}.

We have bounded the error of $P_{n,\lambda,r}$ in terms of $\lambda,r$. When $\lambda,r$ are chosen such that $P_{n,\lambda,r}$ satisfies $(\eps,\delta)$ differential privacy, we can bound the error in terms of $\eps,\delta$. Theorem \ref{thm:main-real-sum} follows from the following statement:

\begin{coro}
\label{coro:real-sum-accurate}
For any $n > 10^4$, $\delta \in \left( 8 e^{-0.03n^{1/4}}, 1/n\right)$, $\beta \in (\delta, 1)$, $\eps \in \left[ \frac{122}{n} \log\frac{8}{\delta} \sqrt{\log\frac{2}{\beta}}, 1 \right]$, there exist parameters $\lambda \in [0,n], r\in \N$ such that $P_{n,\lambda,r}$ is $(\eps,\delta)$ private and for any $X\in(0,1)^n$
\[
\pr{}{\left| P^\R_{n,\lambda,r}(X) - f(X) \right| > \frac{122}{\eps} \log\frac{8}{\delta} \sqrt{\log\frac{2}{\beta}}} < 2\beta
\]
\end{coro}

\begin{proof}
Let $r = \lceil \eps \sqrt{n}\rceil$. Define
\begin{mymath}
\eps_0 = \frac{\eps}{\sqrt{8 r \log(2/\delta)}} \ifnum\oakland=0~~~~~~~~~~~~\else\textrm{, and }\fi \delta_0 = \frac{\delta}{2r}
\end{mymath}
In an identical fashion to Lemma \ref{lem:generate-lambda}, assign $\lambda$ such that $P^{0/1}_{n,\lambda}$ satisfies $(\eps_0,\delta_0)$ privacy. From Corollary \ref{coro:real-sum-private}, this in turn implies $P_{n,\lambda,r}$ is $(\eps,\delta)$ differentially private.

For these values of $r,\lambda$, Theorem \ref{thm:real-sum-error} bounds the error as
\ifnum\oakland=1
\begin{equation}
\label{eq:real-sum-accurate}
\frac{\sqrt{2}}{r}\sqrt{n\log \frac{2}{\beta}} + \cdot \frac{n}{n-\lambda} \cdot \sqrt{2 \frac{\lambda}{r} \log \frac{2}{\beta}}
\end{equation}
with probability at least $1-2\beta$
\else
\begin{equation}
\label{eq:real-sum-accurate}
\pr{}{ \left|P^\R_{n,\lambda,r}(X) - f(X) \right| \geq \frac{\sqrt{2}}{r}\sqrt{n\log \frac{2}{\beta}} + \cdot \frac{n}{n-\lambda} \cdot \sqrt{2 \frac{\lambda}{r} \log \frac{2}{\beta}}} \leq 2\beta
\end{equation}
\fi

It is immediate from substitution of $r$ that
\begin{equation}
\label{eq:real-sum-error-eps-1}
\frac{\sqrt{2}}{r}\sqrt{n\log\frac{2}{\beta}} = \frac{\sqrt{2}}{\eps}\sqrt{\log\frac{2}{\beta}}
\end{equation}

Following the same steps as the proof of Corollary \ref{coro:bit-sum-accurate-concrete}, it can be shown that
\ifnum\oakland=1
\begin{align*}
\frac{n}{n-\lambda} \cdot \sqrt{2 \frac{\lambda}{r} \log \frac{2}{\beta}} &\leq 
 \frac{120}{\eps} \log\frac{8}{\delta} \sqrt{\log\frac{2}{\beta}}
\stepcounter{equation}\tag{\theequation}\label{eq:real-sum-error-eps-2}
\end{align*}
\else
\begin{align*}
\frac{n}{n-\lambda} \cdot \sqrt{2 \frac{\lambda}{r} \log \frac{2}{\beta}} &\leq \frac{30}{\eps_0}\sqrt{\frac{1}{r} \log \left(\frac{4}{\delta_0}\right)\log\frac{2}{\beta}}\\
&= 30\cdot \frac{\sqrt{8r\log\frac{2}{\delta}}}{\eps} \sqrt{\frac{1}{r} \log \left(\frac{4}{\delta_0}\right)\log\frac{2}{\beta}} \tag{Defn. of $\eps_0$}\\
&= \frac{30}{\eps} \sqrt{8\log\left(\frac{2}{\delta}\right) \log \left(\frac{4}{\delta_0}\right)\log\frac{2}{\beta}}\\
&= \frac{30}{\eps} \sqrt{8\log\left(\frac{2}{\delta}\right) \log \left(\frac{8r}{\delta_0}\right)\log\frac{2}{\beta}} \tag{Defn. of $\delta_0$} \\
&= \frac{30}{\eps} \sqrt{8\log\left(\frac{2}{\delta}\right) \log \left( \frac{8\eps\sqrt{n}}{\delta}\right) \log\frac{2}{\beta}} \tag{Defn. of $r$}\\
&< \frac{30}{\eps} \sqrt{16 \log\left(\frac{2}{\delta}\right) \log\left(\frac{8}{\delta}\right) \log\frac{2}{\beta}} \tag{$\eps < 1,\delta < 8 / \sqrt{n}$}\\
&= \frac{120}{\eps} \log\frac{8}{\delta} \sqrt{\log\frac{2}{\beta}}
\stepcounter{equation}\tag{\theequation}\label{eq:real-sum-error-eps-2}
\end{align*}
\fi

Observe $\frac{\mathrm{Eq}\eqref{eq:real-sum-error-eps-2}}{\mathrm{Eq}\eqref{eq:real-sum-error-eps-1}} = 60\sqrt{2}\log\frac{8}{\delta} < 85\log\frac{8}{\delta}$. Hence,
\ifnum\oakland=1
\begin{align*}
\mathrm{Eq}\eqref{eq:real-sum-error-eps-1} + \mathrm{Eq}\eqref{eq:real-sum-error-eps-2} &<  \frac{122}{\eps} \log\frac{8}{\delta} \sqrt{\log\frac{2}{\beta}}
\end{align*}
\else
\begin{align*}
\mathrm{Eq}\eqref{eq:real-sum-error-eps-1} + \mathrm{Eq}\eqref{eq:real-sum-error-eps-2} &< \mathrm{Eq}\eqref{eq:real-sum-error-eps-1}\cdot ( 1 + 85\log\frac{8}{\delta})\\
&< \mathrm{Eq}\eqref{eq:real-sum-error-eps-1} \cdot 86\log\frac{8}{\delta} \tag{$\delta < 1$}\\
&= \frac{86\sqrt{2}}{\eps} \log\frac{8}{\delta} \sqrt{\log\frac{2}{\beta}}\\
&< \frac{122}{\eps} \log\frac{8}{\delta} \sqrt{\log\frac{2}{\beta}}
\end{align*}
\fi
The above can be substituted into \eqref{eq:real-sum-accurate} and we arrive at
\[
\pr{}{ \left|P_{n,\lambda,r}(X) - f(X) \right| \geq \frac{122}{\eps} \log\frac{8}{\delta} \sqrt{\log\frac{2}{\beta}}} \leq 2\beta
\]
which is precisely the target claim. 
\end{proof}

\section{From Approximate DP to Pure DP for Local Protocols} \label{sec:approx-to-pure}
In this section, we show that for any $(\eps,\delta)$ private local protocol, there exists an $(8\eps,0)$ private counterpart with roughly the same accuracy guarantees.  \cite{bns18} proved this theorem when $\eps \leq 1/4$, but we need the theorem when $\eps \gg 1$.  Our proof follows their approach almost exactly, but we include it for completeness to verify that their result can be modified to hold for larger $\eps$.

\begin{thm} [Extension of~\cite{bns18}]
\label{thm:transform-appendix}
Let $P_n = (R_n,A_n)$ be a local protocol for $n \geq 3$ users that is $(\eps,\delta)$ differentially private and $(\alpha,\beta)$ accurate with respect to $f$. If $\eps > 2/3$ and
\[
\delta < \frac{\beta}{8n\ln(n/\beta) e^{6\eps}}
\]
then there exists another local protocol that is $(8\eps,0)$ differentially private and $(\alpha,4\beta)$ accurate with respect to $f$.
\end{thm}

The conditions $\eps > 2/3$ and $n > 3$ are not essential, but are used to simplify the statement.
We will prove Theorem \ref{thm:transform-appendix} by construction: given local randomizer $R$, Algorithm \ref{alg:transform}  transforms it into another randomizer $R_{k,T}$. The parameters will be set later to achieve the desired privacy and accuracy.


\begin{algorithm}
\caption{A local randomizer $R_{k,T}$}
\label{alg:transform}
\DontPrintSemicolon
\SetKwInOut{Input}{Input}
\SetKwInOut{Output}{Output}

\KwIn{$x \in \cX$; parameters $k \in (0, 2e^{-2\eps})$ and $T \in \N$; black-box access to $R:\cX \rightarrow \cY$}
\KwOut{$\mathbf{y}_{k,T} \in \cY$}

\BlankLine
Let $c$ be some (publicly known) fixed element of $\cX$.\\
Define $\mathit{GoodInt} := \left[\half\exp(-2\eps), \half\exp(2\eps)\right]$\\
\For{$t\in[T]$}{
  $\mathbf{v}_t \gets R(c)$\\
  $\mathbf{p}_t \gets \half \frac{\pr{}{R(x)=\mathbf{v}_t}}{\pr{}{R(c)=\mathbf{v}_t}}$\\
  \lIf{$\mathbf{p}_t \notin \mathit{GoodInt}$}{
	$\mathbf{p}_t \gets \half$
  }{ $\mathbf{b}_t \gets \mathit{Ber}(\mathbf{p}_t \cdot k)$}
}
\lIf{$\exists t~ \mathbf{b}_{t}=1$}{ Sample $\mathbf{j}$ uniformly over $\{t\in[T]~|~b_{t}=1\}$ }
\lElse{Sample $\mathbf{j}$ uniformly over $[T]$}

Return $\mathbf{y}_{k,T} \gets \mathbf{v_j}$

\end{algorithm}

\subsection{Privacy Analysis}
First, we establish that the transformed local randomizer is indeed $(8\eps,0)$-differentially private.

\begin{clm}
\label{clm:transform-is-private}
For any $\eps>0$, any $(\eps,\delta)$-differentially private algorithm $R$, any $k \in (0,2e^{-2\eps})$, and any $T\in \N$, the algorithm $R_{k,T}$ is $(8\eps, 0)$ differentially private.
\end{clm}

\begin{proof}
Define $L := \half\exp(-2\eps) \cdot k$ and $U := \half\exp(2\eps) \cdot k$. Note that $[L,U] \subset [0,1]$ and $\pr{}{\mathbf{b_{i,j}} = 1} \in [L,U]$.


Fix $x\sim x'\in\cX$, $j\in [T]$, and $y \in \cY$. Let $\cV_{y} = \{V\in \cY^T \mid \exists j ~ v_j = y\}$.
\begin{align*}
\pr{}{R_{k,T}(x) = y} &= \pr{}{R_{k,T}(x) = y \cap \mathbf{V} \in \cV_{y} }\\
&= \sum_{V\in \cV_{y}} \pr{}{\mathbf{V} = V} \cdot \pr{}{R_{k,T}(x)=y \mid \mathbf{V} = V}\\
&= \sum_{V\in \cV_{y}} \pr{}{\mathbf{V} = V} \cdot \left( \sum_{ \{j \mid v_j = y \}} \pr{}{\mathbf{j}=j \mid \mathbf{V} = V} \right) \stepcounter{equation} \label{eq:rkt-event} \tag{\theequation}
\end{align*}


Recall that $\mathbf{b}_j$ takes a random binary value and the distribution of $\mathbf{j}$ is dependent on these bits. In the analysis below, we omit the condition that $\mathbf{V}=V$ for brevity.
\begin{equation}
\label{eq:j}
\pr{}{\mathbf{j}=j} = \pr{}{\mathbf{j}=j \cap \mathbf{b_{j}} = 0} + \pr{}{\mathbf{j}=j \cap \mathbf{b_{j}} = 1}
\end{equation}
We'll upper bound each summand separately. If $b_{j}$ is set to zero, then the only way to for the user to choose $j$ is for all other bits $b_{t}$ to be zero as well. And in that case, the choice is uniform over $[T]$:

\begin{align*}
\pr{}{\mathbf{j}=j \cap \mathbf{b_{j}} = 0} &= \frac{1}{T}\cdot \pr{}{\forall t~ \mathbf{b_{t}} = 0}\\
&= \frac{1}{T}\cdot \prod_{t=1}^{T}\pr{}{\mathbf{b_{t}} = 0} \tag{Independence}\\
&\leq \frac{1}{T}\cdot \prod_{t=1}^{T} (1-L) \tag{$\pr{}{\mathbf{b_{t}} = 1} \geq L$}\\
&= \frac{1}{T}\cdot (1-L)^T \stepcounter{equation} \tag{\theequation} \label{eq:j-1}
\end{align*}

If $b_{j}$ is set to one, $\mathbf{j}$ is uniform over the bits set to one, itself a random variable:
\begin{align*}
\pr{}{\mathbf{j}=j \cap \mathbf{b_{j}} = 1} &= \pr{}{\mathbf{b_{j}}=1}\cdot \pr{}{\mathbf{j}=j \mid \mathbf{b_{j}} = 1}\\
&= \pr{}{\mathbf{b_{j}}=1}\cdot \sum_{s=1}^{T} \frac{1}{s}\cdot \pr{}{\sum_{t\neq j} \mathbf{b_{t}} =s-1}\\
&= \pr{}{\mathbf{b_{j}}=1}\cdot \sum_{s=0}^{T-1} \frac{1}{s+1}\cdot \pr{}{\sum_{t\neq j} \mathbf{b_{t}} =s}\\
&= \pr{}{\mathbf{b_{j}}=1}\cdot \ex{}{\frac{1}{1+ \sum_{t\neq j} \mathbf{b_{t}} }} \stepcounter{equation} \tag{\theequation} \label{eq:j-2-helper}
\end{align*}

Observe that the term $\sum_{t\neq j} \mathbf{b_{t}}$ is a sum of Bernoulli random variables, with different expectations but all residing in $[L,U]$. As a corollary of Claim 6.3 in \cite{bns18}, we have the following:
\begin{clm}
If random variables $\mathbf{b_1},\dots, \mathbf{b_T}$ are each drawn independently from $\mathit{Ber}(p_1),\dots,\mathit{Ber}(p_T)$ where $L \leq p_t \leq U$ for every $t \in [T]$, then $$\ex{}{\frac{1}{1+\mathit{Bin}(T,U)}} \leq \ex{}{\frac{1}{1+\sum \mathbf{b_t}}} \leq \ex{}{\frac{1}{1+\mathit{Bin}(T,L)}}.$$
\end{clm}

Hence,
\begin{align*}
\eqref{eq:j-2-helper} &\leq \pr{}{\mathbf{b_{j}}=1}\cdot \ex{}{\frac{1}{1+ \mathit{Bin}(T-1,L) }}\\
&= \pr{}{\mathbf{b_{j}}=1}\cdot \sum_{s=0}^{T-1}\cdot \frac{1}{s+1} \binom{T-1}{s}\cdot L^s\cdot (1-L)^{T-s-1}\\
&= \pr{}{\mathbf{b_{j}}=1}\cdot \frac{1}{TL} \left(1-(1-L)^T\right)\\
&\leq \frac{U}{TL} \left(1-(1-L)^T\right) \stepcounter{equation} \tag{\theequation} \label{eq:j-2}
\end{align*}

From \eqref{eq:j-1}, \eqref{eq:j-2}, and \eqref{eq:j},
\begin{align*}
\label{eq:j-upper}
\pr{}{\mathbf{j}=j} &\leq  \frac{U}{TL} \left(1-(1-L)^T\right) + \frac{1}{T} (1-L)^T \\
&\leq  \frac{U}{TL} \left(1-(1-L)^T\right) + \frac{U}{TL} (1-L)^T \tag{$U > L$}\\
&= \frac{U}{TL} \stepcounter{equation} \tag{\theequation} \label{eq:j-3}
\end{align*}

Recall \eqref{eq:rkt-event}:
\begin{align*}
\pr{}{R_{k,T}(x) = y} &= \sum_{V\in \cV_{y}} \pr{}{\mathbf{V} = V} \cdot \left( \sum_{ \{j \mid v_j = y \}} \pr{}{\mathbf{j}=j \mid \mathbf{V} = V} \right) \\
&< \frac{U}{TL} \cdot \sum_{V\in \cV_y} \pr{}{\mathbf{V} = V} \cdot \{\# v_j = y\} \tag{From \eqref{eq:j-3}}\\
&= \frac{U}{TL} \cdot \ex{\mathbf{V}\gets R(c)^T}{\# \mathbf{v}_j = y} \stepcounter{equation} \tag{\theequation} \label{eq:rkt-upper}
\end{align*}
where we use ${\# \mathbf{v}_j = y}$ to indicate the number of elements of $\mathbf{V}$ that have value $y$.

We remark that the distribution of $\mathbf{V}$ is wholly independent of private value $x$. Hence, by a completely symmetric series of steps,
\begin{equation}
\label{eq:rkt-lower}
\pr{}{R_{k,T}(x') = y} > \frac{L}{TU} \cdot \ex{\mathbf{V}\gets R(c)^T}{\# \mathbf{v}_j = y}
\end{equation}

From \eqref{eq:rkt-upper} and \eqref{eq:rkt-lower},
\begin{align*}
\frac{\pr{}{R_{k,T}(x) = y}}{\pr{}{R_{k,T}(x') = y}} &< \frac{U^2}{L^2}\\
&= \frac{(1/4)\cdot \exp(4\eps)\cdot k^2}{(1/4)\cdot \exp(-4\eps)\cdot k^2}\\
&= \exp(8\eps)
\end{align*}

which completes the proof.
\end{proof}

\subsection{Accuracy Analysis}
Next we show that, for suitable parameters, the protocol $P_{n,k,T} = (R_{n,k,T}, A_{n})$ remains essentially as accurate as $P_{n} = (R_{n},A_{n})$.
\begin{clm}
\label{clm:transform-is-accurate}
If $P_{n} = (R_{n},A_{n})$ is $(\alpha,\beta)$-accurate for $f$, and $R_{n}$ is $(\eps,\delta)$-differentially private for $\eps > 2/3$ and
\[
\delta < \frac{\beta}{8n\ln(n/\beta)}\cdot \frac{1}{\exp(6\eps)}
\]
then there exists $T\in \N, k \in (0,2e^{-2\eps})$ such that the local protocol $P_{n,k,T} = (R_{n,k,T},A_{n})$ is $(\alpha,4\beta)$ accurate for $f$.
\end{clm}
Claim \ref{clm:transform-is-private} and Claim \ref{clm:transform-is-accurate} together imply Theorem \ref{thm:transform}. 

\medskip

For the purposes of this section, fix any $X\in \cX^n$ where $X=(x_1,\dots, x_n)$. Let $\mathbf{y_{k,T}[i]}$ denote\footnote{Brackets are used for indices in order to avoid collision with parameters $k,T$ in the subscript} the random output of $R_{n,k,T}(x_i)$ and let $\mathbf{Y_{k,T}}$ denote the ordered set $\mathbf{y_{k,T}[1]}, \dots, \mathbf{y_{k,T}[n]}$. Let $\mathbf{y}[i]$ denote the random output of $R_n(x_i)$ and let $\mathbf{Y}$ denote the ordered set $\mathbf{y}[1],\dots,\mathbf{y}[n]$.

As described in Algorithm \ref{alg:transform}, $R_{k,T}$ defines variables $\mathbf{b}_t,\mathbf{v}_t$ for every $t\in [T]$. In the context of $P_n$, there are $2n\cdot T$ such variables, $2T$ for each user $i$: we use $\mathbf{b}_{i,t},\mathbf{v}_{i,t}$ to disambiguate between users.

As a first step to proving Claim \ref{clm:transform-is-accurate}, we show that the distribution of $\mathbf{Y_{k,T}}$ is similar to that of $\mathbf{Y}$ (Claim \ref{clm:transform-is-accurate-1}). Then we show that running the same analysis function $A_n$ on both $\mathbf{Y}$ and $\mathbf{Y_{k,T}}$ yields similar accuracy guarantees (Claim \ref{clm:transform-is-accurate-2}). The notion of ``similar'' is statistical distance in terms of parameters $n,\eps,\delta,k,T$: if the parameters are constrained, then the distance simplifies to $3\beta$. Because $\beta + 3\beta = 4\beta$, we have $(\alpha,4\beta)$ accuracy (Claim \ref{clm:bounding-the-gap}). We close the section by a particular setting of $k,T$ to achieve such a bound.

As we have stated, we start by relating $\mathbf{Y_{k,T}}$ to $\mathbf{Y}$:

\begin{clm}
\label{clm:transform-is-accurate-1}
If $\eps > 0, 0 < \delta < \frac{1-\exp(-\eps)}{4\exp(\eps)n}, T\in \N, k \in (0,2e^{-2\eps})$, then for any $\cW\subseteq \univy^n$,
\begin{equation}
\label{eq:transform-distance}
\pr{}{\mathbf{Y_{k,T}} \in \cW} < \pr{}{\mathbf{Y} \in \cW} + n\cdot \left(1-\half\exp(-2\eps)\cdot k \right)^T + \frac{2n\delta\exp(\eps)}{1-\exp(-\eps)} (T + 2)
\end{equation}
\end{clm}

\begin{proof}
Consider an execution of the protocol $P_{n,k,T}(X)$. Let $E_1$ denote the event that for some user $i$, all bits $\mathbf{b}_{i,j}$ are set to 0:
\begin{align*}
\pr{}{E_1} &= \pr{}{\exists i~ \forall t~ \mathbf{b}_{i,t}=0}\\
&\leq n\cdot \max_{i\in [n]} \pr{}{\forall t~ \mathbf{b}_{i,t}=0} \tag{Union bound}\\
&\leq n\cdot \left(1-\half\exp(-2\eps) \cdot k \right)^T \stepcounter{equation} \tag{\theequation} \label{eq:transform-error-1}
\end{align*}

Recall that $\mathit{GoodInt}=[\half e^{-2\eps},\half e^{2\eps}]$. For any $x,x' \in \univx$, let $\mathit{Good}(x,x')\subset \univy$ denote the set consisting of all $\tau$ satisfying $\half \frac{\prob(R_n(x')=\tau)}{\prob(R_n(x)=\tau)} \in \mathit{GoodInt}$. The following is a property of private algorithms:
\begin{lem}
\label{lem:leakage-likelihood}
Fix a value of $c \in \univx$. If $R:\univx \rightarrow \univy$ is $(\eps,\delta)$ differentially private, then for any $x\in \univx$,
\[
\pr{}{R(c)\notin \mathit{Good}(c,x)} \leq \frac{2\delta\exp(\eps)}{1-\exp(-\eps)}
\]
\end{lem}

A proof can be found in \cite[Claim 5.4]{BassilyS15}.  Let $E_2$ denote the event that for some $i$ and some $t$, $\mathbf{v_{i,t}} \notin \mathit{Good}(c,x_i)$.
\begin{align*}
\pr{}{E_2} &= \sum_{i=1}^n \sum_{t=1}^T \pr{}{\mathbf{v}_{i,t} \notin \mathit{Good}(c,x_i) } \tag{Independence}\\
&\leq nT\cdot \max_{i\in [n]}\left( \pr{}{R(c) \notin \mathit{Good}(c,x_i)} \right) \tag{$\mathbf{v}_{i,t} \gets R(c)$}\\
&\leq \frac{2nT\delta\exp(\eps)}{1-\exp(-\eps)} \stepcounter{equation} \label{eq:transform-error-2} \tag{\theequation}
\end{align*}
The second inequality is an application of Lemma \ref{lem:leakage-likelihood} to $(\eps,\delta)$ private $R_n$.


Fix any $\cW\subset \univy^n$.
\begin{align*}
\pr{}{\mathbf{Y_{k,T}} \in \cW} &= \pr{}{\mathbf{Y_{k,T}} \in \cW \cap E_1} + \pr{}{\mathbf{Y_{k,T}} \in \cW \cap E_2}\\
&~~~~~+ \pr{}{\mathbf{Y_{k,T}} \in \cW \cap (\neg E_1 \cap \neg E_2)}\\
&\leq \pr{}{E_1} + \pr{}{E_2} + \pr{}{\mathbf{Y_{k,T}} \in \cW \cap (\neg E_1 \cap \neg E_2)}\\
&\leq n\cdot \left(1-\half\exp(-2\eps) \cdot k \right)^T + \frac{2nT\delta\exp(\eps)}{1-\exp(-\eps)}\\
&~~~~~ + \pr{}{\mathbf{Y_{k,T}} \in \cW \cap (\neg E_1\cap \neg E_2)} \stepcounter{equation} \tag{\theequation} \label{eq:transform-error}
\end{align*}
The second inequality is simply substitution of \eqref{eq:transform-error-1} and \eqref{eq:transform-error-2}

Fix some $i\in[n]$. Notice that if $\neg E_2$ occurs, then it must be the case that for every $t \in [T]$, $\mathbf{v}_{i,t} \in \mathit{Good}(c,x_i)$. Because $\mathbf{y_{k,T}}[i]$ is selected from $\mathbf{v_{i,t}}$, it must be the case that $\mathbf{y_{k,T}}[i]$ has to lie in $\mathit{Good}(c,x_i)$.

Let $\mathit{Good}\subset \univy^n$ denote $\mathit{Good}(c,x_1)\times \dots \times \mathit{Good}(c,x_n)$. If $\neg E_2$ occurs, then $\mathbf{Y_{k,T}} \in \mathit{Good}$. We use this to analyze the third summand in \eqref{eq:transform-error}:
\begin{align*}
\pr{}{\mathbf{Y_{k,T}} \in \cW \cap (\neg E_1\cap \neg E_2)} &= \sum_{W\in \cW}\pr{}{\mathbf{Y_{k,T}} = W \cap (\neg E_1\cap \neg E_2)}\\
&= \sum_{W\in \cW \cap \mathit{Good}}\pr{}{\mathbf{Y_{k,T}} = W \cap (\neg E_1\cap \neg E_2)}\\
&= \sum_{W\in \cW \cap \mathit{Good}}\prod_{i=1}^n \pr{}{\mathbf{y_{k,T}[i]} = w[i] \cap (\neg E_1\cap \neg E_2)} \tag{Independence}\\
&\leq \sum_{W\in \cW \cap \mathit{Good}}\prod_{i=1}^n \pr{}{\mathbf{y_{k,T}[i]} = w[i] \mid (\neg E_1\cap \neg E_2)}  \stepcounter{equation} \tag{\theequation} \label{eq:transform-error-3a}
\end{align*}

We will later prove the following equivalence:
\begin{clm}
\label{clm:transformIsAccurate-helper2} For any $\eps> 0, k\in (0,2\exp(-2\eps)), T\in \N$ and $c,x\in \cX$,
\[
\pr{\mathbf{y_{k,T}} \gets R_{n,k,T}(x)}{\mathbf{y_{k,T}}=g \mid (\neg E_1\cap \neg E_2)} = \pr{\mathbf{y} \gets R_n(x)}{\mathbf{y} = g \mid \mathbf{y} \in \mathit{Good}(c,x)}
\]
for any $g \in \mathit{Good}(c,x)$
\end{clm}

By substitution,
\begin{align*}
\eqref{eq:transform-error-3a} &= \sum_{W\in \cW \cap \mathit{Good}}\prod_{i=1}^n \pr{}{\mathbf{y}[i]=w[i] \mid y[i] \in\mathit{Good}(c,x_i)}\\
&= \sum_{W\in \cW \cap \mathit{Good}}\pr{}{\mathbf{Y}=W \mid \mathbf{Y} \in\mathit{Good}} \tag{Independence}\\
&= \pr{}{\mathbf{Y} \in \cW \mid \mathbf{Y} \in\mathit{Good}}\\
&\leq \frac{1}{\pr{}{\mathbf{Y} \in\mathit{Good}}}\cdot \pr{}{\mathbf{Y}\in \cW }\\
&= \frac{1}{1-\pr{}{\mathbf{Y} \notin\mathit{Good}}}\cdot \pr{}{\mathbf{Y}\in \cW } \stepcounter{equation}\tag{\theequation}\label{eq:transform-error-3b}
\end{align*}

Notice that $\mathbf{Y} \notin\mathit{Good}$ when, for some $i$, $\half \frac{\prob(R_n(x_i)=\mathbf{y}[i])}{\prob(R_n(c)=\mathbf{y}[i])} \notin \mathit{GoodInt}$. We obtain $\pr{}{\mathbf{Y} \notin\mathit{Good}} \leq \frac{2n\delta\exp(\eps)}{1-\exp(-\eps)}$ by an argument similar\footnote{Notice the absence of $T$: to arrive at \eqref{eq:transform-error-2}, we union bound over $|\mathbf{V}| =nT$ random variables but here $|\mathbf{Y}|=n$} to that of \eqref{eq:transform-error-2}. Therefore,
\begin{align*}
\eqref{eq:transform-error-3b} &\leq \frac{1}{1-\frac{2n\delta\exp(\eps)}{1-\exp(-\eps)}}\cdot \pr{}{\mathbf{Y}\in \cW } \\
&\leq \left(1+\frac{4n\delta\exp(\eps)}{1-\exp(-\eps)}\right)\cdot \pr{}{\mathbf{Y} \in \cW } \tag{$\frac{2n\delta\exp(\eps)}{1-\exp(-\eps)} < \half$}\\
&\leq \pr{}{\mathbf{Y} \in \cW } + \frac{4n\delta\exp(\eps)}{1-\exp(-\eps)}
\end{align*}

When we return to \eqref{eq:transform-error}, we have
\begin{align*}
\pr{}{\mathbf{Y_{k,T}} \in \cW} \leq& \pr{}{\mathbf{Y} \in \cW } + n\cdot \left(1-\half\exp(-2\eps) \cdot k \right)^T\\
&~~~~~+ \frac{2nT\delta\exp(\eps)}{1-\exp(-\eps)} + \frac{4n\delta\exp(\eps)}{1-\exp(-\eps)}
\end{align*}
which is equivalent to \eqref{eq:transform-distance}. This concludes the proof, modulo Claim \ref{clm:transformIsAccurate-helper2}.
\end{proof}

Here, we prove Claim \ref{clm:transformIsAccurate-helper2}.
\begin{proof}[Proof of Claim \ref{clm:transformIsAccurate-helper2}]
Fix any $\eps> 0, k\in (0,2\exp(-2\eps)), T\in \N, (c,x)\in \cX^2$ and $g \in \mathit{Good}(c,x)$. Sample $\mathbf{y}$ from $R_n(x)$ and $\mathbf{y_{k,T}}$ from $R_{n,k,T}(x)$.

By a corresponding argument advanced by \cite{bns18},
\begin{equation}
\label{eq:transformIsAccurate-helper2}
\pr{}{\mathbf{y_{k,T}}=g \mid (\neg E_1\cap \neg E_2)} = \frac{\pr{}{\mathbf{b}_1=1 \cap \mathbf{v}_1 = g}}{\pr{}{\mathbf{b}_1 =1 \cap \mathbf{v}_1 \in \mathit{Good}(c,x)}}
\end{equation}

We first expand the numerator:
\begin{align*}
\pr{}{\mathbf{b}_1 =1 \cap \mathbf{v}_1 = g} 
&= \pr{}{\mathbf{v}_1 = g} \cdot \pr{}{\mathbf{b_{1}}=1 \mid \mathbf{v}_1 = g}\\
&= \pr{}{\mathbf{v}_1 = g} \cdot \half \frac{\pr{}{\mathbf{y}=g}}{\pr{}{\mathbf{v_{1}}=g}}\cdot k\tag{Defn. of $R_{k,T}$}\\
&= \half \pr{}{\mathbf{y}=g}\cdot k \stepcounter{equation} \tag{\theequation} \label{eq:transformIsAccurate-helper2-num}
\end{align*}

We now analyze the denominator:
\begin{align*}
\pr{}{\mathbf{b}_1 =1 \cap \mathbf{v}_1 \in \mathit{Good}(c,x)} &= \sum_{\tau \in \mathit{Good}(c,x)} \pr{}{\mathbf{v}_1 =\tau}\cdot \pr{}{\mathbf{b}_1 =1 \mid \mathbf{v}_1 = \tau}\\
&= \sum_{\tau \in \mathit{Good}(c,x)} \pr{}{\mathbf{v}_1 =\tau}\cdot \half \frac{\pr{}{\mathbf{y}=\tau}}{\pr{}{\mathbf{v_{1}}=\tau}} \cdot k \tag{Defn. of $R_{k,T}$}\\
&= \half \cdot \sum_{\tau \in \mathit{Good}(c,x)} \pr{}{\mathbf{y}=\tau} \cdot k\\
&= \half \pr{}{\mathbf{y}\in \mathit{Good}(c,x)}\cdot k \stepcounter{equation} \tag{\theequation} \label{eq:transformIsAccurate-helper2-den}
\end{align*}

Therefore,
\begin{align*}
Eq\eqref{eq:transformIsAccurate-helper2} &= \frac{Eq\eqref{eq:transformIsAccurate-helper2-num}}{Eq\eqref{eq:transformIsAccurate-helper2-den}}= \frac{\half \pr{}{\mathbf{y}=g}\cdot k}{\half \pr{}{\mathbf{y}\in \mathit{Good}(c,x)}\cdot k}\\
&= \pr{}{\mathbf{y}=g \mid \mathbf{y} \in\mathit{Good}(c,x)}
\end{align*}
which completes the proof.
\end{proof}



The preceding bound on the statistical distance between $R_n(X),R_{n,k,T}(X)$ implies a bound on the error of the transformed protocol $P_{n,k,T}$:

\begin{clm}
\label{clm:transform-is-accurate-2}
Suppose $P_n=(R_n,A_n)$ is $(\eps,\delta)$ differentially private and $(\alpha,\beta)$ accurate. If $\eps > 0, \delta \in (0, \frac{1-\exp(-\eps)}{4\exp(\eps)n}), T\in \N, k \in (0,2e^{-2\eps})$, then $P_{n,k,T}=(R_{n,k,T},A_n)$ is $(\alpha,\beta_{k,T})$ accurate where
\begin{equation}
\label{eq:bkt}
\beta_{k,T} = \beta +  n \cdot \left(1-\half\exp(-2\eps)\cdot k \right)^T + \frac{2n\delta\exp(\eps)}{1-\exp(-\eps)}(T+2)
\end{equation}
\end{clm}

\begin{proof}
If $A_n$ is deterministic, the claim is immediate from Claim \ref{clm:transform-is-accurate-1}. Otherwise, the randomness of $\mathbf{u}$ is sourced from both $A_n$ and $R_n$.

For any $X=(x_1,\dots,x_n)\in \cX^n$, we again use $\mathbf{Y}$ to denote the random variable output by $R_n(x_1),\dots, R_n(x_n)$, likewise $\mathbf{Y_{k,T}}$ for the random variable output by $R_{n,k,T}(x_1),\dots, R_{n,k,T}(x_n)$. We will use $\mathbf{u}$ to denote the random variable $A_n(\mathbf{Y})$, which is the output of the original protocol, and $\mathbf{u_{k,T}}$ to denote the random variable $A_n(\mathbf{Y_{k,T}})$, which is the output of the transformed protocol.

For any $Y \in \univy^n$, let $\Delta_Y := \pr{}{\mathbf{Y_{k,T}} = Y}-\pr{}{\mathbf{Y} = Y}$. Let $I$ denote the subset of $\univy^n$ containing exactly those $Y$ such that $\pr{}{\mathbf{Y_{k,T}} = Y} > \pr{}{\mathbf{Y} = Y}$; equivalently, those $Y$ where $\Delta_Y > 0$. We will use $I$ to analyze the probability of exceeding $\alpha$ error: assuming we are interested in approximating a real-valued $f(X)$,
\begin{align*}
\pr{}{|\mathbf{u_{k,T}} - f(X)| > \alpha} &= \pr{}{|\mathbf{u_{k,T}}-f(X)| > \alpha \cap \mathbf{Y_{k,T}}\in I}\\
&+ \pr{}{|\mathbf{u_{k,T}}-f(X)| > \alpha \cap \mathbf{Y_{k,T}}\notin I} \label{eq:mind-the-gap} \stepcounter{equation} \tag{\theequation}
\end{align*}

We bound each term in the sum separately.
\begin{align*}
\pr{}{|\mathbf{u_{k,T}}-f(X)| > \alpha \cap \mathbf{Y_{k,T}}\in I}  &= \sum_{Y \in I} \pr{}{|A_n(Y)-f(X)| > \alpha} \cdot \pr{}{\mathbf{Y_{k,T}} = Y} \\
&= \sum_{Y \in I} \pr{}{|A_n(Y)-f(X)| > \alpha} \cdot \left( \Delta_Y + \pr{}{\mathbf{Y} = Y}\right)\\
&< \sum_{Y \in I} \pr{}{|A_n(Y)-f(X)| > \alpha} \cdot \pr{}{\mathbf{Y} = Y} + \Delta_Y\\
&= \sum_{Y \in I} \pr{}{|A_n(Y)-f(X)| > \alpha \cap \mathbf{Y}=Y} + \Delta_Y\\
&= \pr{}{|\mathbf{u}- f(X)| > \alpha \cap \mathbf{Y} \in I} + \left(\sum_{Y \in I} \Delta_Y \right) \stepcounter{equation} \label{eq:mind-the-gap-1} \tag{\theequation}
\end{align*}
where the inequality comes from the fact that $\Delta_Y > 0$.

\begin{align*}
\pr{}{|\mathbf{u_{k,T}}-f(X)| > \alpha \cap \mathbf{Y_{k,T}}\notin I} &= \sum_{Y \notin I} \pr{}{|A_n(Y)-f(X)| > \alpha} \cdot \pr{}{\mathbf{Y_{k,T}} = Y}\\
&\leq \sum_{Y \notin I} \pr{}{|A_n(Y)-f(X)| > \alpha} \cdot \pr{}{\mathbf{Y} = Y}\\
&= \pr{}{|\mathbf{u}- f(X)| > \alpha \cap \mathbf{Y}\notin I} \stepcounter{equation} \label{eq:mind-the-gap-2} \tag{\theequation}
\end{align*}
The inequality comes from the definition of $\neg I$.

From \eqref{eq:mind-the-gap}, \eqref{eq:mind-the-gap-1}, and \eqref{eq:mind-the-gap-2} we have
\begin{align*}
\pr{}{|\mathbf{u_{k,T}} - f(X)| > \alpha} &< \pr{}{|\mathbf{u}- f(X)| > \alpha} + \left(\sum_{Y \in I} \Delta_Y \right)\\
&\leq \beta + \left(\sum_{Y \in I} \Delta_Y \right) \tag{$P_n$ is $(\alpha,\beta)$-accurate}\\
&= \beta + \left( \pr{}{\mathbf{Y_{k,T}} \in I} - \pr{}{\mathbf{Y} \in I} \right) \\
&< \beta + n \cdot \left(1-\half\exp(-2\eps)\cdot k \right)^T + \frac{2n\delta\exp(\eps)}{1-\exp(-\eps)}(T+2) \tag{Claim \ref{clm:transform-is-accurate-1}}
\end{align*}

This completes the proof.
\end{proof}

Finally, we show that the preceding error probability simplifies to $4\beta$ provided that parameters $n,\eps,\delta$ obey some constraints.
\begin{clm}
\label{clm:bounding-the-gap}
For all $\eps > 2/3$, $k \in (2\exp(-3\eps) , 2\exp(-2\eps))$, $n \geq 3$ and 
\begin{equation}
0 < \delta < \frac{\beta}{8n\ln(n/\beta)}\cdot \frac{1}{\exp(6\eps)} \label{eq:upper-bound-delta}
\end{equation}
then there exists $T\in \N$ such that
\begin{equation}
\label{eq:three-beta-terms}
n \cdot \left(1-\half\exp(-2\eps)\cdot k \right)^T + \frac{2nT\delta\exp(\eps)}{1-\exp(-\eps)} + \frac{4n\delta\exp(\eps)}{1-\exp(-\eps)} < 3\beta
\end{equation}

\end{clm}

\begin{proof}

\eqref{eq:three-beta-terms} holds when each term in the sum is $\leq \beta$. 

We begin with the term $n \cdot \left(1-\half\exp(-2\eps)\cdot k \right)^T$. Because  $k > 2\exp(-3\eps)$, it will suffice to have
\begin{align*}
\beta &> n \cdot \left(1-\exp(-5\eps)\right)^T \\
\ln \frac{n}{\beta} &< T\ln\left(\frac{1}{1-\exp(-5\eps)}\right)\\
&= T\ln\left(1+\frac{1}{\exp(5\eps)-1}\right) \stepcounter{equation} \label{eq:pre-lower-bound-T} \tag{\theequation}
\end{align*}

The following is fairly trivial to prove: if $0 < \tau < 1$, then $1+\tau > \exp(\tau / 2)$. Here, $\tau :=(\exp(5\eps)-1)^{-1}$. We are ensured that $\tau \in (0, 1)$ because $\eps > \ln(2)/5$.  Therefore, the following bound on $\ln (n /\beta)$ is tighter than \eqref{eq:pre-lower-bound-T}:
\begin{align*}
\ln \frac{n}{\beta} &< T \ln \left( \exp\left(\half\cdot \frac{1}{\exp(5\eps)-1} \right) \right)\\
&= T\cdot \half\cdot \frac{1}{\exp(5\eps)-1}\\
T &> \ln \frac{n}{\beta} \cdot 2(\exp(5\eps)-1) \stepcounter{equation} \tag{\theequation} \label{eq:lower-bound-T}
\end{align*}

We also want the second term of \eqref{eq:three-beta-terms} to be bounded by $\beta$.
\begin{align*}
\beta &> \frac{2nT\delta\exp(\eps)}{1-\exp(-\eps)}\\
T &< \beta \frac{1-\exp(-\eps)}{2n\delta\exp(\eps)} \stepcounter{equation} \tag{\theequation} \label{eq:upper-bound-T}
\end{align*}

Both \eqref{eq:lower-bound-T} and \eqref{eq:upper-bound-T} need be true for the same value of $T$. Hence,
\begin{align*}
\ln \frac{n}{\beta} \cdot 2(\exp(5\eps)-1) &< \beta \frac{1-\exp(-\eps)}{2n\delta\exp(\eps)}\\
\delta &< \beta \frac{1-\exp(-\eps)}{2n\exp(\eps)}\cdot \frac{1}{\ln \frac{n}{\beta} \cdot 2(\exp(5\eps)-1)}\\
&= \frac{\beta}{4n\ln(n/\beta)}\cdot \frac{\exp(\eps)-1}{\exp(2\eps)(\exp(5\eps)-1)} \stepcounter{equation} \label{eq:upper-bound-delta-loose} \tag{\theequation}
\end{align*}

Because $\eps > 2/3$, one can show that
\[
\frac{\exp(\eps)-1}{\exp(2\eps)(\exp(5\eps)-1)} > \half \exp(-6\eps)
\]
which means that any $\delta$ satisfying \eqref{eq:upper-bound-delta} satisfies \eqref{eq:upper-bound-delta-loose}.

The final term in \eqref{eq:three-beta-terms} is $\frac{4n\delta\exp(\eps)}{1-\exp(-\eps)}$; for this to be bounded by $\beta$, it will suffice for
\[
\delta < \beta\cdot \frac{0.1}{n\exp(\eps)} \tag{$\eps > 2/3$}
\]
This constraint on $\delta$ is not as tight as \eqref{eq:upper-bound-delta} whenever $n > e$.

Because we have shown all three terms in \eqref{eq:three-beta-terms} are bounded by $\beta$, this concludes the proof.
\end{proof}

\medskip

\paragraph{Setting parameters $k,T$} We now provide parameter values to ensure our transformation is $(\alpha,4\beta)$ accurate, thereby proving Claim \ref{clm:transform-is-accurate}.
Suppose the parameters $k,T$ are assigned as follows
\begin{align*}
k &\gets 2\exp(-2.5\eps)\\
T &\gets \big\lceil \ln \frac{n}{\beta} \cdot 2(\exp(5\eps)-1) \big\rceil
\end{align*}

Because $k \in (0, 2\exp(-2\eps))$ and $T\in N$, Claim \ref{clm:transform-is-accurate-2} implies that the protocol $P_{n,k,T}=(R_{n,k,T},A_n)$ is $(\alpha,\beta_{k,T})$ accurate, where $\beta_{k,T}$ is defined in \eqref{eq:bkt}. Claim \ref{clm:bounding-the-gap} implies that there is an integer value of $T$ where $\beta_{k,T} \leq 4\beta$; from \eqref{eq:lower-bound-T}, $T$ is assigned such a value. Hence, $P_{n,k,T}$ is $(\alpha,4\beta)$ accurate.

\section{Concentration Inequalities}
In this appendix, we formally state the three concentration inequalities used in this paper:

\begin{thm}[Chernoff bound]
\label{thm:chernoff}
If $\mathbf{x_1,\dots, x_n}$ are independent $\zo$-valued random variables, each with mean $\mu$, then, for every $\beta > 0$,
\[
\pr{}{\mu n - \sum \mathbf{x_i} < \sqrt{2 \mu  n \log\tfrac{1}{\beta}}} \geq 1-\beta, \textrm{ and}
\]
\[
\pr{}{\sum \mathbf{x_i} - \mu n < \sqrt{3 \mu  n \log\tfrac{1}{\beta}}} \geq 1-\beta
\]
\end{thm}

\begin{thm}[Hoeffding's inequality]
\label{thm:hoeffding}
If $\mathbf{x_1,\dots, x_n}$ are independent random variables, each with mean $\mu$ and bounded in $(a,b)$, then, for every $\beta > 0$,
\[
\pr{}{\left| \sum \mathbf{x_i} - \mu  n \right| < (b-a)\sqrt{\tfrac12 n \log \tfrac{2}{\beta}}} > 1-\beta
\]
\end{thm}

\begin{thm}[Bernstein's inequality]
\label{thm:bernstein}
If $\mathbf{x_1,\dots, x_n}$ are independent random variables, each with mean $0$, variance $\sigma^2 > \frac{4}{9n}\log\frac{2}{\beta}$, and bounded in $[-1,1]$, then, for every $\beta > 0$,
\[
\pr{}{\left| \sum \mathbf{x_i} \right| < 2 \sigma\sqrt{n\log \tfrac{2}{\beta}}} > 1-\beta
\]
\end{thm}
\fi

\end{document}